\PassOptionsToPackage{table}{xcolor}
\documentclass[%
 reprint,
superscriptaddress,
nofootinbib,
 amsmath,amssymb,
 aps,
pra,
]{revtex4-2}

\usepackage{tikz}
\usepackage{amsmath,amsfonts,amsthm, mathtools}
\usepackage{float}
\usepackage{dsfont}
\usepackage{csquotes}
\usepackage{amssymb}
\usepackage{braket}
\usepackage{dsfont}
\usepackage[table]{xcolor}
\usepackage{verbatim}
\usepackage{mathtools}
\usepackage{rotating}
\usepackage{thmtools, thm-restate}
\usepackage{bm}
\usepackage{hyperref}
\usepackage{mathbbol}
\usepackage{makecell}
\usepackage{pifont}
\usepackage{appendix}
\usepackage{diagbox}
\usepackage{soul}

\definecolor{darkred}  {rgb}{0.5,0,0}
\definecolor{darkblue} {rgb}{0,0,0.5}
\definecolor{darkgreen}{rgb}{0,0.5,0}
\hypersetup{
  colorlinks = true,
  urlcolor  = blue,
  linkcolor = darkblue,
  citecolor = darkgreen,
  filecolor = darkred
}
\theoremstyle{definition}

\newtheorem{corollary}{Corollary}
\newtheorem{definition}{Definition}
\newtheorem{conjecture}{Conjecture}
\newtheorem{lemma}{Lemma}
\newtheorem*{lemma*}{Lemma}
\newtheorem{proposition}{Proposition}

\newtheorem{theorem}{Theorem}
\newtheorem*{theorem*}{Theorem}

\newtheorem*{remark}{Remark}

 \newtheorem{applemma}{Lemma}
\newtheorem{appcorollary}{Corollary}

\newcommand{\conv}{\text{conv}}
\newcommand{\mbf}{\mathbf}
\newcommand{\mbb}{\mathbb}
\newcommand{\mc}{\mathcal}
\newcommand{\msf}{\mathsf}
\newcommand{\mf}{\mathfrak}

\newcommand{\tr}{\textrm{Tr}}

\newcommand{\supp}{\text{supp}}
\newcommand{\A}{\mathsf{A}}
\newcommand{\B}{\mathsf{B}}
\newcommand{\tD}{\text{D}}

\usepackage{multirow}
\newcommand{\op}[2]{|#1\rangle\langle#2|}

\definecolor{cool_green}{rgb}{0.0, 0.5, 0.0}
\newcommand{\eric}[1]{{\color{cool_green} #1}}
\newcommand{\norm}[1]{\left\lVert#1\right\rVert}
\usepackage{blkarray}
\usepackage{adjustbox}
\usepackage{graphicx}
\usepackage{dcolumn}

\usepackage[normalem]{ulem}

\newcommand{\yujie}{\color{purple}}
\newcommand{\blk}{\color{black}}
\newcommand{\cg}{\color{cool_green}}
\begin{document}


\title{Destroying and Preserving Measurement Incompatibility over a Quantum Channel}



\author{Eric Chitambar}
\email{echitamb@illinois.edu}
\affiliation{Department of Electrical and Computer Engineering and Coordinated Science Laboratory, University of Illinois at Urbana-Champaign, Urbana, Illinois 61801, USA}

\author{Yujie Zhang}
\email{yujie4physics@gmail.com}
\affiliation{Institute for Quantum Computing and Department of Physics \& Astronomy,
University of Waterloo, 200 University Ave W, Waterloo, Ontario, N2L 3G1, Canada}
\affiliation{Perimeter Institute for Theoretical Physics, 31 Caroline Street North, Waterloo, Ontario, Canada N2L 2Y5}


\date{\today}

\begin{abstract}

A quantum channel is called incompatibility-breaking if it always destroys the incompatibility of quantum measurements.
On the other extreme are incompatibility-preserving channels, which never destroy incompatibility.
This work studies both types of channels through the mathematical framework of zonotopes and zonoids, an approach that provides a complete characterization of compatibility for dichotomic measurements and is particularly powerful in the qubit setting.
Using the language of zonoids, we present necessary and sufficient conditions for when a qubit unital channel is incompatibility-breaking, which are also sufficient for nonunital channels.
As one of our main results, we demonstrate the superactivation phenomenon of two incompatibility-breaking qubit channels that are no longer incompatibility-breaking when used in parallel.
Translating this result into the task of quantum steering, it says that two copies of a two-qubit Werner state can be steered at noise threshold $1/2$, even though this is impossible using just one copy. For the problem of incompatibility preservation, we prove that a quantum channel with matching input and output dimensions will preserve general incompatibility if and only if it is a unitary transformation.
When the output dimension is larger, this result no longer holds, and we offer a conjectured form of a general incompatibility-preserving channel.

\end{abstract}

\maketitle


\section{Introduction}

A quantum channel describes a physical process that transmits and transforms quantum information. 
Noise in such a process can degrade the underlying quantumness of the channel output and, beyond some threshold, destroy it completely. 
A familiar example is an entanglement-breaking channel, which
destroys all entanglement between the transmitted system and any external system~\cite{Horodecki-2003a,Chruciski-2006a,Chen-2020a}. 
Analogous resource-destroying channels have been studied for other forms of quantum nonclassicality~\cite{Liu-2017a,Gour-2017a}, including coherence ~\cite{Bu-2016a,Kamin-2020a}, nonclassicality ~\cite{Ivan-2013a,Meznaric-2013a}, Bell nonlocality ~\cite{Pal-2015a,Zhang-2020a,Mukherjee-2025a}, and measurement incompatibility~\cite{Heinosaari-2015a,Kumari-2023a}.

At the opposite extreme are channels that preserve a quantum resource. 
For example, local unitary (LU) evolution preserves the entanglement of every bipartite state, while phase shifts and permutations leave the amount of (speakable) coherence in a system unchanged.  
More generally, one may ask whether a channel preserves a resource universally, or only on a restricted class of states, measurements, or tasks. 
Questions of this type also arise naturally in quantum information theory, where full reversibility of a channel is not necessary (or even possible) for completing a desired information-theoretic task, such as communication over a noisy channel.
Resource-preserving channels are particularly interesting because preservation does not \emph{a priori} require that the underlying process be reversible. 
It is therefore natural to ask whether genuinely noisy channels can nevertheless preserve a given resource, and, if so, how strongly  this preservation constrains the channel.

This work studies the quantum phenomena of measurement incompatibility.
A family of quantum measurements is called compatible if they can all be simulated by a single ``parent’’ measurement followed by classical post-processing.
Historically, compatibility has also been referred to as joint measurability \cite{Heinosaari-2016a, Guhne-2023a}, and a pair of non-commuting observables represents a textbook example of two measurements that are \textit{not} jointly measurable.
More generally, quantum measurements are described by positive operator-valued measures (POVMs), which extend beyond standard observables.
A family of finite-outcome POVMs on a $d$-dimensional system is described by a collection of positive operators $\{M_{a|x}\}$ on $\mathbb{C}^d$, often referred to as a measurement assemblage.
Here, $x$ denotes the measurement choice and $a$ denotes the outcome for measurement $x$, so that $\sum_a M_{a|x}=\mathbb{I}_d$ for all $x$.
The assemblage is compatible if and only if the measurement operators $M_{a|x}$ can be expressed as
\begin{align}
\label{Eq:compatibility-defn}
M_{a|x}=\int_\Lambda p(a|x,\lambda)d\Pi(\lambda) \qquad\forall a,x,
\end{align}
where $\Pi:\mc{F}\to\text{Pos}(\mathbb{C}^d)$ is some parent POVM defined on a measurable space $(\Lambda,\mc{F})$ and $p(a|x,\lambda)$ is a classical post-processing map such that $p(a|x,\lambda)\ge 0$ and $\sum_ap(a|x,\lambda)=1$.

Compatible measurement families are comparatively easier to implement experimentally, as they require only a single measurement apparatus $\Pi$ that acts independent of the measurement choice.
In contrast, an incompatible assemblage cannot be written in the form of Eq. \eqref{Eq:compatibility-defn}, and therefore its physical implementation requires additional quantum resources such as multiple measurement apparatuses, extra control hardware, or quantum memories.
This engineering overhead has motivated the view of measurement incompatibility as a quantum resource, with compatible measurement families regarded as ``free’’ and incompatible ones as resourceful \cite{Heinosaari-2016a, Skrzypczyk-2019a, Designolle-2019a, Buscemi-2020a, Ji-2024a, Mitra-2022a, Bluhm-2025a}.

More fundamentally, measurement incompatibility is a crucial resource in the realization of Bell nonlocality.
The use of incompatible measurements by both parties is necessary to obtain nonlocal measurement correlations in standard bipartite Bell experiments \cite{Terhal-2003a, Wolf-2009a} and sufficient in the multiqubit setting \cite{Plavala-2025a}, although not sufficient in general \cite{Quintino-2016a, Bene-2018a, Hirsch-2018a}. Moreover, measurement incompatibility is sufficient but not necessary to demonstrate the failure of generalized noncontextuality~\cite{Spekkens2005,TavakoliUola2020,Selby2023,
ZhangSchmidYingSpekkens2026,ZhangYingSchmid2025}. Even stronger, measurement incompatibility is the essential ingredient for demonstrating quantum steering~\cite{Quintino2014, Uola2014}.
A bipartite state $\rho_{\A\B}$ is called Einstein-Rosen-Podolsky (EPR) steerable from $\A$ to $\B$ if there exists a measurement assemblage $\{M_{a|x}\}$ on system $\A$ such that the post-measurement state assemblage on system  $\B$ \textit{does not} satisfy a local hidden state (LHS) model \cite{Wiseman-2007a}.
The latter means that there is a density operator-valued measure $\sigma$ on $(\Lambda,\mc{F})$ and stochastic map $p(a|x,\lambda)$ such that 
\begin{align}
\label{Eq:LHS }
\text{Tr}_{\A}[(M_{a|x}\otimes\mathbb{I})\rho_{\A\B}]=\int_{\Lambda}p(a|x,\lambda)d\sigma(\lambda).
\end{align}
Every incompatible measurement assemblage can be used to demonstrate steering for a suitable bipartite state.

But just as entanglement and coherence can be destroyed when a given system is subject to noise, so can measurement incompatibility.
General noise on a quantum system is described by a quantum channel, which from the Schr\"{o}dinger picture is a completely-positive trace-preserving (CPTP) map $\mc{N}$ acting on the density matrices of the system.
However, in the Heisenberg picture, the channel is a completely-positive unital map $\mc{N}^\dagger$ acting on the measurement operators, which transforms one measurement assemblage into another, $\{M_{a|x}\}\mapsto\{\mc{N}^\dagger(M_{a|x})\}$.

In this work, we investigate two fundamental questions that characterize the extreme effects a channel can have on measurement incompatibility: (i) under what conditions does a channel $\mc{N}$ \textit{destroy} the incompatibility in all measurement assemblages $\{M_{a|x}\}$ or in a specific type; and (ii) under what conditions does $\mc{N}$ \textit{preserve} the incompatibility in all measurement assemblages $\{M_{a|x}\}$ or in a specific type.

A particularly important case for both of these questions is when the measurement assemblage consists of projection valued-measures (PVMs), meaning that the $\{M_{a|x}\}$ are projectors that satisfy $M_{a|x}M_{a'|x}=\delta_{aa'}M_{a|x}$ for all $x$.
These are more commonly referred to as projective measurements, and from an axiomatic perspective, they represent the building block for all physical measurements \cite{Peres-1990a}.
This work focuses primarily on PVM assemblages for the question of incompatibility breaking, whereas its study of incompatibility preservation will consider the most general POVM assemblages.
The primary technical tool we use is the mathematical theory of zonoids, which enables us to develop many new structural results on measurement incompatibility.
Our main results are summarized as follows:



\begin{enumerate}
\item We develop a geometric approach to dichotomic measurement compatibility based on zonoids. 
Beyond providing a criterion for compatibility in terms of zonoid inclusion, we present a relatively simple proof that compatibility is a property that holds in the convergent limit of compatible assemblages, a result using much more elaborate argumentation to prove for non-dichotomic measurements \cite{Kim-2026a}.

\item
For qubits, our approach reduces the compatibility problem to the geometry of centrally-symmetric zonoids in $\mbb{R}^3$. 
As a first consequence, we obtain an analytic proof of the conjectured compatibility threshold for arbitrary finite families of noisy planar qubit PVMs~\cite{Andrejic-2020a,
Porto-2026a}.

\item We apply the zonoid geometry to incompatibility-breaking qubit channels. For a general qubit channel, we derive a geometric criterion guaranteeing that a prescribed family of spin measurements becomes compatible after passing through the channel. For unital channels, this condition is also necessary. Applied to the full family of qubit PVMs, it yields an exact PVM-incompatibility-breaking criterion for unital qubit channels and a sufficient criterion for general nonunital channels. 

\item At the unital incompatibility-breaking boundary, we further establish a rigidity result that effectively fixes the parent POVM. 
This implies, for instance, that the LHS model for two-qubit Werner states $\omega_r$ at the compatibility threshold of $r=1/2$ is essentially unique. 
To our knowledge, this is the first uniqueness result for any local hidden state or hidden variable model.  
Although while finalizing this manuscript, a somewhat similar uniqueness result was announced independently in~\cite{zhang2026spectral}.

\item We show that incompatibility breaking is not stable under tensor products. In particular, although the qubit depolarizing channel $\Delta_{1/2}$ is incompatibility breaking, its tensor product
$\Delta_{1/2}\otimes\Delta_{1/2}$ is not. Thus, two copies can activate incompatibility even though each individual channel is incompatibility breaking. Through the channel--state correspondence, the same construction gives a particularly simple two-copy activation of quantum steering for the corresponding two-qubit Werner state $\omega_{1/2}$.

\item Finally, we consider the opposite extreme of incompatibility preservation. Using the zonoid geometry, we first prove geometrically that a qubit channel is incompatibility preserving if and only if it is unitary. 
We then extend this classification to arbitrary finite dimensions. 
Correspondingly, for equal local dimensions, a bipartite state is shown to be incompatibility preserving from Alice to Bob if and only if it is pure and has full Schmidt rank.
\end{enumerate}

The remainder of the paper is organized as follows. 
After briefly introducing notation below, we begin in Section \ref{Sect:Prelim-Prior} with precise definitions of incompatibility-breaking/preserving channels and their connection to EPR steering.  
An overview of some previous work related to the results of this paper is also provided.
Sections \ref{Sect:Zonotopes-Zonoids} and \ref{Sect:Zonoids-Incompatibility} introduce the concepts of zonoids and their connection to measurement incompatibility, drawing on the mathematical framework of Aubrun and Lancien that relates POVMs to zonoids \cite{Aubrun-2015a}.
In Section \ref{Sect:qubit-compatibility}, we focus on qubit POVMs and characterize qubit compatibility in terms of zonoidal inclusion of Bloch vectors.
Section \ref{Sect:Planar} specializes even further to planar measurements, and we derive compatibility conditions as relatively simple geometric corollaries.
Our general framework is applied to the study of incompatibility-breaking channels in Section \ref{Sect:Incompatibility-breaking-qubit}, with examples for both unital and nonunital channels given in Section \ref{Sect:Examples}.
A two-copy activation result for incompatibility non-breaking is established in Section \ref{Sect:superactivation}. 
Section \ref{Sect:incompatibility-preserving-qubits} provides a geometric characterization of all incompatibility-preserving qubit channels, and this classification is then extended to arbitrary finite dimensions in Section \ref{Sect:incompatibility-preserving-arbitrary}.
For completion, in Section \ref{Sect:States-incompatibility-preserving} we map this problem back to the setting of steering and investigate which bipartite states can demonstrate EPR steering using any incompatible measurement assemblage.
Finally, a proof for Theorem \ref{Thm:zonoid-unique} and details for Theorem \ref{Thm:two-copy-boundary-activation} are provided in Appendices \ref{App:boundary-zonoid-uniqueness} and \ref{App:two-copy-boundary-parent}, respectively.
Some general facts about representing POVMs as zonoids is presented in Appendix \ref{App:zonoid-generator}.

\medskip

\noindent\textbf{Notation:}

Throughout this work we let $\mathsf{Herm}(\mc{H})$, $\mathsf{Pos}(\mc{H})$, and $\text{D}(\mc{H})$ denote the sets of hermitian, positive, and density operators on Hilbert space $\mc{H}$, respectively.
We write $[0,\mathbb{I}]$ to denote the family of operators in $\mathsf{Pos}(\mc{H})$ with eigenvalues in the interval $[0,1]$.
For some topological space $X$, we let $\mf{B}(X)$ denote the Borel sets of $X$.
The Hausdorff distance for compact sets $A$ and $B$ is defined as
\begin{align}
d_H(A,B)=\max\{\max_{\vec{a}\in A}\min_{\vec{b}\in B}\Vert\vec{a}-\vec{b}\Vert, \max_{\vec{b}\in B}\min_{\vec{a}\in A}\Vert\vec{a}-\vec{b}\Vert\}.\notag
\end{align}
Many of our results will rely heavily on the Bloch sphere representing qubit state space.
We let $B^3$ denote the unit ball in $\mbb{R}^3$ (i.e. the Bloch sphere) and $S^2$ its surface.
We will throughout this work use $\omega$ to denote the normalized uniform measure on $S^2$, so that $d\mu(\hat{n})=\frac{1}{4\pi}d\Omega=\frac{1}{4\pi}\sin\theta d\theta d\phi$.
In general, we adopt standard notation and use a hat, as in $\hat{s}$, to indicate unit vectors, whereas general vectors will be written as $\vec{m}$.

\section{Preliminaries and Prior Results}

\label{Sect:Prelim-Prior}

To better set the stage for this work, we begin with some definitions and a summary of prior results.
\begin{definition}
A channel $\mc N$ is called 
\begin{itemize}
    \item \textbf{$\mf{M}$-incompatibility breaking} if the output assemblage
$\{\mc{N}^\dagger(M_{a|x})\}$ is compatible for input assemblage $\mf{M}=\{M_{a|x}\}$;
\item \textbf{PVM-incompatibility breaking} if it is $\mf{M}$-incompatibility breaking for every assemblage $\mf{M}$ of projective measurements;
\item \textbf{Incompatibility breaking} if it is $\mf{M}$-incompatibility breaking for every measurement assemblage $\mf{M}$.
\end{itemize}
\end{definition}
\noindent An analogous set of definitions hold for the complementary notion of incompatibility breaking.
\begin{definition}
A channel $\mc N$ is called 
\begin{itemize}
    \item \textbf{$\mf{M}$-incompatibility preserving} if it is not $\mf{M}$-incompatibility breaking whenever $\mf{M}$ is incompatible;
\item \textbf{PVM-incompatibility preserving} if it is $\mf{M}$-incompatibility preserving for every assemblage $\mf{M}$ of incompatible projective measurements;
\item\textbf{Incompatibility preserving} if it is $\mf{M}$-incompatibility preserving for every incompatible measurement assemblage $\mf{M}$.
\end{itemize}
\end{definition}
The notion of incompatibility-breaking channels was first introduced by Heinosaari \textit{et al.} \cite{Heinosaari-2015a} and was previously discussed in ~\cite{Ku2022,Ku2023, Heinosaari-2015a}.
A special form of incompatibility-breaking involves mixing a given assemblage with white noise or some other type of measurement.
The minimum amount of mixing needed to break the incompatibility of the initial assemblage is called its robustness \cite{Heinosaari-2015b, Bavaresco-2017a, Designolle-2019a, Skrzypczyk-2019a, Buscemi-2020a}, a quantity that has analogous meaning in other resource-theoretic settings \cite{Piani-2016a, Oszmaniec-2019a, Chitambar-2019a, Takagi-2019a, Kuroiwa-2024a}
 
The breaking and preserving of incompatibility can be equivalently described in terms of quantum steering \cite{Uola2015}.
To see this connection, let
\begin{align}
J^{(\mc{N})}_{\A\B}:=\mc{N}\otimes \text{id}(\Phi^+_{\B'\B})
\end{align}
denote the Choi state of an arbitrary channel $\mc{N}_{\mathsf{B}'\to\mathsf{A}}$, where $\Phi^+_{\B'\B}=\frac{1}{|\B|}\sum_{i,j=1}^{|\B|}\op{ii}{jj}_{\B\B'}$ and $\B\cong\B'$.
Conversely, up to a local contraction, every bipartite state $\rho_{\A\B}$ is the Choi state of some channel $\mc{N}_{\B'\to \A}$; namely, $\mc{N}$ is the CPTP map whose action on an arbitrary $\tau\in\mathrm{D}(\mc{H}_{\B})$ is given by $\mc{N}(\tau)=\tr_{\B} [(\mathbb{I}\otimes\tau^{\mathtt{T}})\hat{\rho}_{\A\B}]$, where $\hat{\rho}_{\A\B}:=1/|\mathsf B|(\mathbb{I}\otimes \rho_{\B}^{-1/2})\rho_{\A\B}(\mathbb{I}\otimes \rho_{\B}^{-1/2})$. 
The local contraction of $\rho_{AB}$ by $(\mathbb{I}\otimes \rho_{\B}^{-1/2})$ is needed to make $\mc{N}$ trace-preserving, but it does not affect steerability in the sense that a measurement assemblage $\{M_{a|x}\}$ on Alice's side can be used to steer $\rho_{\A\B}$ iff it can also be used to steer $\hat{\rho}_{\A\B}$ \cite{Quintino-2015a}.

Now suppose that Alice and Bob share a Choi state $J_{\A\B}^{(\mc{N})}$ and Alice performs some local measurement described by assemblage $\{M_{a|x}\}$.
Then the post-measurement state assemblage on Bob's side is given by
\begin{align}
    \sigma_{a|x}&=\tr_{\A}[(M_{a|x}\otimes\mbb{I})J^{(\mc{N})}]\notag\\
    &=\tr_{\A}[(\mc{N}^\dagger(M_{a|x})\otimes\mbb{I})\Phi^+_{\B'\B}]=\frac{1}{|\B|}\mc{N}^\dagger(M_{a|x})^\mathtt{T},
\end{align}
where $\mathtt{T}$ denotes the matrix transpose.
From this we see that the state assemblage $\{\sigma_{a|x}\}$ satisfies an LHS model iff the measurement assemblage $\{\mc{N}^\dagger(M_{a|x})\}$ is compatible.
Therefore, the state $J^{(\mc{N})}$ cannot be steered from Alice to Bob using measurements $\{M_{a|x}\}$ iff the assemblage $\{\mc{N}^\dagger(M_{a|x})\}$ is compatible \cite{Uola2014, Quintino2014}.
The question of whether some channel breaks or preserves a given measurement assemblage has thus been translated into the question of whether that assemblage steers the corresponding Choi state.

One of the most well-studied channels is the partially depolarizing channel, $M_{a|x}\mapsto \mc{D}_r^\dagger(M_{a|x}):=r M_{a|x}+ (1-r)\tr[M_{a|X}]\mathbb{I}/d$, and its Choi state is the $d\otimes d$ isotropic bipartite state $\omega_r:=J^{(\mc{D}_r)}=r\Phi^+_{\A\B}+(1-r)\mathbb{I}\otimes\mathbb{I}/d^2$.
Following the seminal work of Werner \cite{Werner-1989a}, an LHS model was originally proposed by Wiseman \textit{et al.} for $\omega_r$ \cite{Wiseman-2007a}. 
Translating this into the context of measurement incompatibility, it says that $\mc{D}_r$ is PVM-incompatibility breaking whenever $r\leq r_{\text{th}}=(\sum_{n=1}^d1/n-1)/(d-1)$. 
In particular, when $d=2$, this gives an PVM-incompatibility-breaking threshold of $1/2$, a threshold that was recently shown to also hold for arbitrary POVMs \cite{Renner-2024a, Zhang-2024a}.
For $d>2$, the incompatibility-breaking threshold for general POVMs remains an open problem, with the best lower bound given in Ref. \cite{Barrett-2002a}.

It should be stressed that when discussing the full sets of PVMs or POVMs, we are dealing with continuous families of measurements.
This leads to very subtle questions regarding the closure of compatible assemblages.
For example, suppose that $\{\mc{N}_\lambda\}$ is some family of channels that vary smoothly as we change the parameter $\lambda$.
When applying this to a fixed measurement assemblage $\{M_{a|x}\}$, we obtain the parameterized assemblage $\{\mc{N}_\lambda^\dagger(M_{a|x})\}$.
Up to some threshold value of $\lambda$, say, $\lambda_{\text{th}}$, the assemblages might be compatible, whereas they are incompatible when $\lambda>\lambda_{\text{th}}$.
The interesting question is what happens at the threshold $\lambda=\lambda_{\text{th}}$; is the assemblage compatible or incompatible?
Equivalently, we can rephrase this question as asking whether a convergent sequence of bipartite states $\{\rho_\lambda\}$ that is unsteerable by some assemblage $\{M_{a|x}\}$ has a limit point $\rho_{\lambda_{\text{th}}}$ that is also unsteerable by $\{M_{a|x}\}$.
When the assemblage $\{M_{a|x}\}$ consists of a finite number of measurements, a standard compactness argument implies that this is always true.
However, for infinite families of measurements, the question is highly non-trivial.

This problem is physically motivated by considering natural families of channels like the partial depolarizing channel $\mc{D}_r$.
In this case, we know that the assemblage $\{\mc{D}_{r_{\text{th}}}(M_{a|x})\}$ is compatible because we have a concrete LHS model at the threshold $r=r_{\text{th}}$ \cite{Wiseman-2007a}.
In fact, for $d=2$ this model requires a parent POVM with an unbounded number of measurement outcomes \cite{Zhang-2025a}, and as we show below (Theorem \ref{Thm:zonoid-unique}), this parent is effectively unique.
But without having an explicit LHS model at the threshold value, can we always conclude that the resulting assemblage is compatible based on the general topological structure of measurement compatibility? 
We prove in this paper (Theorem \ref{Thm:dichotomic-closure}) that indeed the threshold assemblage will always be compatible for families of dichotomic (i.e. two-outcomes) measurements.
For more general families of measurements, the same result holds, but a different method is needed for its proof than what is pursued here \cite{Kim-2026a}.

Beyond the partial depolarizing channel other results from the study of quantum steering can be imported to identify incompatibility-breaking channels.
References \cite{Cavalcanti-2017a, Uola-2020a, Guhne-2023a} provide excellent reviews of such results.
Here, we just highlight a few more specifically for the case $d=2$.
Through the Choi-state mapping, the family of unital qubit channels corresponds to the family of Bell-diagonal two-qubit states and their LU equivalents.
The PVM-steerability of Bell-diagonal states has been fully characterized \cite{Jevtic-2015a, Nguyen-2016a, Nguyen-2019a}, and recently one of us has shown that the same steering criteria also applies for general POVMs \cite{zhang2026parallel}; thus the incompatibility-breaking conditions for unital qubit channels is completely solved (see Section \ref{Sect:Incompatibility-Breaking-Unital} for these conditions). 
Much less is known about nonunital qubit channels, which correspond to two-qubit states with a non-uniform reduced matrix on Alice's side.
This non-uniformity is a necessary condition for realizing of one-way steering, and thus known examples of states demonstrating this phenomenon offer examples of incompatibility-breaking channels \cite{Bowles-2014a, Quintino-2015a, Bowles-2016a}.
A canonical example of a nonunital qubit channel is the amplitude damping channel $\mc{A}_\lambda$, where $\lambda$ denotes the damping parameter.
Using the pure-state steering method introduced in Ref. \cite{Song-2026a}, it can be shown that $\mc{A}_\lambda$ is not PVM-incompatibility breaking for all values $\lambda<1$.
However, as we will show in this paper (Section \ref{Sect:Examples}), as soon as any nonzero amount of depolarizing noise is mixed with $\mc{A}_\lambda$, there will always be some threshold damping value $\lambda_{\text{th}}<1$ such that the resulting channel is PVM-incompatibility breaking.

In parallel to these results on two-qubit steering, there is a long line of research that has directly studied the incompatibility of qubit measurements.
Early work by Busch provided necessary and sufficient conditions for when two unbiased dichotomic POVMs are compatible \cite{Busch-1986a}, a criterion that has been extended to biased pairs of POVMs \cite{Stano-2008a, Busch-2009a, Yu-2010a}, as well as more general measurements with special structure \cite{Liang-2011a, Yu-2013a, Kunjwal-2014a, Uola-2016a, Jae-2019a, Andrejic-2020a, Song-2026a}.
In particular, measurement assemblages restricted to a fixed plane in the Bloch sphere can be analyzed more readily.
It is known that the partially depolarizing channel is $\mf{M}$-incompatibility breaking for all planar measurements whenever $r\leq 2/\pi$ \cite{Uola-2016a, Zhang-2024a}.
We will recover this result for PVMs below (Section \ref{Sect:Planar}).

Various computational methods have been developed for deciding whether a given assemblage is compatible.
The compatibility decision problem, and more generally the incompatibility robustness of a given assemblage, can be expressed as a semi-definite program (SDP) \cite{Wolf-2009a, Cavalcanti-2016a}.
Unfortunately, this SDP scales poorly as the number of measurement choices and size of parent POVM increases \cite{Skrzypczyk-2020a}.
A more efficient method has been developed that involves optimizing the classical processing over some fixed choice of parent POVM \cite{Porto-2026a}, which works particularly well for Bloch sphere approximations in the case of qubits.
Another recent approach, based on the Fourier transform of the classical post-processing map, provides sufficient conditions for the compatibility of general qubit POVMs that can be formulated as a second-order cone program \cite{Grinko-2025a}.

The zonoid methods developed in this work build on a line of investigation initiated in our earlier paper \cite{Zhang-2025a}.
 Previously, we showed how the zonotope characterization of qubit compatibility can be used to obtain upper and lower bounds on the size of any parent POVM that can simulate all measurements on a sufficiently shrunken Bloch sphere.
 Here, we study more general types of Bloch sphere deformations and use the zonoid method to identify conditions under which incompatibility is either preserved or broken.
 In this approach, the problem of compatibility for dichotomic measurements is cast as a zonoid inclusion problem, which can be tackled using its own form of optimization and computational methods that complement the ones listed above.


\section{Zonoids and Measurement Compatibility}
Our approach to understanding compatibility will primarily involve geometrical objects known as zonotopes and zonoids.
Before restricting attention to qubits, we review definitions and observe some general connections between zonoids and measurement compatibility.
Some references on zonoids include \cite{Bolker-1969a, Goodey-1993a}, as well as most textbooks on  convex polytopes.

\subsection{Zonotopes and Zonoids}

\label{Sect:Zonotopes-Zonoids}

A parallelogram in $\mathbb{R}^2$ forms the most basic type of zonotope.
Adding more sides and generalizing to higher dimensions leads to the following definition.

\begin{definition}
\label{Defn:Zonotope}
A \textbf{zonotope} $\mc Z$ in $\mathbb{R}^d$ is a convex body of the form
 \begin{align}
 \label{Eq:zonotope-generated}
  \mc Z=\left\{\sum_{i=1}^kt_i\vec{v}_i\; \bigg| \; t_i\in[0,1]\right\},
\end{align}  
where the $\{\vec{v}_i\}_{i=1}^k$ are called generators of $\mc{Z}$.
\end{definition}

\noindent Here, we adopt the convention that every zonotopoe contains the origin, a condition that could be relaxed by allowing for translations.
We can equivalently think of a zonotope as a $k$-dimensional unit cube under the linear map $\vec{e}_i\mapsto \vec{v}_i$ for each unit vector $\vec{e}_i\in\mathbb{R}^k$.
The center of $\mc{Z}$ is the point $\vec{c}/2$ with $\vec{c}=\sum_{i=1}^k\vec{v}_i$.
For $d=2$ (resp. $d=3$), a zonotope is also referred to as a zonogon (resp. zonohedron).



Zonotopes are generated by finite sets of vectors.
We next consider the topological closure of such objects.
\begin{definition}
\label{Defn:Zonoid}
Any convex body $\mc Z$ that can be approximated arbitrarily well by a sequence of zonotopes is called a \textbf{zonoid}.
In other words, there exists a sequence of zonotopes $\mc Z_1,\mc Z_2,\cdots$ such that $d_H(\mc Z_n,\mc Z)\to 0$.
\end{definition}
\noindent In fact, in this definition it suffices to consider convergent sequences in which each constituent zonotope is an outer approximation of the convex body.
For example, the unit ball in $\mathbb{R}^d$ is a zonoid as well as any centrally-symmetric ellipsoid, since the linear image of any zonoid is also a zonoid.

A standard theorem in the theory of zonoids says that every zonoid $\mc{Z}\subset\mathbb{R}^d$ is the image of some vector-valued non-atomic measure $\pi:\mc{F}\to\mathbb{R}^d$ on a measurable space $(\Lambda,\mc{F})$; i.e. $\mc{Z}=\pi(\mc{F})$ \cite{Bolker-1969a}.
The converse of this statement is also true, and if $\pi$ is an atomic measure, one can just consider its convex hull to obtain a zonoid, $\mc{Z}=\conv(\pi(\mc{F}))$. 
A useful consequence of this is that we can always express $\mc{Z}$ in the form
\begin{align}
\label{Eq:zonoid-generator-representation}
    \mc{Z}=\left\{\int_{S^{d-1}}q(\hat{r})\hat{r}\;d\nu(\hat{r})\;|\; q\in[0,1] \;\text{is measurable}\right\},
\end{align}
where $\nu$ is a positive Borel measure on $S^{d-1}$ known as a generating measure of $\mc{Z}$ (see Appendix \ref{App:zonoid-generator} for details).
By comparing Eqns. \eqref{Eq:zonotope-generated} and \eqref{Eq:zonoid-generator-representation}, observe that we recover a zonotope with generators $\{\vec{v_i}\}_{i=1}^k$ by choosing the measure $\nu=\sum_{i=1}^k\Vert\vec{v}_i\Vert\delta_{\hat{v}_i}$, where $\delta_{\hat{v}_i}$ is the Dirac measure for point $\hat{v}_i$. 
The center of the zonoid is $\vec{c}/2$, where $\vec{c}=\int_{S^{d-1}}\hat{r}d\nu(\hat{r})$. 

The support function $h_{\mc{Z}}$ of a convex set $\mc{Z}$ is given by $h_{\mc Z}(\vec{n})=\max_{\vec{z}\in \mc Z}\langle\vec{n},\vec{z}\rangle$.
For a zonotope $\mc{Z}$ with generators $\{\vec{v}_i\}_{i=1}^k$, we can use Eq. \eqref{Eq:zonotope-generated} to readily obtain
\begin{align}
\label{Eq:support-function-zonotope1}
    h_{\mc{Z}}(\vec{n})=\frac{1}{2}\langle\vec{n},\vec{c}\rangle+\frac{1}{2}\sum_{i=1}^k|\langle\vec{n},\vec{v}_i\rangle|.
\end{align}
Similarly for a zonoid $\mc{Z}$ with generating measure $\nu$, we have
\begin{align}
\label{Eq:support-function-zonoid1}
    h_{\mc{Z}}(\vec{n})=\frac{1}{2}\langle\vec{n},\vec{c}\rangle+\frac{1}{2}\int_{S^{d-1}}|\langle\vec{n},\hat{r}\rangle| d\nu(\hat{r}).
\end{align}
By introducing the reflected measure $\nu_-(B):=\nu(-B)$ and the symmetrized even measure, $\nu_{\text{even}}:=\frac{1}{2}(\nu+\nu_-)$, we can use the fact that $|\langle\vec{n},\hat{r}\rangle|=|\langle\vec{n},-\hat{r}\rangle|$ to write $\int_{S^{d-1}}|\langle\vec{n},\hat{r}\rangle| d\nu(\hat{r})=\int_{S^{d-1}}|\langle\vec{n},\hat{r}\rangle|d\nu_{\text{even}}(\hat{r})$.
Therefore, the support function can be expressed as
\begin{align}
\label{Eq:support-representation}
    h_{\mc{Z}}(\vec{n})=\frac{1}{2}\langle\vec{n},\vec{c}\rangle+\frac{1}{2}\int_{S^{d-1}}|\langle\vec{n},\hat{r}\rangle|d\nu_{\text{even}}(\hat{r}).
\end{align}
In fact, it can be shown that $\nu_{\text{even}}$ is the \text{unique} even measure for which Eq. \eqref{Eq:support-representation} holds \cite{Bolker-1969a, Aubrun-2015a}.



For a zonotope $\mc{Z}$ with generators $\{\vec{v}_i\}_{i=1}^k$, its generator length is defined as the sum
\begin{align}
\ell(\mc Z):=\sum_{i=1}^k\Vert\vec{v}_i\Vert.
\end{align}
The generator length admits a particularly elegant expression using the support function of $\mc Z$.
While the following applies to zonotopes in arbitrary dimension, here we restrict attention to $d=3$ and $d=2$ since most of our main results pertain to qubit system.
For each generator $\vec{v}_i\in\mathbb{R}^3$, we compute the average magnitude of its inner product with a uniformly sampled unit vector,
\begin{align}
    \int_{S^{2}}|\langle \hat{n},\vec{v}_i\rangle|d\mu(\hat{n})=\Vert\vec{v}_i\Vert\int_{S^2}|\langle\hat{n},\hat{z}\rangle|d\mu(\hat{n})=\frac{1}{2} \Vert\vec{v}_i\Vert,
\end{align}
where $d\mu(\hat{n})=\frac{1}{4\pi}\sin\theta d\theta d\phi$.
Since $\int_{S^2}\langle\hat{n},\vec{c}\rangle d\mu(\hat{n})=0$, together with Eq.~\eqref{Eq:support-representation}, we have
\begin{align}
\label{Eq:generator-length-integral} \ell(\mc Z)=\sum_{i=1}^k\Vert\vec{v}_i\Vert=4\int_{S^2}h_{\mc Z}(\hat{n})d\mu(\hat{n}).
\end{align}
Note that since the RHS does not depend on the choice of generators $\{\vec{v}_i\}$ for $\mc Z$ or its center $\vec{c}$; it is a translation-invariant geometrical property of $\mc{Z}$.
For a zonoid $\mc Z\subset\mathbb{R}^3$, we similarly define $\ell(\mc Z):=4\int_{S^2}h_{\mc Z}(\hat{n})d\mu(\hat{n})$, and $\ell(\mc Z_n)\to \ell(\mc Z)$ for any approximating sequence of zonotopes $(\mc Z_n)$ such that $\mc Z_n\to \mc Z$.

A special property of zonotopes in $\mathbb{R}^2$ is that every centrally-symmetric polygon with $2n$ vertices forms a zonogon $\mc Z$ generated by $n$ vectors $\{\vec{v}_i\}_{i=1}^n$.
Furthermore, the perimeter of the zonogon is given by the simple formula
\begin{align}
\text{Per}(\mc Z)=2\cdot\ell(\mc Z)=2\sum_{i=1}^n\Vert\vec{v}_i\Vert.
\end{align}
Since every centrally-symmetric convex body in $\mathbb{R}^2$ can be approximated arbitrarily well by a polygon, the former is also a zonoid whose perimeter also satisfies the generator length formula $\text{Per}(\mc Z)=2\ell(\mc Z)$.
Once we move to $\mbb{R}^3$, there exist centrally-symmetric polyhedra that are not zonohedra, a regular octahedron being one example. 
Indeed, every face of a zonohedron is a zonogon, which is not the case for a tetrahedron.


\subsection{Zonoids and Incompatibility}

\label{Sect:Zonoids-Incompatibility}

Zonotopes and zonoids in $\text{Herm}(\mc{H})$ are defined analogously to Definitions \ref{Defn:Zonotope} and \ref{Defn:Zonoid}.
Alternatively, one can simply apply the isomorphism between $\mathsf{Herm}(\mc{H})$ and $\mathbb{R}^{d^2}$ for every $d$-dimensional Hilbert space $\mc{H}$.
Deep connections between zonoids, POVMs, and distinguishability norms have been drawn by Aubrun and Lancien in Ref. \cite{Aubrun-2015a}.
We review one of their preliminary results here.

\begin{lemma}[\cite{Aubrun-2015a}]
\label{Lem:Aubrun-Lancien}
    Let $\mc Z\subset[-\mathbb{I},\mathbb{I}]$ be a convex subset of $\text{Herm}(\mathbb{C}^d)$.
    Then the following are equivalent.
    \begin{enumerate}
        \item $\mc Z\subset[-\mathbb{I},\mathbb{I}]$ is a centrally-symmetric zonoid containing $\pm\mathbb{I}$.
        \item There exists a measurable space $(\Lambda,\mc{F})$ and POVM $\Pi:\mc{F}\to\text{Pos}(\mbb{C}^d)$ such that $\text{conv}(\Pi(\mc{F}))=\frac{1}{2}(\mathbb{I}+\mc{Z})$.
    \end{enumerate}
    Moreover, when $\mc Z$ is a zonotope, one can restrict to finite-outcome POVMs.
\end{lemma}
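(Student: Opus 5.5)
The proof runs in both directions through the range of a vector measure. The facts used are those recalled in Section \ref{Sect:Zonotopes-Zonoids}: the image of a non-atomic vector measure is a zonoid, the convex hull of the image of an atomic measure is a zonoid, and conversely every zonoid arises this way. Throughout, identify $\mathsf{Herm}(\mathbb{C}^d)$ with $\mathbb{R}^{d^2}$. This identification is linear, so zonoids, zonotopes and Hausdorff limits carry over unchanged.

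\emph{(2)$\Rightarrow$(1).} Let $\Pi$ be a POVM on $(\Lambda,\mc{F})$. Then $\Pi$ is a finite vector measure with values in $\mathsf{Herm}$, so $\conv(\Pi(\mc{F}))$ is a zonoid. The map $E\mapsto 2\Pi(E)-\mathbb{I}$ is affine, so $\mc{Z}:=2\conv(\Pi(\mc{F}))-\mathbb{I}$ is a translated zonoid. We then check the three required properties.
\begin{itemize}
\item \textbf{Central symmetry:} $\Pi(\Lambda\setminus E)=\mathbb{I}-\Pi(E)$, so the map $X\mapsto\mathbb{I}-X$ preserves $\conv(\Pi(\mc{F}))$. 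Equivalently, $\mc{Z}=-\mc{Z}$.
\item \textbf{Containing $\pm\mathbb{I}$:} $\Pi(\emptyset)=0$ and $\Pi(\Lambda)=\mathbb{I}$ give $\pm\mathbb{I}\in\mc{Z}$.
\item \textbf{Inclusion in $[-\mathbb{I},\mathbb{I}]$:} every $\Pi(E)$ lies in $[0,\mathbb{I}]$, and this set is convex.
\end{itemize}
Because $\mc{Z}$ is centrally symmetric about the origin, it is a genuine zonoid centered at $0$, not just a translate of one.

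\emph{(1)$\Rightarrow$(2).} Set $\mc{K}=\frac12(\mathbb{I}+\mc{Z})$, a zonoid with $0,\mathbb{I}\in\mc{K}\subset[0,\mathbb{I}]$ that is symmetric about $\mathbb{I}/2$. Take a generating measure $\nu$ for $\mc{K}$ as in Eq.~\eqref{Eq:zonoid-generator-representation}, viewed on the unit sphere of $\mathsf{Herm}$. The argument then proceeds in three steps.
\begin{enumerate}
\item \textbf{Normalizing the generators.} The generator representation holds only up to translation, and it is not unique. Using central symmetry and the uniqueness of the even part $\nu_{\rm even}$ in Eq.~\eqref{Eq:support-representation}, choose $\nu$ so that its vertex at $q\equiv0$ is $0$ and its center is $\vec{c}/2=\mathbb{I}/2$, i.e.
\begin{align}
\int \hat{r}\,d\nu(\hat{r})=\mathbb{I}.
\end{align}
Concretely, flip each generator $\hat{r}$ to $-\hat{r}$ where needed. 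This changes the set only by a translation, and that translation is fixed by requiring $0\in\mc{K}$ as an extreme point and $\mathbb{I}$ as the opposite point.
\item \textbf{Defining the candidate POVM.} Set $\Pi(E):=\int_E \hat{r}\,d\nu(\hat{r})$ on Borel subsets of the sphere. Then $\Pi(\Lambda)=\mathbb{I}$, and the representation gives $\conv\Pi(\mc{F})=\mc{K}$, since extreme points correspond to indicator functions $q$.
\item \textbf{Positivity.} It remains to show that $\nu$ is concentrated on positive semidefinite directions. Suppose not. Then some set $E$ of positive measure has generators with a negative component along some unit vector $\psi$. This would give $\langle\psi|\Pi(E')|\psi\rangle<0$ for a suitable $E'$, placing a point of $\mc{K}$ outside $[0,\mathbb{I}]$. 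To make this rigorous, I would use the support function. For $P=|\psi\rangle\langle\psi|$, the inclusion $\mc{K}\subset[0,\mathbb{I}]$ forces $\min_{X\in\mc{K}}\tr(PX)\ge0$. By Eq.~\eqref{Eq:support-representation} with the normalized center, this minimum equals
\begin{align}
\int\min(0,\tr(P\hat{r}))\,d\nu.
\end{align}
So $\tr(P\hat{r})\ge0$ for $\nu$-a.e.\ $\hat{r}$, for each $\psi$. Running $\psi$ over a countable dense set, plus continuity, gives $\hat{r}\ge0$ almost everywhere. This makes $\Pi$ a POVM.
\end{enumerate}

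\emph{Zonotope case.} If $\mc{Z}$ is a zonotope, $\nu$ can be taken atomic with finitely many atoms. The same construction then yields a finite-outcome POVM with elements $\|\vec{v}_i\|\hat{v}_i=\vec{v}_i$.

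I expect Step 1, choosing the orientation of the generators, to be the main obstacle. The sign of each generator is not determined by the set, and it must be picked so that the minimizing vertex is exactly $0$. I would handle it by defining $\nu$ directly from the facial structure in the direction of $\mathbb{I}$: orient each $\hat{r}$ in $\supp\nu_{\rm even}$ so that $\tr\hat{r}\ge0$, and break ties for $\tr\hat{r}=0$ using the inclusion constraint. I would then verify that the resulting center is $\mathbb{I}/2$ by using $\pm\mathbb{I}\in\mc{Z}$ together with the support-function bound $h_{\mc{Z}}(\mathbb{I})=d$.
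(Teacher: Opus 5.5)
The paper gives no proof of this lemma; it is quoted from Aubrun--Lancien, so there is no internal proof to match. Your argument is a valid self-contained proof. One step in (1)$\Rightarrow$(2) needs an extra line.

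Let $\nu_{\rm e}$ be the even measure with $h_{\mathcal Z}(X)=\int|\langle X,\hat r\rangle|\,d\nu_{\rm e}(\hat r)$. Choose a measurable half-sphere $A\supseteq\{\mathrm{Tr}\,\hat r>0\}$ with $A\cap\{\mathrm{Tr}\,\hat r<0\}=\emptyset$, and set $\nu:=2\nu_{\rm e}|_A$ and $\vec c=\int\hat r\,d\nu$. The bound $h_{\mathcal Z}(\mathbb{I})=d$ only gives $\mathrm{Tr}\,\vec c=\int|\mathrm{Tr}\,\hat r|\,d\nu_{\rm e}=d$. To conclude $\vec c=\mathbb{I}$ you also need $\vec c\le\mathbb{I}$. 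This holds because, by evenness of $\nu_{\rm e}$, $\vec c=\int(\chi_A-\chi_{-A})\hat r\,d\nu_{\rm e}\in\mathcal Z\subset[-\mathbb{I},\mathbb{I}]$. Then $\mathbb{I}-\vec c\ge 0$ has zero trace, so it vanishes.

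No tie-breaking on $\{\mathrm{Tr}\,\hat r=0\}$ is required. Any measurable choice works, and your Step 3 shows afterwards that this set is $\nu$-null, since a nonzero positive operator has positive trace. Once the exact representation $\mathcal K=\{\int q\hat r\,d\nu\}$ is in hand, your Step 3 is correct as written. The (2)$\Rightarrow$(1) direction and the zonotope case are also fine.

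Your route fixes an orientation first and then extracts positivity from the lower bound $\mathcal K\ge 0$. The other natural route gets positivity before orienting anything:
\begin{enumerate}
\item From $\pm\mathbb{I}\in\mathcal Z\subset[-\mathbb{I},\mathbb{I}]$ one has $h_{\mathcal Z}(\Delta)=\mathrm{Tr}\,\Delta$ for every $\Delta\ge 0$, so $\Delta\mapsto\int|\mathrm{Tr}(\Delta\hat r)|\,d\nu_{\rm e}$ is additive on the positive cone.
\item Equality in the triangle inequality for $\Delta=|x\rangle\langle x|+|y\rangle\langle y|$ forces $\langle x|\hat r|x\rangle\langle y|\hat r|y\rangle\ge 0$ almost everywhere. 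Hence $\nu_{\rm e}$ is concentrated on $\mathsf{Pos}\cup(-\mathsf{Pos})$.
\item After orienting into $\mathsf{Pos}$, the identity $\int\hat r\,d\nu=\mathbb{I}$ is read off from $h_{\mathcal Z}(|x\rangle\langle x|)=1$.
\end{enumerate}
That route makes the orientation problem you flagged as the main obstacle disappear. Yours keeps the roles of the hypotheses more visibly separate: $\mathbb{I}\in\mathcal Z$ together with the upper bound fixes the translation, and the lower bound gives positivity. It is equally short once the center check above is added.
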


Every POVM $\Pi$ on measurable space $(\Lambda,\mc{F})$ can be associated with the zonoid $\mc{K}_{\Pi}:=\conv(\Pi(\mc{F}))$.
Since $M\in \conv(\Pi(\mc{F}))$ iff $\mathbb{I}-M\in \conv(\Pi(\mc{F}))$, the set $\mc{K}_{\Pi}$ is symmetric about the element $\mathbb{I}/2$.
That $\mc{Z}_{\Pi}$ is indeed a zonoid follows from Lemma \ref{Lem:Aubrun-Lancien}, or directly from the fact that the convex hull of the image of any vector-valued measure is a zonoid \cite{Bolker-1969a}.
Note that if $\Pi$ has a finite number of outcomes $\{\lambda_1,\cdots,\lambda_n\}$ with effects $\Pi_\lambda$, then $\mc{K}_{\Pi}$ is the zonotope given by
\begin{align}
\label{Eq:POVM-zonotope}
\mc{K}_{\Pi}=\left\{\sum_{i=1}^nt_{i}\Pi_{\lambda_i} | 0\leq t_i\leq 1\right\}.
\end{align}
This equality can be deduced by noting that $\text{conv}(\Pi(\mc{F}))$ is precisely the image of the vertices of the $n$-dimensional centrally-symmetric unit cube under the affine map $\vec{e}_i\mapsto\frac{1}{2}\Pi_{\lambda_i}+ \frac{1}{2}\mathbb{I}$, which is the RHS of Eq. \eqref{Eq:POVM-zonotope}.

The zonoid $\mc{K}_{\Pi}$ provides some insight into the measurement assemblages that can be simulated by $\Pi$ using classical post-processing. 
Within the study of quantum measurements, it is natural to introduce a preorder between POVMs based on whether one can simulate the other by post-processing \cite{Buscemi-2005a, Heinonen-2005a}.
This leads to the following notion of POVM equivalence.
\begin{definition}
      Two POVMs are called \textbf{operationally equivalent} if they can simulate the same collection of measurement assemblages.
\end{definition}
\noindent 
In other words, $\Pi$ and $\Pi'$ are operationally equivalent if any measurement assemblage $\{M_{a|x}\}$ that can be obtained from $\Pi$ using classical post-processing can also be obtained from $\Pi'$, and vice versa.
A connection to the zonoid $\mc{K}_{\Pi}$ is provided by the following lemma, whose proof is given in Appendix \ref{App:zonoid-generator}.
\begin{lemma}
\label{Lem:POVM-zonoid-generator}
    Let $\mc{K}_{\Pi}=\text{conv}(\Pi(\mc{F}))$ be the zonoid generated by POVM $\Pi:\mc{F}\to\text{Pos}(\mathbb{C}^d)$ on a measurable space $(\Lambda,\mc{F})$.
Then there exists a Borel probability measure $\omega$ on $\text{D}(\mathbb{C}^d)$, called a generating measure of $\mc{K}_{\Pi}$, such that
  \begin{align}
\label{Eq:POVM-zonoid-generator}
    \mc{K}_{\Pi}=\!\left\{d\!\!\int\! q(\rho)\rho\;d\omega(\rho)\;\bigg|\; q\in[0,1] \;\text{is measurable}\!\right\},
\end{align}
  Moreover, the POVM $X\mapsto d\int_X\rho d\omega(\rho)$ is operationally equivalent to $\Pi$.
\end{lemma}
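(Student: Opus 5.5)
The plan is to build $\omega$ as the normalized ``trace-density'' of $\Pi$. The trace measure $\tau(X):=\tr[\Pi(X)]$ is a finite positive measure on $(\Lambda,\mc{F})$ with $\tau(\Lambda)=d$. Each entry of $\Pi$ is absolutely continuous with respect to $\tau$, since $0\le\Pi(X)\le \tr[\Pi(X)]\mathbb{I}$. The Radon--Nikodym theorem, applied entrywise (after fixing a basis of $\mathsf{Herm}(\mbb{C}^d)$), then gives a measurable density $F:\Lambda\to\mathsf{Herm}(\mbb{C}^d)$ with
\[
\Pi(X)=\int_X F(\lambda)\,d\tau(\lambda).
\]
Because $\Pi(X)\geq 0$ and $\tr\Pi(X)=\tau(X)$ for all $X$, standard arguments (testing against a countable dense set of vectors and modifying $F$ on a $\tau$-null set) let us take $F(\lambda)\in\tD(\mbb{C}^d)$ for every $\lambda$. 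Then $\omega:=\frac{1}{d}F_*\tau$ is a Borel probability measure on the compact metric space $\tD(\mbb{C}^d)$.

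The next step is to identify $\mc{K}_\Pi$ with the right-hand side of Eq.~\eqref{Eq:POVM-zonoid-generator}, which I would do through the intermediate set
\[
\mc{K}':=\left\{\int_\Lambda g\,d\Pi \;\middle|\; g:\Lambda\to[0,1]\ \text{measurable}\right\}.
\]
First, $\mc{K}'=\mc{K}_\Pi$. Indicators $g=1_X$ give $\Pi(\mc{F})\subset\mc{K}'$, and $\mc{K}'$ is convex, so $\mc{K}_\Pi\subset\mc{K}'$. Conversely, any $g$ is a uniform limit of simple functions with values in $[0,1]$. Each such simple function can be written via its layer-cake decomposition $\sum_j c_j1_{X_j}$ with nested sets $X_j$ and $c_j\geq 0$, $\sum c_j\le 1$, which is a convex combination of elements of $\Pi(\mc{F})$ (including $\Pi(\emptyset)=0$). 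Since $\mc{K}_\Pi$ is closed (it is a zonoid, hence compact), $\int g\,d\Pi\in\mc{K}_\Pi$.

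Second, the pushforward step. For $q$ on $\tD$, change of variables gives
\[
d\int q(\rho)\rho\,d\omega(\rho)=\int q(F(\lambda))\,d\Pi(\lambda)\in\mc{K}'.
\]
For the reverse inclusion, given $g$, set $q:=\mathbb{E}_\tau[g\mid\sigma(F)]$ written as a function of $F$ (Doob--Dynkin), so $q\in[0,1]$. Since $\Pi=F\,d\tau$ and $F$ is $\sigma(F)$-measurable, $\int gF\,d\tau=\int q(F)F\,d\tau$. This conditional-expectation step is the main technical point; the rest is bookkeeping.

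Finally, for operational equivalence, define $\Pi'(Y):=d\int_Y\rho\,d\omega(\rho)$, which equals $\Pi(F^{-1}(Y))$, for Borel $Y\subset\tD$. $\Pi'$ is thus a coarse-graining of $\Pi$ via the deterministic map $F$, so anything simulable by $\Pi'$ is simulable by $\Pi$. Conversely, any post-processing $p(a|x,\lambda)$ of $\Pi$ produces
\[
\int p(a|x,\lambda)\,d\Pi(\lambda)=\int \tilde p(a|x,F(\lambda))\,d\Pi(\lambda),
\]
where $\tilde p(a|x,\cdot)$ is the conditional expectation of $p(a|x,\cdot)$ given $\sigma(F)$ with respect to $\tau$. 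This is again justified by $\Pi$ having the $\sigma(F)$-measurable density $F$. One must also check that $\tilde p$ can be chosen as a genuine stochastic map: nonnegativity and $\sum_a\tilde p=1$ hold $\tau$-a.e. and can be fixed on a null set, and for countably many outcomes $a$ this is straightforward. Hence $\Pi$ and $\Pi'$ simulate the same assemblages.
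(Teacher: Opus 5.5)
Your proof is correct. Its skeleton matches the paper's:
\begin{itemize}
\item the trace measure $\tau$;
\item a Radon--Nikodym density with values in $\text{D}(\mathbb{C}^d)$;
\item the pushforward $\omega=\frac{1}{d}F_*\tau$;
\item a layer-cake plus closedness argument to pass between indicator functions and general $[0,1]$-valued response functions.
\end{itemize}

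The genuine difference is the tool used to transport a response function on $\Lambda$ to one on $\text{D}(\mathbb{C}^d)$. The paper invokes the Disintegration Theorem. This produces fiber measures $\mu_\rho$ on $\phi^{-1}(\rho)$, and $q(\rho)$ is defined as the fiber average of $\chi_E$ or of $p(a|x,\cdot)$. You instead take the conditional expectation $\mathbb{E}_\tau[\,\cdot\mid\sigma(F)]$ and factor it through $F$ by Doob--Dynkin.

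Both constructions produce the same function, since a fiber average is a version of the conditional expectation. Yours, however, needs only that $\tau$ is finite and that $F$ is bounded and $\sigma(F)$-measurable, so that $\mathbb{E}[gF\mid\sigma(F)]=F\,\mathbb{E}[g\mid\sigma(F)]$. By contrast, regular conditional measures are not guaranteed to exist on an arbitrary measurable space $(\Lambda,\mc{F})$. Their existence normally requires standard-Borel or Radon-type hypotheses, which the lemma does not impose. The paper's route gives a concrete fiberwise picture of the averaged response; yours gives full generality with less machinery.

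You also make explicit several points the paper leaves implicit:
\begin{itemize}
\item the density can be chosen $\text{D}(\mathbb{C}^d)$-valued everywhere, not just almost everywhere;
\item the intermediate set $\{\int g\,d\Pi\}$ cleanly handles the convex hull when $\Pi$ has atoms;
\item $\Pi'=\Pi\circ F^{-1}$ is a deterministic coarse-graining of $\Pi$, which gives the easy direction of operational equivalence that the paper never states.
\end{itemize}
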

\noindent 
The representation of $\mc{K}_{\Pi}$ given by Eq. \eqref{Eq:POVM-zonoid-generator} reduces to Eq. \eqref{Eq:POVM-zonotope} for a finite outcome POVM $\{\Pi_{\lambda_i}\}_{i=1}^n$ using the point measure $\omega=\frac{1}{d}\sum_{i=1}^n\tr[\Pi_{\lambda_i}]\delta_{\hat\Pi_{\lambda_i}}$, where $\hat{\Pi}_{\lambda_i}=\Pi_{\lambda_i}/\tr[\Pi_{\lambda_i}]$.
Note also that $\mathbb{I}=\Pi(\tD(\mathbb{C}^d))=d\int\rho d\omega(\rho)$. 

An immediate consequence of Lemma \ref{Lem:POVM-zonoid-generator} is the following.
\begin{corollary}
    $\{M_{a|x}\}\subset \mc{K}_{\Pi}$ whenever $\Pi$ can simulate assemblage $\{M_{a|x}\}$, and so $\mc{K}_{\Pi}=\mc{K}_{\Pi'}$ for any two operationally equivalent POVMs $\Pi$ and $\Pi'$.
\end{corollary}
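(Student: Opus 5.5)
The plan is to prove the two claims in order, and the second follows from the first almost at once.

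For the first claim, suppose $\Pi$ on $(\Lambda,\mc{F})$ simulates $\{M_{a|x}\}$. Then
\[
M_{a|x}=\int_\Lambda p(a|x,\lambda)\,d\Pi(\lambda)
\]
with $0\le p(a|x,\lambda)\le 1$ measurable. I will show that $\int_\Lambda q(\lambda)\,d\Pi(\lambda)\in\mc{K}_\Pi$ for every measurable $q:\Lambda\to[0,1]$; taking $q=p(a|x,\cdot)$ then gives the claim.

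To do this I approximate $q$ uniformly by simple functions. Using a level-set (layer-cake) decomposition, any simple $q_n$ with values in $[0,1]$ can be written as
\[
q_n=\sum_k c_k\,\mathbf{1}_{E_k},\qquad c_k\ge 0,\quad \sum_k c_k\le 1,\quad E_1\supseteq E_2\supseteq\cdots .
\]
Since $\mathbf{1}_\emptyset$ contributes $0=\Pi(\emptyset)$, we may pad to $\sum_k c_k=1$. Then $\int q_n\,d\Pi=\sum_k c_k\Pi(E_k)$ is a convex combination of elements of $\Pi(\mc{F})$, so it lies in $\conv(\Pi(\mc{F}))$.

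Next comes the limit step. The error satisfies $\left\Vert\int(q-q_n)\,d\Pi\right\Vert\le \Vert q-q_n\Vert_\infty\,\Vert\Pi(\Lambda)\Vert=\Vert q-q_n\Vert_\infty$, so $\int q\,d\Pi$ is a limit of points of $\mc{K}_\Pi$. This is the main obstacle: I must confirm that $\mc{K}_\Pi$ is closed. I expect this to follow from its being a zonoid, since the paper treats zonoids as convex bodies and hence compact; this is also implicit in Lemma \ref{Lem:Aubrun-Lancien}. If that reading is in doubt, I can instead invoke the Lyapunov-type fact cited from \cite{Bolker-1969a}, that $\conv(\Pi(\mc{F}))$ is compact for vector measures. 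Either route gives $\int q\,d\Pi\in\mc{K}_\Pi$, and hence $M_{a|x}\in\mc{K}_\Pi$.

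For the second claim, suppose $\Pi$ and $\Pi'$ are operationally equivalent. Every element of $\Pi(\mc{F})$ is an effect of a measurement simulable by $\Pi$: take $M_{0}=\Pi(E)$ and $M_1=\mathbb{I}-\Pi(E)$, which is $\Pi$ followed by the post-processing $\mathbf{1}_E$. By operational equivalence, $\Pi'$ also simulates this measurement, so the first claim gives $\Pi(E)\in\mc{K}_{\Pi'}$. Because $\mc{K}_{\Pi'}$ is convex, $\conv(\Pi(\mc{F}))\subseteq\mc{K}_{\Pi'}$. Exchanging the roles of $\Pi$ and $\Pi'$ gives the reverse inclusion, so $\mc{K}_\Pi=\mc{K}_{\Pi'}$.
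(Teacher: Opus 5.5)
Your proof is correct, but it reaches the first claim by a more direct route than the paper. The paper derives the corollary from Lemma~\ref{Lem:POVM-zonoid-generator}. There, Radon--Nikodym and the disintegration theorem push $\tr[\Pi(\cdot)]/d$ forward to a generating measure $\omega$ on $\tD(\mathbb{C}^d)$. This yields $\mc{K}_\Pi=\{d\int q(\rho)\rho\,d\omega(\rho)\}$, together with the fact that the canonical POVM $X\mapsto d\int_X\rho\,d\omega(\rho)$ reproduces any simulation by $\Pi$ with $[0,1]$-valued responses, so $M_{a|x}\in\mc{K}_\Pi$ is read off directly. You instead stay on $(\Lambda,\mc{F})$ and show $\int q\,d\Pi\in\conv(\Pi(\mc{F}))$ for every measurable $q:\Lambda\to[0,1]$ via the layer-cake decomposition and a uniform limit. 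This is the same device the paper uses in Lemma~\ref{Lem:Zonoid-representation-Appendix} for the second equality of Eq.~\eqref{Eq:zonoid-generator-representation-appendix}, just applied before any change of measure. Your version is more elementary, since it needs no disintegration. The paper's version costs nothing extra once the generating measure is in hand, and that measure is needed anyway for the Bloch-ball representation and Theorem~\ref{Thm:zonoid-unique}. Both routes rest on the closedness of $\conv(\Pi(\mc{F}))$; the inclusion ``$\supseteq$'' in Lemma~\ref{Lem:Zonoid-representation-Appendix} uses it too. The step you flag as the main obstacle is sound even without the zonoid literature. Split $\Lambda$ into the atoms $A_i$ of the finite measure $\tr[\Pi(\cdot)]$ and a non-atomic remainder. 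Then $\Pi(\mc{F})$ is the sum of a Lyapunov range and the continuous image of $\{0,1\}^{\mathbb{N}}$ under $S\mapsto\sum_{i\in S}\Pi(A_i)$, which is continuous by absolute summability. Hence $\Pi(\mc{F})$ is compact, and its convex hull is compact by Carath\'eodory. Your second half spells out exactly what the paper's ``and so'' leaves implicit: each $\Pi(E)$ is an effect of the simulable measurement $\{\Pi(E),\mathbb{I}-\Pi(E)\}$, and convexity and symmetry finish the argument.
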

\noindent The converse of this statement is not true in general.
This is because the simulation of a general measurement assemblage $\{M_{a|x}\}$ by $\Pi$ requires more than just a bounded response function $q(a|x)\in[0,1]$ (which alone is sufficient for membership in $\mc{K}_{\Pi}$);  it further requires that the response function be a normalized probability distribution, i.e. $\sum_{a}q(a|x)=1$.
However, when $\mc{K}_{\Pi}=\mc{K}_{\Pi'}$, the two POVMs are always parents to the same assemblages of dichotmic POVMs.
This is an implication of the following proposition, which provides a sufficient condition for the compatibility of dichotomic measurements.

\begin{proposition}
\label{Prop:dichotomic-compatibility}
    Suppose that $\{M_{x}\}\cup\{\mathbb{I}\}\subset\mc K$ for some zonoid $\mc{K}\subset[0,\mathbb{I}]$ that is symmetric about $\mathbb{I}/2$.  Then the assemblage of dichotomic measurements $\{M_x,\mathbb{I}-M_x\}_x$ is compatible.
\end{proposition}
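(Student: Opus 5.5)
Since $\mc K\subset[0,\mathbb{I}]$ is a zonoid symmetric about $\mathbb{I}/2$ and contains $0$ and $\mathbb{I}$ (it contains $\mathbb{I}$ by hypothesis, and hence $\mathbb{I}-\mathbb{I}=0$ by the symmetry), the set $\mc Z:=2\mc K-\mathbb{I}\subset[-\mathbb{I},\mathbb{I}]$ is a centrally-symmetric zonoid containing $\pm\mathbb{I}$. Lemma \ref{Lem:Aubrun-Lancien} then gives a POVM $\Pi$ on some $(\Lambda,\mc F)$ with $\conv(\Pi(\mc F))=\mc K$. I will show that this $\Pi$ is a parent POVM for every pair $\{M_x,\mathbb{I}-M_x\}$.

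Next I use Lemma \ref{Lem:POVM-zonoid-generator} to replace $\Pi$ by the operationally equivalent POVM $\tilde\Pi(X)=d\int_X\rho\, d\omega(\rho)$ on the Borel sets of $\tD(\mathbb{C}^d)$, for which
\[
\mc K=\Bigl\{d\!\int q(\rho)\rho\,d\omega(\rho)\;\Big|\;q\in[0,1]\ \text{measurable}\Bigr\}.
\]
Since $M_x\in\mc K$, there is a measurable $q_x:\tD(\mathbb{C}^d)\to[0,1]$ with $M_x=d\int q_x(\rho)\rho\,d\omega(\rho)$. I then set $p(0|x,\rho)=q_x(\rho)$ and $p(1|x,\rho)=1-q_x(\rho)$. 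This is a valid stochastic post-processing because it is nonnegative and sums to one. Using $\mathbb{I}=d\int\rho\, d\omega$, we get
\[
d\int(1-q_x)\rho\,d\omega=\mathbb{I}-M_x .
\]
So $\tilde\Pi$ simulates the whole assemblage, which gives the representation \eqref{Eq:compatibility-defn}. Because $\tilde\Pi$ is operationally equivalent to $\Pi$, the parent can equally be taken to be $\Pi$, though $\tilde\Pi$ is already enough. The same argument works for an infinite index set $x$, since each $q_x$ is chosen independently and no finiteness is needed.

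The dichotomic structure is what makes this work. Normalization only requires $p(0|x,\cdot)+p(1|x,\cdot)=1$, and this holds automatically when $p(1|x,\cdot)=1-q_x$. The corresponding step fails for three or more outcomes, as the paper remarks.

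The main obstacle is justifying that membership in $\mc K=\conv(\Pi(\mc F))$ corresponds to an integral against a $[0,1]$-valued measurable response function. That is exactly the content of Lemma \ref{Lem:POVM-zonoid-generator}, including the closedness of that set of integrals, which I take as given. If one wants to avoid it, a direct route is Lyapunov-type: the range $\{\int f\,d\Pi : 0\le f\le 1\}$ is convex, it contains $\Pi(\mc F)$, and it is weak-* compact and hence closed. It therefore equals the closed convex hull, so each $M_x$ is some $\int f_x\, d\Pi$, and the post-processing is built as above.
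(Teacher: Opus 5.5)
Your proof is correct and is essentially the paper's argument: you obtain a parent $\Pi$ with $\conv(\Pi(\mc F))=\mc K$ from Lemma~\ref{Lem:Aubrun-Lancien}, write each $M_x=\int f_x\,d\Pi$ with $0\le f_x\le 1$, and give the complementary outcome the response $1-f_x$. The only difference is that the paper gets $f_x=\sum_i q_{i|x}\chi_{E_{i|x}}$ directly from the definition of the convex hull ($M_x=\sum_i q_{i|x}\Pi(E_{i|x})$), so neither the detour through Lemma~\ref{Lem:POVM-zonoid-generator} nor the weak-$*$ compactness argument in your last paragraph is needed.
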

\begin{proof}
    
Under the assumptions of the proposition, the zonoid $\mc{Z}=2\mc{K}-\mathbb{I}\subset[-\mathbb{I},\mathbb{I}]$ is centrally-symmetric and containing $\pm \mathbb{I}$.
     By Lemma~\ref{Lem:Aubrun-Lancien}, there exists a POVM $\Pi:\mc{F}\to\text{Pos}(\mathbb{C}^d)$ such that $\mc{K}=\conv(\Pi(\mc{F}))$.
     Hence, for each measurement choice $x$, there exist measurable sets $\{E_{i|x}\}$ and probabilities $\{q_{i|x}\}$ such that
     \begin{align}
     M_{x} = \sum_i q_{i|x}\Pi(E_{i|x}), \qquad \sum_i q_{i|x}=1.
     \end{align}
     Define $p(0|x,\lambda) := \sum_i q_{i|x}\chi_{E_{i|x}}(\lambda)$ and $p(1|x,\lambda)  := 1-p(0|x,\lambda)$.
     These are valid response functions that satisfy
     \begin{align}
     \int_{\Lambda}p(0|x,\lambda)d\Pi(\lambda) & = \sum_iq_{i|x}\Pi(E_{i|x}) = M_{x}, \notag \\
     \int_{\Lambda}p(1|x,\lambda)d\Pi(\lambda) & = \mathbb{I}-M_{x}.
     \end{align}
     Thus, $\Pi$ is able to simulate $\{M_{x},\mathbb{I}-M_x\}_x$.
\end{proof}

Proposition~\ref{Prop:dichotomic-compatibility} replaces the search over postprocessing functions in the definition of compatible assemblages by the single geometric condition that all child effects lie in a zonoid $\mc{K}$.
This is particularly useful for families containing infinitely many measurements, for which the standard finite-dimensional SDP formulation of joint measurability cannot be directly applied.
In practice, one could first search for a  zonotope $\mc{K}$ that contains a finite subset of effects, and then certify compatibility by checking whether the entire family of effects is also contained in it. 
Thus, an infinite collection of postprocessing constraints is replaced by a single convex-set inclusion problem.

This geometric inclusion can also be checked numerically in many cases of
interest. 
A finite-generator zonotope has only finitely many supporting facets, so containment reduces to checking the support of the target set along these facet directions. 
For example, we showed in Ref.~\cite{Zhang-2025a} that the smallest amount of depolarizing noise needed to simulate all qubit spin measurements by some finite-outcome parent POVM $\Pi$ can be computed by a finite optimization over the facets of a judiciously chosen zonotope.
This provides a practical numerical method for certifying compatibility of an entire continuous family of measurements. 
For the qubit examples considered in the next sections, the geometry is sufficiently simple that the corresponding inclusion problem can, in fact, be solved analytically.

In the method just discussed, one considers the inclusion of measurement effects in zonotopes, which have finite sets of generators.
However, in general there are compatible assemblages whose effects are contained only in the limiting structure of a zonoid.
While zonoids can be approximated arbitrarily well by zonotopes, no single zonotope might contain the entire assemblage; in other words, there is no finite-outcome parent POVM that can simulate the entire assemblage
This begs the question of whether measurement compatibility always holds in the limit of convergent measurement assemblages.
The following closure theorem ensures that this is indeed the case.


\begin{theorem}
\label{Thm:dichotomic-closure}
    Let $(\{M_{x}^{(n)},\mathbb{I}-M_{x}^{(n)}\}_x)_n$ be a sequence of compatible dichotomic measurement assemblages such that each $M_{x}^{(n)}$ converges pointwise to $M_{x}$ (with the set of measurement settings $\mc{X}$ being potentially continuous).
    Then $\{M_{x},\mathbb{I}-M_x\}_x$ is a compatible measurement assemblage.
\end{theorem}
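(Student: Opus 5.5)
The idea is to pass to the limit at the level of zonoids rather than parent POVMs, and then invoke Proposition~\ref{Prop:dichotomic-compatibility}.

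For each $n$, let $\Pi^{(n)}$ be a parent POVM for $\{M_x^{(n)},\mathbb{I}-M_x^{(n)}\}_x$. By the Corollary following Lemma~\ref{Lem:POVM-zonoid-generator}, every $M_x^{(n)}$ lies in $\mc{K}_n:=\mc{K}_{\Pi^{(n)}}$. Since $\mathbb{I}=\Pi^{(n)}(\Lambda)$, the set $\mc{K}_n$ also contains $\mathbb{I}$ and $0$. Moreover, $\mc{K}_n\subset[0,\mathbb{I}]$ is a zonoid symmetric about $\mathbb{I}/2$.

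The sets $\mc{K}_n$ are nonempty compact convex subsets of the compact set $[0,\mathbb{I}]$ in the finite-dimensional space $\mathsf{Herm}(\mathbb{C}^d)$. By the Blaschke selection theorem, some subsequence $\mc{K}_{n_j}$ converges in Hausdorff distance $d_H$ to a compact convex set $\mc{K}\subset[0,\mathbb{I}]$. We then verify three properties of $\mc{K}$.
\begin{enumerate}
\item $\mc{K}$ is a zonoid. Each $\mc{K}_{n_j}$ is a Hausdorff limit of zonotopes, so a diagonal argument shows $\mc{K}$ is one too. Hence the class of zonoids is closed under $d_H$.
\item $\mc{K}$ is symmetric about $\mathbb{I}/2$. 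The reflection $Y\mapsto\mathbb{I}-Y$ is an isometry, so it commutes with Hausdorff limits.
\item $\mc{K}$ contains $\mathbb{I}$ and every $M_x$. Take $x\in\mc{X}$. Since $M_x^{(n_j)}\in\mc{K}_{n_j}$, the distance from $M_x$ to $\mc{K}$ is at most $\|M_x-M_x^{(n_j)}\|+d_H(\mc{K}_{n_j},\mc{K})$. Both terms tend to $0$, and $\mc{K}$ is closed, so $M_x\in\mc{K}$.
\end{enumerate}
Pointwise convergence suffices in the third step, because the subsequence is chosen independently of $x$. This uniformity is exactly what a direct limit of parent POVMs or response functions would lack when $\mc{X}$ is uncountable.

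Proposition~\ref{Prop:dichotomic-compatibility} now applies to $\mc{K}$ and gives compatibility of $\{M_x,\mathbb{I}-M_x\}_x$. That proposition in turn relies on Lemma~\ref{Lem:Aubrun-Lancien} to realize $2\mc{K}-\mathbb{I}$ as coming from a genuine (possibly infinite-outcome) POVM.

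The main obstacle I anticipate is the zonoid-closure step, together with making sure the hypotheses of Lemma~\ref{Lem:Aubrun-Lancien} are exactly met. The lemma requires a centrally symmetric zonoid $\mc{Z}=2\mc{K}-\mathbb{I}$ contained in $[-\mathbb{I},\mathbb{I}]$ and containing $\pm\mathbb{I}$; these follow from the second and third properties above. The diagonal argument for closure is routine. Alternatively, one can use the generating-measure picture of Lemma~\ref{Lem:POVM-zonoid-generator}: the probability measures $\omega_n$ on the compact set $\text{D}(\mathbb{C}^d)$ have a weak-$*$ convergent subsequence by Prokhorov's theorem, and the limit measure $\omega$ generates a zonoid. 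Continuity of the support functions then shows this zonoid equals $\mc{K}$, since $h$ is an integral of a continuous function against $\omega_n$. I would use the Blaschke–diagonal route and keep the measure route as a check.
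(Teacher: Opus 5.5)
Your proposal is correct and follows the same route as the paper. You apply Blaschke selection to the parent zonoids $\mc{K}_{\Pi^{(n)}}$, pass $\mathbb{I}$, the effects $M_x$, and the symmetry about $\mathbb{I}/2$ to the Hausdorff limit, and then invoke Proposition~\ref{Prop:dichotomic-compatibility}; your diagonal argument that Hausdorff limits of zonoids are zonoids, and your distance estimate for $M_x\in\mc{K}$, make precise steps that the paper states without justification.
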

\begin{proof}
    Let $\mc{K}_{\Pi}^{(n)}$ be the zonoid generated by the parent POVM for the compatible measurements $\{M_{x}^{(n)},\mathbb{I}-M_{x}^{(n)}\}_x$.
    Since each $\mc{K}_{\Pi}^{(n)}$ is a convex set subset of $[0,\mathbb{I}]$, the Blaschke Selection Theorem asserts the existence of a subsequence $(\mc{K}_{\Pi}^{(n_i)})_i$ that converges to some zonoid $\mc K$ with respect to the Hausdorff metric \cite{Kelly-1979a}.
    Hence, for each $x$, the subsequence $M_{x}^{(n_i)}$ converges to an element in $\mc{K}$, which must be $M_{x}$ since a convergent sequence has the same limit point as any of its subsequences.
    Thus, $\{M_{x}\}\subset\mc{K}$.
    Clearly we have $\mathbb{I}\in\mc{K}\subset[0,\mathbb{I}]$ with $\mc{K}$ being symmetric about $\mathbb{I}/2$, since these properties hold for each $\mc{K}_{\Pi}^{(n)}$.  
    Therefore, $\mc{K}=\text{conv}(\Pi(\mc{F}))$ for some POVM $\Pi$ by Proposition \ref{Prop:dichotomic-compatibility}, which means that the assemblage $\{M_{x},\mathbb{I}-M_x\}_x$ is compatible.
\end{proof}
\noindent It is worth noting that if $(\mc{K}^{(n)}\subset[0,\mathbb{I}])_n$ is any sequence of zonoids symmetric about $\mathbb{I}/2$ that converges to some zonoid $\mc{K}$ with $\{\mathbb{I}\}\cup\{M_{x}^{(n)}\}\subset\mc{K}^{(n)}$ , then pointwise convergence $(\{M_{x}^{(n)}\})_n\to \{M_{x}\}$ implies uniform convergence.
This is because $d_H(\mc{K}^{(n)},\mc{K})<\epsilon$ means that all elements of $\mc{K}^{(n)}\supset \{M_x^{(n)}\}$ are $\epsilon$ close to their limit points.


The problem of convergence among compatible measurements will be studied more deeply in forthcoming work.
Specifically, it can be shown that the closure of compatibility also holds for non-dichotomic measurements \cite{Kim-2026a}.

\medskip


\subsection{Qubit Compatibility}
\label{Sect:qubit-compatibility}

Elements in $\text{Herm}(\mathbb{C}^2)$ can be represented by vectors in $\mathbb{R}^4$ using the Pauli basis as
\begin{align}
M=\alpha\mathbb{I}+\frac{1}{2}\vec{m}\cdot\vec{\sigma} \leftrightarrow \vec{\tau}=(\alpha,\vec{m}).
\end{align}
The vector $\vec{m}=(m_x,m_y,m_z)\in\mathbb{R}^3$ is called the Bloch vector of $M$, and we include the factor of $1/2$ in the expansion of $M$ so that $m_i=\tr[M\sigma_i]$ for $i=x,y,z$.
A general qubit POVM $\{M_{a|x}\}$ can then be represented by a family of four vectors $\{\vec{\tau}_{a|x}=(\alpha_{a|x},\vec{m}_{a|x})\}$.

An important type of qubit measurement is one that is unbiased and dichotomic, often referred to as a \textit{noisy spin measurement}.
Unbiased refers to the fact that each outcome has equal probability when measuring the maximally mixed state $\mathbb{I}/2$.
Every noisy spin measurement is characterized by a Bloch vector $\vec{m}\in B^3$ and has two POVM elements 
\begin{align}
M_{\vec{m}}&:=\frac{1}{2}\left(\mathbb{I}+ \vec{m}\cdot\vec{\sigma}\right), &\mathbb{I}-M_{\vec{m}}&=\frac{1}{2}\left(\mbb{I}-\vec{m}\cdot\vec{\sigma}\right).
\end{align}
We will henceforth denote an assemblage of noisy spin measurements simply as the collection of effects $\{M_{\vec{m}}\}$, with each $M_{\vec{m}}$ representing the dichotomic measurement $\{M_{\vec{m}},\mathbb{I}-M_{\vec{m}}\}$.
Note that projective measurements are noiseless spin measurements and correspond to unit Bloch vectors, 
\begin{align}
\op{\hat{m}}{\hat{m}} & =M_{\hat{m}}=\frac{1}{2}(\mbb{I}+\hat{m}\cdot\vec{\sigma}),\notag \\
\op{-\hat{m}}{-\hat{m}} & =\mathbb{I}-M_{\hat{m}}=\frac{1}{2}(\mbb{I}-\hat{m}\cdot\vec{\sigma}).
\end{align}

The results of the previous section can be applied to describe qubit compatibility as zonoid inclusion of the four-vectors $\vec{\tau}$.
However, this is not always easy to use in practice.
In this section, we will translate some of the compatibility conditions into conditions on Bloch vectors and zonoids in $\mbb{R}^3$.
Thus, we will be considering two different zonoids: a zonoid in $\text{Pos}(\mathbb{C}^2)$, which we denote by $\mc{K}_{\Pi}$, and a zonid in $\mathbb{R}^3$, which we denote by $\mc{Z}$.
Our goal will be to understand the relationship between these two zonoids.
This will ultimately provide a more transparent geometric condition that can help identify incompatibility-breaking channels, as carried out in Section \ref{Sect:Incompatibility-breaking-qubit}. 

Lemma~\ref{Lem:POVM-zonoid-generator} and its translation to the Bloch sphere (see Corollary \ref{Cor:Zonoid-representation-Appendix-qubit}) says that every family of compatible qubit measurements has effects $\{M_{a|x}\}$ that can be written as
\begin{align}
    M_{a|x}=\int_{B^3} p(a|x,\vec{s})(\mathbb{I}+\vec{s}\cdot\vec{\sigma})d\gamma(\vec{s}),
\end{align}
where $\gamma$ is a probability measure on $B^3$ that generates the zonoid $\mc{K}_{\Pi}$ for some parent POVM $\Pi$.
Let $\mc{Z}:=Bl[\mc{K}_{\Pi}]$ be the zonoid in $\mbb{R}^3$ obtained by projecting the elements of $\mc{K}_{\Pi}$ onto their Bloch vectors.
The completion relation $\mathbb{I}=\int(\mathbb{I}+\vec{s}\cdot\vec{\sigma})d\gamma(\vec{s})$ implies that $\mc{Z}$ is centrally-symmetric.
The positivity condition $\Vert\vec{s}\Vert\leq 1$ imposes a constraint on the generator length $\ell(\mc{Z})$, as we now observe.
Starting with the support function $h_{\mc{Z}}$, we have
\begin{align}
\label{Eq:qubit-support-function-bound}
    h_{\mc{Z}}(\hat{n})\!=\!\sup_{M\in\mc{K}_{\Pi}}\!\!\hat{n}\!\cdot\! Bl[M]&=2 \sup_{q\in[0,1]}\!\int\! q(\vec{s})\hat{n}\cdot\vec{s}d\gamma(\vec{s})\notag\\
    &=2\!\int_{H_{\hat{n}}}\!\!\!\!\hat{n}\!\cdot\!\vec{s}\;d\gamma(\vec{s})=\!\!\int_{B^3}\!\!\!|\hat{n}\cdot\vec{s}|d\gamma(\vec{s}),
\end{align}
where $H_{\hat n}:=\{\vec s\in B^3:\vec s\cdot\hat n>0\}$ and the last equality follows from the central symmetry $\int_{B^3}\vec s d\gamma(\vec s)=0$.
To compute $\ell(\mc{Z})$, we integrate both sides with the uniform measure on $S^2$, which yields
\begin{align}
\label{Eq:qubit-generator-length-bound}
    \ell(\mc{Z})&=4\int_{S^2}d\mu(\hat{n})\int_{B^3}d\gamma(\vec{s}) |\hat{n}\cdot\vec{s}|\notag\\
    &=4\int_{B^3}d\gamma(\vec{s})\int_{S^2} |\hat{n}\cdot\vec{s}|d\mu(\hat{n})=2\int_{B^3}d\gamma(\vec{s})\Vert\vec{s}\Vert\leq 2.
\end{align}
We summarize these findings in the following proposition, which also has a converse for dichotomic measurements.

\begin{proposition}
\label{prop:qubits-zonoid-necessary}
The Bloch vectors $\{\vec{m}_{a|x}\}$ of any compatible assemblage $\{M_{a|x}\}$ must belong to some centrally-symmetric zonoid $\mc Z\subset\mbb{R}^3$ with $\ell(\mc Z)\leq 2$. 
Conversely, for an assemblage of noisy spin measurements $\{M_{\vec{m}}\}$, this condition is also
sufficient for compatibility. 
\end{proposition}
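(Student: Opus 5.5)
The necessity direction is essentially the computation just carried out, so I will organize it around the parent POVM. If $\{M_{a|x}\}$ is compatible with parent $\Pi$, then each $M_{a|x}$ lies in $\mc{K}_{\Pi}$ by the corollary to Lemma~\ref{Lem:POVM-zonoid-generator}. Hence every Bloch vector $\vec{m}_{a|x}$ lies in $\mc{Z}=Bl[\mc{K}_{\Pi}]$.

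This $\mc{Z}$ is a zonoid, since it is the linear image of the zonoid $\mc{K}_{\Pi}$. It is centrally symmetric about the origin by the completeness relation $\int \vec{s}\,d\gamma(\vec{s})=0$. Its generator length satisfies $\ell(\mc{Z})\le 2$ by Eqs.~\eqref{Eq:qubit-support-function-bound}--\eqref{Eq:qubit-generator-length-bound}. The only care needed is to justify the representation via the probability measure $\gamma$ on $B^3$, which I take from Corollary~\ref{Cor:Zonoid-representation-Appendix-qubit}.

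For sufficiency, suppose the Bloch vectors $\vec{m}$ of the noisy spin effects lie in a centrally symmetric zonoid $\mc{Z}\subset\mbb{R}^3$ with $\ell(\mc{Z})\le 2$. I would build a parent directly. Write the support function in its even-measure form~\eqref{Eq:support-representation}, $h_{\mc{Z}}(\hat n)=\tfrac12\int_{S^2}|\hat n\cdot\hat r|\,d\nu_{\text{even}}(\hat r)$. The linear term vanishes because $\vec c=0$ for a zonoid centered at the origin. Integrating over $\hat n$ gives $\ell(\mc{Z})=\nu_{\text{even}}(S^2)=:L\le 2$.

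Now define the probability measure $\gamma$ on $B^3$ as follows. Put mass $\tfrac{1}{2}\,d\nu_{\text{even}}(\hat r)$ at the points $\hat r\in S^2$, which is total mass $L/2\le1$. Put the remaining mass $1-L/2$ at the origin $\vec s=0$. The associated POVM is $X\mapsto\int_X(\mathbb{I}+\vec s\cdot\vec\sigma)\,d\gamma(\vec s)$. It has positive effects because $\Vert\vec s\Vert\le1$. It sums to $\mathbb{I}$ because $\gamma$ is a probability measure and $\int\vec s\,d\gamma=0$, which holds since $\nu_{\text{even}}$ is even.

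By the computation~\eqref{Eq:qubit-support-function-bound}, the Bloch zonoid of this POVM has support function $\int|\hat n\cdot\vec s|\,d\gamma=\tfrac12\int|\hat n\cdot\hat r|\,d\nu_{\text{even}}=h_{\mc{Z}}(\hat n)$. So its Bloch projection is exactly $\mc{Z}$.

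It remains to lift from Bloch vectors to effects. Every element of $\mc{K}_{\Pi}$ has the form $\int q(\vec s)(\mathbb{I}+\vec s\cdot\vec\sigma)\,d\gamma$, with trace part $\alpha=\int q\,d\gamma$ and Bloch part $2\int q\,\vec s\,d\gamma$. I need, for each $\vec m\in\mc{Z}$, some $q$ achieving Bloch vector $\vec m$ with $\alpha=1/2$, so that the element equals $M_{\vec m}$. Given any $q$ realizing $\vec m$, the reflected function $q'(\vec s)=1-q(-\vec s)$ realizes the same $\vec m$ with $\alpha'=1-\alpha$. Here I use that $\gamma$ is symmetric and that $\int\vec s\,d\gamma=0$. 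Averaging, $\tfrac12(q+q')$ realizes $\vec m$ with $\alpha=\tfrac12$.

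Hence $M_{\vec m}\in\mc{K}_{\Pi}$ for all $\vec m$, and $\mathbb{I}\in\mc{K}_{\Pi}$ trivially. Proposition~\ref{Prop:dichotomic-compatibility} then gives compatibility.

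The main obstacle is this lifting step, showing that the unbiasedness constraint $\alpha=1/2$ can always be met. The symmetrization trick resolves it. A secondary point is the identification $\ell(\mc{Z})=\nu_{\text{even}}(S^2)$ and the uniqueness of the even generating measure, which guarantees that the support-function match implies equality of zonoids. Both follow from the facts already recalled in Section~\ref{Sect:Zonotopes-Zonoids}.
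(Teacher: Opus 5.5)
Your proof is correct and follows essentially the paper's route. Necessity is the same support-function and generator-length computation for $Bl[\mc{K}_\Pi]$, and for sufficiency you, like the paper, build a parent from rank-one effects $\mathbb{I}+\hat r\cdot\vec\sigma$ weighted by half a generating measure of $\mc Z$, then conclude with Proposition~\ref{Prop:dichotomic-compatibility}. The differences are only bookkeeping: you absorb the missing mass $1-\ell(\mc Z)/2$ into a trivial outcome at the origin rather than rescaling $\mc Z$ to $\ell(\mc Z)=2$, and you enforce $\tr[M_{\vec m}]=1$ by averaging $q$ with $1-q(-\,\cdot)$ over the even measure, whereas in the paper's $\pm$-doubled parent $\mc K_+ + \mc K_-$ (which, for $\ell(\mc Z)=2$, coarse-grains to exactly your POVM) unbiasedness is automatic.
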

\begin{proof}
    It remains to prove the converse.
    Consider a family of Bloch vectors $\{\vec{m}\}$ that belong to some centrally-symmetric zonoid $\mc{Z}\subset\mathbb{R}^3$ satisfying $\ell(\mc{Z})\leq 2$.
    By simply rescaling $\mc{Z}$ by the factor $2/\ell(\mc{Z})$, we can always assume without loss of generality that $\ell(\mc{Z})=2$.
    As shown in Appendix \ref{App:zonoid-generator}, there exists a measure $\nu$ on $S^2$ such that $\mc{Z}=\{\int_{S^2}q(\hat{s})\hat{s}d\nu(\hat{s})\;|\; q\in[0,1]\}$.
   Its support function is given by $h_{\mc{Z}}(\hat{n})=\frac{1}{2}\int_{S^2}|\langle\hat{n},\hat{s}\rangle |d\nu(\hat{s})$.
   Integrating both sides of this over $S^2$ with the uniform measure $\mu$ yields 
   \begin{equation}
   2=\ell(\mc{Z})=4\int_{S^2}h_{\mc{Z}}(\hat{n})d\mu(\hat{n})=\int_{S^2}d\nu(\hat{s}).
   \end{equation}
   Hence, $\nu/2$ is a probability measure, and we use it to construct the two positive-operator zonoids 
   \begin{equation}
       \mc{K}_{\pm}=\left\{\int_{S^2}q(\hat{s})\frac{\mathbb{I}\pm\hat{s}\cdot\vec{\sigma}}{2}\frac{d\nu(\hat{s})}{2}\;\big|\;q\in[0,1]\right\}
       \end{equation}
       and their Minkowski sum $\mc{K}=\mc{K}_++\mc{K}_-$.
       Clearly $\mathbb{I}\in\mc{K}$, and by construction both $M_{\vec{m}}$ and $\mathbb{I}-M_{\vec{m}}$ are elements of $\mc{K}$ for all $\vec{m}$.
       Indeed, if $\vec{m}=\int q(\hat{s})\hat{s}d\nu(\hat{s})$, then
       \[M_{\vec{m}}=\!\int\! q(\hat{s})\frac{\mathbb{I}+\hat{s}\cdot\vec{\sigma}}{2}\frac{d\nu(\hat{s})}{2}+\!\int (1-q(\hat{s}))\frac{\mathbb{I}-\hat{s}\cdot\vec{\sigma}}{2}\frac{d\nu(\hat{s})}{2},\]
where we use the fact that $\int\hat{s}d\nu(\hat{s})=0$ since $\mc{Z}$ is centrally-symmetric.
   By Proposition \ref{Prop:dichotomic-compatibility}, compatibility of the assemblage $\{M_{\vec{m}}\}$ will follow if $\mc{K}\subset[0,\mathbb{I}]$. 
   To see this, note that every element of $\mc{K}$ has the form $\int \bm{[}a(\hat{s})\op{\hat{s}}{\hat{s}}+b(\hat{s})\op{\!-\!\hat{s}}{-\hat{s}}\bm{]}\frac{d\nu(\hat{s})}{2}$ with $0\leq a,b\leq 1$.
   Each term in the integral belongs to $[0,\mathbb{I}]$, and since $\nu/2$ is a probability measure, so will the full integral.
\end{proof}

While Proposition \ref{prop:qubits-zonoid-necessary} provides an elegant characterization of compatibility for noisy spin measurements, it does not provide a direct method for constructing the parent POVM and response function.
This will be remedied later in Theorem~\ref{Thm:incompatibility-breaking} where we present a more explicit and geometrically-intuitive parent POVM by zonotope approximation.
In the limit of these approximations, one obtains a parent zonoid like the one described in Proposition \ref{prop:qubits-zonoid-necessary}.
A very natural question is whether this limit zonoid $\mc{K}$ is unique.

Remarkably, when the bound $\ell(\mc{Z})\leq 2$ is saturated and we consider a maximal family of noisy spin measurements, the parent zonoid is completely fixed.

\begin{theorem}
\label{Thm:zonoid-unique}
Let $\mathcal{Z}\subset\mathbb{R}^3$ be a centrally-symmetric zonoid with $\ell(\mc{Z})=2$.  
Then there exists a unique probability measure $\nu$ on $S^2$ such that the POVM $\Pi_{\mc{Z}}:\mf{B}(S^2)\to\text{Pos}(\mathbb{C}^2)$ defined by $\Pi_{\mc{Z}}(X)=\int_X(\mathbb{I}+\hat{s}\cdot\vec{\sigma})d\nu(\hat{s})$  simulates the assemblage $\{M_{\vec{m}}\}_{\vec{m}\in\mc{Z}}$.
Moreover, any other POVM that can simulate the assemblage is operationally equivalent to $\Pi_{\mc{Z}}$.
\end{theorem}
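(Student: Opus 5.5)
The plan has three parts: construct $\nu$ from the even generating measure of $\mc{Z}$; show that any parent POVM for the assemblage has a generating measure concentrated on pure states with the same even part; and then show, using the boundary effects, that this generating measure must itself be even.

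\textbf{Existence.} By Appendix~\ref{App:zonoid-generator} and Eq.~\eqref{Eq:support-representation}, the centrally-symmetric zonoid $\mc{Z}$ has a unique even generating measure $\nu_{\text{even}}$ on $S^2$. The computation in the proof of Proposition~\ref{prop:qubits-zonoid-necessary} gives $\ell(\mc{Z})=\int d\nu_{\text{even}}=2$. I will set $\nu:=\nu_{\text{even}}/2$, which is an even probability measure. Then $\Pi_{\mc{Z}}(S^2)=\mathbb{I}$, because $\int\hat{s}\,d\nu=0$. The Minkowski-sum argument of Proposition~\ref{prop:qubits-zonoid-necessary}, combined with Proposition~\ref{Prop:dichotomic-compatibility}, shows that $\Pi_{\mc{Z}}$ simulates every $M_{\vec{m}}$ with $\vec{m}\in\mc{Z}$. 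That proof used $\nu$ together with its antipodal reflection; since $\nu$ is even, the two pieces combine into the single POVM $\Pi_{\mc{Z}}$.

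\textbf{Any parent is pure and has the right even part.} Let $\Pi'$ be any parent POVM for $\{M_{\vec m}\}_{\vec m\in\mc{Z}}$. By Lemma~\ref{Lem:POVM-zonoid-generator} (in its Bloch form), $\Pi'$ is operationally equivalent to a POVM with generating probability measure $\gamma$ on $B^3$, and $\mc{Z}\subset\mc{Z}':=Bl[\mc{K}_{\Pi'}]$. The steps are:
\begin{enumerate}
\item Inclusion gives $h_{\mc{Z}}\le h_{\mc{Z}'}$ pointwise.
\item Eq.~\eqref{Eq:qubit-generator-length-bound} gives $\ell(\mc{Z}')=2\int\Vert\vec s\Vert d\gamma\le 2=\ell(\mc{Z})$.
\item Hence $\int(h_{\mc{Z}'}-h_{\mc{Z}})\,d\mu\le 0$ with a nonnegative continuous integrand, so $h_{\mc{Z}'}=h_{\mc{Z}}$ and $\mc{Z}'=\mc{Z}$.
\item Also $\int\Vert\vec s\Vert d\gamma=1$, so $\gamma$ is concentrated on $S^2$, i.e.\ on pure states.
\item Eq.~\eqref{Eq:qubit-support-function-bound} then gives $h_{\mc{Z}}(\hat n)=\int_{S^2}|\hat n\cdot\hat s|\,d\gamma$.
\item By uniqueness of the even generating measure, the even part of $\gamma$ equals $\nu$.
\end{enumerate}

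\textbf{Evenness of $\gamma$ from the boundary effects.} This is where simulating the whole assemblage, and not merely containing it in a zonoid, is used. Fix $\hat n$ and a point $\vec m\in\mc{Z}$ attaining $h_{\mc{Z}}(\hat n)$. Write $M_{\vec m}=\int p(\hat s)(\mathbb{I}+\hat s\cdot\vec\sigma)\,d\gamma$ with $0\le p\le 1$. Then
\begin{itemize}
\item the Bloch condition $\hat n\cdot\vec m=h_{\mc{Z}}(\hat n)=2\int_{H_{\hat n}}\hat n\cdot\hat s\,d\gamma$ forces $p=1$ $\gamma$-a.e.\ on $H_{\hat n}$ and $p=0$ a.e.\ on $-H_{\hat n}$;
\item the unbiasedness condition $\tr M_{\vec m}=1$ forces $\int p\,d\gamma=1/2$.
\end{itemize}
Together these give $\gamma(H_{\hat n})\le 1/2$ for every $\hat n$, and applying this to $-\hat n$ as well, $\gamma(H_{\hat n})\le 1/2$ and $\gamma(-H_{\hat n})\le 1/2$. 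This holds for all $\hat n$, apart from a possible loss on the great circle $\hat n^\perp$.

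From here I would first show $\gamma(\hat n^\perp)=0$ for almost every $\hat n$ (only countably many great circles can carry positive mass). That upgrades the inequalities to $\gamma(H_{\hat n})=\gamma(-H_{\hat n})$ for a.e.\ $\hat n$. So the odd part $\gamma_{\text{odd}}=\tfrac12(\gamma-\gamma_-)$ has vanishing hemispherical transform $\hat n\mapsto\int\mathrm{sgn}(\hat n\cdot\hat s)\,d\gamma_{\text{odd}}$.

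\textbf{Main obstacle.} The hardest step is the injectivity of the hemispherical transform on odd measures. I would attempt it with Funk--Hecke: the kernel $\mathrm{sgn}(t)$ has nonzero Legendre coefficients in every odd degree, since $\int_0^1P_{2k+1}(t)\,dt\neq0$. Then every odd spherical-harmonic coefficient of $\gamma_{\text{odd}}$ vanishes, so $\gamma_{\text{odd}}=0$.

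\textbf{Conclusion.} Thus $\gamma=\nu$, and Lemma~\ref{Lem:POVM-zonoid-generator} gives that $\Pi'$ is operationally equivalent to $\Pi_{\mc{Z}}$. Applying the same argument to any $\nu'$ of the stated form yields $\nu'=\nu$, which is the uniqueness claim.
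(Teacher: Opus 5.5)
Your proposal is correct and follows essentially the same route as the paper's proof in Appendix~\ref{App:boundary-zonoid-uniqueness}. The steps match: saturation of $\ell(\mc Z)=2$ puts the generating measure on $S^2$ with $h_{\mc Z}(\hat n)=\int|\hat n\cdot\hat s|\,d\gamma$; the boundary effects force the response to be the hemisphere indicator, so $\gamma(H_{\hat n})=\gamma(H_{-\hat n})=\tfrac12$ off a null set of $\hat n$; and injectivity of the hemispherical transform on odd measures, together with uniqueness of the even generating measure, finishes the argument. Two of your steps are self-contained replacements for the paper's citations. Your explicit existence construction with $\nu=\nu_{\text{even}}/2$ replaces the appeal to Proposition~\ref{prop:qubits-zonoid-necessary}. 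Your Funk--Hecke argument replaces Rubin, and it works because $\int_0^1P_{2k+1}(t)\,dt\neq 0$.

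One justification needs fixing. It is false that only countably many great circles can carry positive mass, because great circles are not pairwise disjoint. If $\gamma$ has an atom at $\hat a$, then every great circle through $\hat a$ has positive mass. This happens whenever $\mc Z$ is a zonotope; for example, for the cube $[-\tfrac13,\tfrac13]^3$ the measure $\gamma$ has atoms at $\pm\hat x,\pm\hat y,\pm\hat z$. The statement you actually need, that the equator $\{\hat s:\hat n\cdot\hat s=0\}$ is $\gamma$-null for $\mu$-almost every $\hat n$, still holds by Fubini, as in the paper:
\[
\int_{S^2}\gamma(\{\hat s:\hat n\cdot\hat s=0\})\,d\mu(\hat n)=\int_{S^2}\mu(\{\hat n:\hat n\cdot\hat s=0\})\,d\gamma(\hat s)=0 .
\]
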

\noindent The proof is given in Appendix \ref{App:boundary-zonoid-uniqueness}.
In terms of zonoids, Theorem \ref{Thm:zonoid-unique} implies that every centrally-symmetric zonoid $\mc{Z}\in\mbb{R}^3$ with $\ell(\mc{Z})=2$ can be uniquely associated with a positive-operator zonoid $\mc{K}\subset\text{Pos}(\mathbb{C}^2)$ such that any POVM $\Pi$ that can simulate the assemblage satisfies $\mc{K}_{\Pi}=\mc{K}$.

One example application of Theorem \ref{Thm:zonoid-unique}, consider the full family of noisy spin measurements $\{\frac{1}{2}(\mathbb{I}+\frac{1}{2}\vec{m}\cdot\vec{\sigma})\}_{\vec{m}\in B^3}$ with radius $\leq 1/2$.
One can see that $\ell(B^3/2)=2$, and so there is a unique zonoid $\mc{K}$ in $[0,\mathbb{I}]$ that contains all noisy spin measurements on the Bloch sphere with shrunken radius $1/2$; namely, $\mc{K}=\{\int q(\hat{s})(\mathbb{I}+\hat{s}\cdot\vec{\sigma})d\mu(\hat{s})\;|\;q\in[0,1]\}$. 
Consequently, by adding to $\mc{K}$ just a single measurement $M_{\pm}=\frac{1}{2}(\mathbb{I}\pm\vec{m}\cdot\vec{\sigma})$ with $\Vert\vec{m}\Vert>1/2$, the resulting measurement assemblage becomes incompatibility.
Indeed, by Proposition \ref{prop:qubits-zonoid-necessary}, any parent POVM that can simulate $M_{\pm}$ along with noisy spin measurements with radius $\leq 1/2$ requires that the Bloch vectors of these measurements belong to some centrally-symmetric zonotope $\mc{Z}$ that strictly contains $B^3/2$.
But since zonotopes are compact and convex, their support functions are continuous and so $\ell(\mc{Z})>\ell(B^3/2)=2$.



\subsection{Planar Measurements}
\label{Sect:Planar}

A family of qubit measurements is called planar if all their Bloch vectors lie in the same two-dimensional plane, which we always take to be the $\hat{x}$--$\hat{z}$ plane. 
A first consequence of Proposition~\ref{prop:qubits-zonoid-necessary} is an exact characterization of compatible planar spin measurements. 
For an arbitrary collection of spin directions $\{\vec m\}\subset\mathbb{R}^2$, define the centrally symmetric convex body
\begin{align}
\mc C\{\vec m\}:=\text{conv}[\{\pm\vec m\}].
\end{align}
As noted in Section \ref{Sect:Zonotopes-Zonoids}, every centrally symmetric convex  body in $\mbb R^2$ is a zonoid, and its perimeter satisfies
\begin{align}
\text{Per}(\mc C\{\vec m\})=2\ell(\mc C\{\vec m\}).
\end{align}
This planar geometry is closely related to the compatible-radius and zonotope methods studied in Refs.~\cite{Andrejic-2020a,Zhang-2025a}.

\begin{corollary}
\label{Cor:compatibility-perimeter}
A family of noisy planar spin measurements $\{M_{\vec m}\}$ is compatible if and only if their set of Bloch vectors satisfies
\begin{align}
\text{Per}(\mc C\{\vec m\})\leq 4.
\end{align}
\end{corollary}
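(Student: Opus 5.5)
The plan is to reduce the corollary to Proposition~\ref{prop:qubits-zonoid-necessary}, using the planar facts recalled in Section~\ref{Sect:Zonotopes-Zonoids}: every centrally-symmetric convex body in $\mathbb{R}^2$ is a zonoid, and its perimeter equals $2\ell$. Throughout, we regard the plane as the $\hat{x}$--$\hat{z}$ plane embedded in $\mathbb{R}^3$. A planar zonoid is still a zonoid in $\mathbb{R}^3$ (a flat one). Its generator length is the same whether it is computed in the plane or in $\mathbb{R}^3$, because $\ell$ equals the total mass of the generating measure $\nu_{\text{even}}$, and that mass does not depend on the ambient dimension. This normalization is fixed by the identity $\int_{S^2}|\langle\hat{n},\hat{z}\rangle|d\mu=1/2$ in $\mathbb{R}^3$ and its planar analogue.

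\emph{Sufficiency.} Suppose $\text{Per}(\mc C\{\vec m\})\le 4$. The set $\mc C\{\vec m\}$ is compact, convex and centrally symmetric in the plane, so it is a zonoid with $\ell(\mc C\{\vec m\})=\text{Per}/2\le 2$. It contains every $\vec m$. Proposition~\ref{prop:qubits-zonoid-necessary} (converse direction) then gives compatibility directly. If infinitely many $\vec m$ are given, we take the closure of the convex hull; this changes neither the perimeter nor containment.

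\emph{Necessity.} Suppose the assemblage is compatible. By Proposition~\ref{prop:qubits-zonoid-necessary}, there is a centrally-symmetric zonoid $\mc Z\subset\mathbb{R}^3$ with $\ell(\mc Z)\le 2$ containing all the $\vec m$, and hence containing $\mc C\{\vec m\}$. The main step is to transfer this bound to the plane. Let $P$ be the orthogonal projection onto the $\hat{x}$--$\hat{z}$ plane. Then $P\mc Z$ is a centrally-symmetric zonoid with generating measure the pushforward of $\nu$: each generator $\hat{s}$ of mass $d\nu$ is replaced by $P\hat{s}$, so its mass becomes $\|P\hat{s}\|\,d\nu$. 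Consequently $\ell(P\mc Z)=\int\|P\hat s\|d\nu\le\int d\nu=\ell(\mc Z)\le 2$. Since $\mc C\{\vec m\}\subset P\mc Z$ (the $\vec m$ are planar), $P\mc Z$ contains $\mc C\{\vec m\}$. For nested convex bodies in the plane, perimeter is monotone under inclusion, which follows from the Cauchy formula $\text{Per}=\int_0^{2\pi}h(\theta)d\theta$ and monotonicity of support functions. Therefore $\text{Per}(\mc C\{\vec m\})\le\text{Per}(P\mc Z)=2\ell(P\mc Z)\le 4$.

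The main obstacle is the projection step, in particular confirming that the normalizations of $\ell$ in $\mathbb{R}^3$ (via $4\int_{S^2}h\,d\mu$) and in $\mathbb{R}^2$ (via $\text{Per}/2$) agree as masses of generating measures. I would check this on a segment $[-\vec v/2,\vec v/2]$ with $\ell=\|\vec v\|$ in both settings, using Eq.~\eqref{Eq:generator-length-integral} and the fact that the perimeter of this degenerate body is $2\|\vec v\|$. The general case then follows by zonotope approximation, since $\ell$ is linear in the generating measure and continuous under Hausdorff limits.
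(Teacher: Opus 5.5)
Your proposal is correct and follows the paper's proof essentially step for step. Sufficiency views $\mc C\{\vec m\}$ as a flat zonoid with $\ell=\text{Per}/2\le 2$ and invokes the converse of Proposition~\ref{prop:qubits-zonoid-necessary}; necessity projects the three-dimensional zonoid onto the measurement plane, which does not increase generator length. Your extra checks (matching normalizations of $\ell$ across dimensions, monotonicity of perimeter via Cauchy's formula, and closing the convex hull for infinite families) make explicit details that the paper's proof leaves implicit.
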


\begin{proof}
If the measurements are compatible, Proposition~\ref{prop:qubits-zonoid-necessary} gives a centrally symmetric zonoid of generator length at most two that contains all their Bloch vectors. 
Projecting this zonoid onto the measurement plane does not increase its generator length, and therefore
\begin{align}
\text{Per}(\mc C\{\vec m\})\leq 4.
\end{align}
Conversely, if $\text{Per}(\mc C\{\vec m\})\leq4$, then $\mc C\{\vec m\}$ is itself a zonoid satisfying $\ell(\mc C\{\vec m\})\leq 2$. Proposition~\ref{prop:qubits-zonoid-necessary} therefore implies compatibility.
\end{proof}

Consider now $n$ planar projective measurements with distinct axes
\begin{align}
\hat m_x:=\cos\alpha_x \hat x+\sin\alpha_x \hat z,\quad0=\alpha_1<\cdots<\alpha_n<\pi,
\notag
\end{align}
and set $\alpha_{n+1}:=\alpha_1+\pi$. 
Their noisy versions are
\begin{align}
M^{(r)}_{b|x}:=\frac{1}{2}\left[\mathbb I+(-1)^b r\hat m_x\cdot\vec\sigma\right],
\end{align}
and the corresponding polygon has perimeter
\begin{align}
\text{Per} \left(\text{conv}\{\pm r\hat m_x\}_x\right)=4r\sum_{x=1}^n\sin\left(\frac{\alpha_{x+1}-\alpha_x}{2}\right).
\end{align}
Andrejic and Kunjwal proved that the resulting expression lower bounds the compatibility threshold and conjectured that it is exact in general~\cite{Andrejic-2020a}. 
This conjecture was subsequently supported by extensive linear-programming calculations for large families of planar measurements~\cite{Porto-2026a}. 
The preceding perimeter criterion now resolves it analytically.

\begin{corollary}
\label{Cor:planar-robustness-conjecture}
The measurements $\{M^{(r)}_{b|x}\}_{b,x}$ are compatible if and only if
\begin{align}
r\le \left[\sum_{x=1}^n\sin\left(\frac{\alpha_{x+1}-\alpha_x}{2}\right)\right]^{-1}. \label{Eq:planar-exact-threshold}
\end{align}
Consequently, among all families of $n$ planar projective measurements, the incompatible family most robust to deploarizing noise consists of equally spaced spin directions,
\begin{align}
\hat m_k:=\cos\frac{k\pi}{n} \hat x+\sin\frac{k\pi}{n} \hat z,\qquad k=0,\ldots,n-1, \label{Eq:symmetric-planar}
\end{align}
whose compatibility threshold is $\frac{1}{n\sin\left(\frac{\pi}{2n}\right)}$.

\end{corollary}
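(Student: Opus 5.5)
The plan is to read off Corollary~\ref{Cor:planar-robustness-conjecture} from the perimeter criterion of Corollary~\ref{Cor:compatibility-perimeter}, and then solve a one-variable concave optimization for the second claim.

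\textbf{Threshold.} The Bloch vectors of the assemblage $\{M^{(r)}_{b|x}\}$ are $\pm r\hat m_x$. Since each dichotomic measurement contributes both of these vectors, the relevant convex body is $\mc C\{r\hat m_x\}=\text{conv}\{\pm r\hat m_x\}_x$. I would first check that this is a centrally symmetric $2n$-gon whose vertices, in angular order, are $r\hat m_1,\dots,r\hat m_n,-r\hat m_1,\dots,-r\hat m_n$. This holds because the angles $\alpha_1<\cdots<\alpha_n<\alpha_1+\pi$ all lie on the circle of radius $r$, so every point is extreme. Consecutive vertices at angular separation $\theta$ on that circle are joined by a chord of length $2r\sin(\theta/2)$. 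The $n$ gaps $\alpha_{x+1}-\alpha_x$, with $\alpha_{n+1}=\alpha_1+\pi$, cover a half turn, and each appears twice. This gives the perimeter $4r\sum_x\sin\bigl((\alpha_{x+1}-\alpha_x)/2\bigr)$, as stated in the text. Applying Corollary~\ref{Cor:compatibility-perimeter}, compatibility holds iff this perimeter is at most $4$, which is exactly Eq.~\eqref{Eq:planar-exact-threshold}.

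\textbf{Extremal family.} Define the threshold $r^*(\{\alpha\})=\bigl[\sum_x\sin(\theta_x/2)\bigr]^{-1}$ with $\theta_x=\alpha_{x+1}-\alpha_x>0$ and $\sum_x\theta_x=\pi$. The family most robust to noise is the one whose incompatibility survives the most noise, that is, the one with the smallest threshold $r^*$. Minimizing $r^*$ is the same as maximizing $\sum_x\sin(\theta_x/2)$ on this simplex. Since $\sin(\cdot/2)$ is strictly concave on $[0,\pi]$, Jensen's inequality gives
\begin{align}
\sum_x\sin(\theta_x/2)\le n\sin(\pi/2n),
\end{align}
with equality iff all $\theta_x=\pi/n$. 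Equality is precisely the equally spaced family of Eq.~\eqref{Eq:symmetric-planar}, after a rotation within the plane so that $\alpha_1=0$. The threshold of that family is therefore $1/(n\sin(\pi/2n))$. As a sanity check, this value exceeds $2/\pi$ and tends to $2/\pi$ as $n\to\infty$, matching the known planar bound.

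\textbf{Main obstacle.} I expect the only subtle point to be the geometric claim that all $2n$ points $\pm r\hat m_x$ are vertices, taken in the stated cyclic order, so that the perimeter formula applies. Distinct axes in $[0,\pi)$ guarantee that the points are distinct and lie on a common circle. Any finite set of distinct points on a circle is in convex position and is ordered by angle, so the claim follows. Everything else is a direct substitution into earlier results.
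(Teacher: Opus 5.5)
Your proposal is correct and follows the paper's proof: you read the exact threshold off the perimeter criterion of Corollary~\ref{Cor:compatibility-perimeter}, and you get the optimal family from concavity of $\sin(\delta/2)$ under the constraint $\sum_x\delta_x=\pi$. Your added check that the $2n$ points $\pm r\hat m_x$ lie on a circle, and hence are in convex position in angular order, usefully justifies the perimeter formula that the paper only states.
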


\begin{proof}
The exact threshold follows immediately from Corollary~\ref{Cor:compatibility-perimeter} and the perimeter formula above. To optimize over the directions, define $\delta_x:=\alpha_{x+1}-\alpha_x$, so that $\sum_x\delta_x=\pi$. Concavity of $\sin(\delta/2)$ gives
\begin{align}
\sum_{x=1}^n\sin\frac{\delta_x}{2}\leq n\sin\frac{\pi}{2n},
\end{align}
with equality if and only if $\delta_x=\pi/n$ for every $x$. Hence equally spaced axes minimize the compatibility threshold and are therefore the most robustly incompatible.
\end{proof}





\section{Incompatibility-Breaking Qubit Channels}

\label{Sect:Incompatibility-breaking-qubit}

We now consider the structure of compatibility breaking channels in qubit systems.
We describe a qubit channel $\mc{N}$ through the Bloch representation of its Heisenberg adjoint. For a POVM element $M=\alpha\mathbb{I}+\frac{1}{2}\vec{m}\cdot\vec{\sigma}$, suppose that
\begin{align}
\label{Eq:qubit-channel-adjoint} \mc{N}^\dagger(M) = \left(\alpha+\frac{1}{2}\vec{t}\cdot\vec{m}\right)\mathbb{I} + \frac{1}{2}(D\vec{m})\cdot\vec{\sigma},
\end{align}
where $D$ is a real $3\times 3$ matrix and $\vec{t}\in\mbb{R}^3$.
In other words, the four-vector of the POVM element transforms as $(\alpha,\vec{m})\mapsto (\alpha+1/2\vec{t}\cdot\vec{m}, \;D\vec{m})$.
Equivalently, the Schr\"odinger action of $\mc{N}$ on the Bloch vector $\vec{s}$ of an arbitrary qubit state $\rho$ is $\vec{s}\mapsto D^{\mathtt{T}}\vec{s}+\vec{t}$.
These two representations satisfy
\begin{align}
\tr[M \mc{N}(\rho)]=\tr[\mc{N}^\dagger(M)\rho].
\end{align}

\subsection{A Criterion for Incompatibility-Breaking}
\label{Sect:Incompatibility-Breaking-Criterion}
We now derive a sufficient condition for a qubit channel $\mc N$ to break the incompatibility of a family of noisy spin measurements $\mathtt{S}=\{\vec{m}\}$. 
Using Eq.~\eqref{Eq:qubit-channel-adjoint}, we must decide when the measurements $\{\mc{N}^\dagger(M_{b|\vec{m}})\}_{\vec{m}\in\mathtt{S}}$ are compatible.
 
\begin{theorem}
\label{Thm:incompatibility-breaking}
    Let $\mc{N}$ be any qubit channel whose Heisenberg adjoint is described by the matrix $D$ and translation vector $\vec{t}$ as in Eq.~\eqref{Eq:qubit-channel-adjoint}.
    Then $\mc{N}$ is incompatibility breaking for a family of spin directions $\mathtt{S}=\{\vec{m}\}$ whenever
    \begin{align}
    \label{Eq:qubit-breaking-inclusion} \left\{ \frac{D\vec{m}}{1-|\vec{t}\cdot\vec{m}|} \bigg| \vec{m}\in\mathtt{S}\right\} \subset\mc Z
    \end{align}
    for some centrally symmetric zonoid $\mc Z\subset\mathbb{R}^3$ with $\ell(\mc Z)\leq2$. Here we assume $|\vec t|<1$, otherwise, the channel is forced to be a measure-prepare channel and is trivially incompatibility breaking.

\end{theorem}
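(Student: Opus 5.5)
The plan is to reduce the (generally biased) output measurements to a family of \emph{unbiased} noisy spin measurements, mixed with trivial measurements, and then invoke Proposition~\ref{prop:qubits-zonoid-necessary}. By Eq.~\eqref{Eq:qubit-channel-adjoint}, the effect $M_{\vec m}=\frac12(\mathbb{I}+\vec m\cdot\vec\sigma)$ is mapped to
\begin{align}
E_{\vec m}:=\mc N^\dagger(M_{\vec m})=\frac{1+c_{\vec m}}{2}\,\mathbb{I}+\frac12 (D\vec m)\cdot\vec\sigma,\qquad c_{\vec m}:=\vec t\cdot\vec m .\notag
\end{align}
Since $\Vert\vec m\Vert\le 1$ and $\Vert\vec t\Vert<1$, we have $|c_{\vec m}|<1$. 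This makes $\vec v_{\vec m}:=D\vec m/(1-|c_{\vec m}|)$ well defined, and by hypothesis it lies in $\mc Z$.

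The key algebraic step is the decomposition
\begin{align}
E_{\vec m}=(1-|c_{\vec m}|)\,M_{\vec v_{\vec m}}+|c_{\vec m}|\,T_{\vec m},\qquad T_{\vec m}=\begin{cases}\mathbb{I} & c_{\vec m}\ge 0,\\ 0 & c_{\vec m}<0.\end{cases}\notag
\end{align}
It is checked by comparing the identity coefficients. For $c\ge0$ one gets $(1-c)/2+c=(1+c)/2$. For $c<0$ one gets $(1-|c|)/2=(1+c)/2$. In both cases the Bloch part is $\frac12 D\vec m$. Thus each output dichotomic measurement is a convex mixture, with weight depending on $\vec m$, of the unbiased noisy spin measurement $\{M_{\vec v_{\vec m}},\mathbb{I}-M_{\vec v_{\vec m}}\}$ and a trivial measurement that deterministically outputs $0$ or $1$.

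Next, since $\{\vec v_{\vec m}\}\subset\mc Z$ with $\mc Z$ centrally symmetric and $\ell(\mc Z)\le 2$, the converse part of Proposition~\ref{prop:qubits-zonoid-necessary} supplies a parent POVM $\Pi$ on some $(\Lambda,\mc F)$ and response functions $p(0|\vec m,\lambda)$ with $\int p(0|\vec m,\lambda)\,d\Pi(\lambda)=M_{\vec v_{\vec m}}$. In particular, Proposition~\ref{prop:qubits-zonoid-necessary} guarantees this parent zonoid lies in $[0,\mathbb{I}]$, so we do not need to check $\Vert\vec v_{\vec m}\Vert\le1$ separately. We then define the new response function
\begin{align}
p'(0|\vec m,\lambda):=(1-|c_{\vec m}|)\,p(0|\vec m,\lambda)+|c_{\vec m}|\,\chi_{\{c_{\vec m}\ge0\}},\notag
\end{align}
with $p'(1|\vec m,\lambda)=1-p'(0|\vec m,\lambda)$. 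This is a valid measurable response function with values in $[0,1]$. Integrating it against $\Pi$ and using $\Pi(\Lambda)=\mathbb{I}$ returns exactly $E_{\vec m}$. Hence the same $\Pi$ is a parent for $\{E_{\vec m},\mathbb{I}-E_{\vec m}\}_{\vec m\in\mathtt S}$, which proves $\mc N$ is $\mathtt S$-incompatibility breaking.

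I expect no deep obstacle. The only real insight is the decomposition above: it removes the bias $\vec t\cdot\vec m$ by absorbing it into a trivial measurement, at the cost of rescaling the Bloch vector by $(1-|\vec t\cdot\vec m|)^{-1}$, which is exactly the rescaling appearing in Eq.~\eqref{Eq:qubit-breaking-inclusion}. The remaining care points are routine. One is the sign split, which determines whether the trivial effect is $\mathbb{I}$ or $0$. The other is the degenerate case $\Vert\vec t\Vert\ge1$, which is excluded by assumption as stated.
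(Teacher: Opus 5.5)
Your proof is correct, but it takes a different and shorter route than the paper. The paper approximates $\mc Z$ by centrally symmetric zonotopes $\mc Z_\epsilon$ and builds explicit finite-outcome parents $\Pi^{(\epsilon)}_{\lambda,c}=p_\lambda M_{c|\hat n_\lambda}$. Their post-processings reproduce $\mc N^\dagger(M_{b|\vec m})$ only up to an error $3\epsilon$, so exact compatibility is then recovered from the closure result, Theorem~\ref{Thm:dichotomic-closure}, which itself rests on Blaschke selection and Proposition~\ref{Prop:dichotomic-compatibility}.

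Your decomposition $E_{\vec m}=(1-|c_{\vec m}|)M_{\vec v_{\vec m}}+|c_{\vec m}|T_{\vec m}$ is already implicit in the paper's response function. With $c\in\{0,1\}$ denoting the parent's outcome label, the paper's $f_\epsilon(0|\vec m,\lambda,c)$ equals $(1-|c_{\vec m}|)\cdot\frac12\bigl(1+(-1)^c s_\lambda(\vec m)\bigr)+|c_{\vec m}|\chi_{\{c_{\vec m}\ge 0\}}$. So the bias is absorbed there by exactly your mixing with a trivial measurement.

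The genuine difference is how the unbiased part is handled. You use the exact parent supplied by the converse of Proposition~\ref{prop:qubits-zonoid-necessary}. This removes the Hausdorff-distance bookkeeping and the limiting argument entirely. It also shows that the first distinction drawn in the Remark after the theorem, namely that the Proposition's parent handles only unbiased measurements, is not essential: classical post-processing alone takes care of the bias.

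What the paper's route buys in exchange is an explicit family of finite-outcome parents described by zonotope generators, which is the implementation-oriented point of its Remark. Your exact parent may need unboundedly many outcomes when $\mc Z$ is not a zonotope, although by Lemma~\ref{Lem:Aubrun-Lancien} it can be taken finite-outcome when $\mc Z$ is one.
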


\begin{remark}
    The proof of Theorem \ref{Thm:incompatibility-breaking} involves constructing a parent POVM for the measurements $\{\mc{N}^\dagger(M_{b|\vec{m}})\}_{\vec{m}\in\mathtt{S}}$.
    The construction is distinct from the one in Proposition \ref{prop:qubits-zonoid-necessary} for two reasons.
    First, the parent POVM above applies only to unbaised measurements, whereas the one constructed below is capable of simulating biased measurements, which arise whenever the channel $\mc{N}$ is nonunital.
    
    Second, the construction here involves building a zonoid through a sequence of approximating zonotopes. 
    This method has a practical advantage from an implementation perspective since every zonotope corresponds to a POVM with a finite number of measurement outcomes, and Theorem \ref{Thm:incompatibility-breaking} shows how to construct this measurement in terms of Bloch sphere geometry.
    In contrast, a general zonoid corresponds to POVM that might have an unbounded number of measurement outcomes and require an ancilla system of unbounded dimension to physically implement \cite{Zhang-2025a}.
\end{remark}

\begin{proof}

Suppose that the inclusion in Eq.~\eqref{Eq:qubit-breaking-inclusion} holds for a centrally symmetric zonoid $\mc Z$ satisfying $\ell(\mc Z)\leq2$. Let $\epsilon>0$ and choose a centrally symmetric zonotope $\mc Z_\epsilon$ such that $d_H(\mc Z_\epsilon,\mc Z)<\epsilon$.
Since $|h_{\mc Z_\epsilon}(\hat n)-h_{\mc Z}(\hat n)|\leq\epsilon$ for every $\hat n\in S^2$, Eq.~\eqref{Eq:generator-length-integral} gives
\begin{align}
\ell(\mc Z_\epsilon)\leq\ell(\mc Z)+4\epsilon\leq2+4\epsilon.
\end{align}
Consequently,
\begin{align}
\min\{2,\ell(\mc Z_\epsilon)\}\leq\ell(\mc Z_\epsilon)
\leq\min\{2,\ell(\mc Z_\epsilon)\}+4\epsilon.
\label{Eq:Generator-length-bound}
\end{align}

Let $\{\vec n_\lambda\}_\lambda$ be generators of $\mc Z_\epsilon$, chosen such that $\sum_\lambda\vec n_\lambda=\vec0$ and $\ell(\mc Z_\epsilon)=\sum_\lambda\Vert\vec n_\lambda\Vert$. 
For every $\vec m\in\mathtt S$, the Hausdorff approximation and Eq.~\eqref{Eq:qubit-breaking-inclusion} imply that there exist coefficients $0\leq t_\lambda(\vec m)\leq1$ such that
\begin{align}
\frac{D\vec m}{1-|\vec t\cdot\vec m|}
\approx_\epsilon\!
\sum_\lambda t_\lambda(\vec m)\vec n_\lambda
=\!
\sum_\lambda[2t_\lambda(\vec m)-1]
\frac{\ell(\mc Z_\epsilon)}{2}
\frac{\vec n_\lambda}{\ell(\mc Z_\epsilon)},
\notag
\end{align}
where the equality follows from $\sum_\lambda\vec n_\lambda=\vec0$. Define
\begin{align}
s_\lambda(\vec m):=[2t_\lambda(\vec m)-1]
\frac{\min\{2,\ell(\mc Z_\epsilon)\}}{2}\in[-1,1].
\end{align}
By Eq.~\eqref{Eq:Generator-length-bound}, this replacement introduces an
additional error of at most $2\epsilon$, and therefore
\begin{align}
\frac{D\vec m}{1-|\vec t\cdot\vec m|}
\approx_{3\epsilon}
\sum_\lambda s_\lambda(\vec m)
\frac{\vec n_\lambda}{\ell(\mc Z_\epsilon)}.
\label{Eq:zonoid-finite-approximation}
\end{align}
Define $p_\lambda:=\frac{\Vert\vec n_\lambda\Vert}{\ell(\mc Z_\epsilon)}$, $\hat n_\lambda:=\frac{\vec n_\lambda}{\Vert\vec n_\lambda\Vert}$ and the parent POVM
\begin{align}
\Pi_{\lambda,c}^{(\epsilon)}:=p_\lambda M_{c|\hat n_\lambda},
\qquad c\in\{0,1\}.
\label{Eq:zonoid-finite-parent}
\end{align}
Indeed, $\sum_{\lambda,c}\Pi_{\lambda,c}^{(\epsilon)}=\mbb I$. 
For each $\vec m\in\mathtt S$, define
\begin{align}
&f_\epsilon(b|\vec m,\lambda,c)  \notag\\
\label{Eq:zonoid-finite-response}
&\!\!\!\!:=\frac{1}{2}\left[1+(-1)^b\left(\vec t\cdot\vec m+(-1)^c(1-|\vec t\cdot\vec m|)s_\lambda(\vec m)\right)\right].
\end{align}
These are valid response functions since 
$$\left|\vec t\cdot\vec m+(-1)^c(1-|\vec t\cdot\vec m|)s_\lambda(\vec m)\right|\leq 1.$$ 
Using Eq.~\eqref{Eq:zonoid-finite-approximation}, we obtain
\begin{align}
\mc N^\dagger(M_{b|\vec m})
\approx_{3\epsilon}
\sum_{\lambda,c}
f_\epsilon(b|\vec m,\lambda,c)\Pi_{\lambda,c}^{(\epsilon)}.
\label{Eq:finite-parent-approximation}
\end{align}
Thus, for every $\epsilon>0$, the noisy measurements are uniformly
approximated by a compatible family with the parent POVM in Eq.~\eqref{Eq:zonoid-finite-parent}. 
Their compatibility
then follows from Theorem~\ref{Thm:dichotomic-closure}.
\end{proof}


\subsection{Unital and Nonunital Channels}

\label{Sect:Incompatibility-Breaking-Unital}

When $\vec{t}=\vec{0}$, the condition in Eq. \eqref{Eq:qubit-breaking-inclusion} reduces to the one in Proposition \ref{prop:qubits-zonoid-necessary} since $\{\mc N^\dagger(M_{b|\vec m})\}$ is again an assemblage of noisy spin measurements with Bloch vectors $\{D\vec m\}$. 
The condition $\vec{t}=\vec{0}$ corresponds to a unital channel $\mc{N}$.
Thus, we have that a unital channel will break the incompability in a family of spins $\mathtt{S}$ iff $\{D\vec m:\vec m\in\mathtt S\}$ is contained in a centrally symmetric zonoid $\mc Z$ satisfying $\ell(\mc Z)\leq 2$. 

The following corollary to Theorem~\ref{Thm:incompatibility-breaking} applies this to the full set of projective measurements and includes a partial extension to nonunital chanels as well.

\begin{corollary}
\label{Cor:projective-measurements-incomp}
    A qubit channel $\mc{N}$ is PVM-incompatibility breaking whenever 
    \begin{align}
    \label{Eq:projective-measurement-cond} {2}\int_{S^2}\Vert D\hat{n}\Vert d\mu(\hat{n})+{\Vert\vec{t}\Vert}\leq 1,
    \end{align}
    where $(\alpha,\vec{m})\mapsto (\alpha+1/2\vec{t}\cdot\vec{m}, \;D\vec{m})$ describes the action of $\mc{N}^\dagger$ on the Bloch sphere, as in Eq.~\eqref{Eq:qubit-channel-adjoint}. 
    Conversely, if $\mc{N}$ is unital, then this condition is also necessary for $\mc{N}$ to be PVM-incompatibility breaking. 
\end{corollary}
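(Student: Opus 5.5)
The plan is to apply Theorem~\ref{Thm:incompatibility-breaking} with $\mathtt{S}=S^2$ and the candidate zonoid $\mc{Z}=\kappa\, D B^3$, where $D B^3$ is the image of the unit ball under $D$ and $\kappa\ge 1$ is a scaling chosen to absorb the nonunital denominators. Since $B^3$ is a zonoid and linear images of zonoids are zonoids, $DB^3$ is a centrally-symmetric zonoid. Its support function is $h_{DB^3}(\hat n)=\Vert D^{\mathtt{T}}\hat n\Vert$, so Eq.~\eqref{Eq:generator-length-integral} gives $\ell(DB^3)=4\int_{S^2}\Vert D^{\mathtt{T}}\hat n\Vert d\mu(\hat n)$.

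The first step is to convert this into the quantity in Eq.~\eqref{Eq:projective-measurement-cond}. I would use the zonoid identity $\int_{S^2}\Vert A\hat n\Vert d\mu = \tfrac12\cdot$(mean width-type quantity), or more concretely, show directly that $\ell(DB^3)=2\int_{S^2}\Vert D\hat n\Vert d\mu(\hat n)$. The route is to write $B^3$ with generating measure $\nu=2\mu$, which gives generator length $2$. The image zonoid $DB^3$ then has generators $D\hat s$ with weight $2\,d\mu(\hat s)$, hence generator length $2\int\Vert D\hat s\Vert d\mu(\hat s)$. Here I am using that the linear image of the zonoid with generating measure $\nu$ is $\{\int q(\hat s)D\hat s\,d\nu\}$, whose length is $\int\Vert D\hat s\Vert d\nu$ by the support-function formula.

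The second step handles the translation. For unit $\hat m$ we have $|\vec t\cdot\hat m|\le\Vert\vec t\Vert$, so $D\hat m/(1-|\vec t\cdot \hat m|)$ is a scalar multiple of $D\hat m$ with factor in $[1,(1-\Vert\vec t\Vert)^{-1}]$. Since $D\hat m\in DB^3$ and $\mc Z$ is convex and centrally symmetric, the scaled vector lies in $\kappa DB^3$ for $\kappa=(1-\Vert\vec t\Vert)^{-1}$. The requirement $\ell(\mc Z)\le 2$ then becomes $2\int\Vert D\hat n\Vert d\mu\le 2(1-\Vert\vec t\Vert)$. After fixing the normalization, I should check that this matches Eq.~\eqref{Eq:projective-measurement-cond}, namely $2\int\Vert D\hat n\Vert d\mu+\Vert\vec t\Vert\le1$. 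The factor conventions between $\ell$ and the stated integral need care, and I expect this to be the main bookkeeping obstacle. If the constants don't match exactly, I would recheck the $\nu=2\mu$ normalization for $B^3$: $\ell(B^3)=4\cdot 1=4$ from Eq.~\eqref{Eq:generator-length-integral}, so $\nu$ has total mass $4$. The target inequality is likely to come out as $4\int\Vert D\hat n\Vert d\mu\,\kappa\le 2$, which is the stated $2\int\Vert D\hat n\Vert d\mu\le 1-\Vert\vec t\Vert$. Theorem~\ref{Thm:incompatibility-breaking} then gives compatibility of all PVMs after the channel, since every PVM on a qubit is a spin measurement (rank-one) or trivial.

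For the converse with $\vec t=0$, suppose $\mc N$ is PVM-incompatibility breaking. Then $\{M_{D\hat m}\}_{\hat m\in S^2}$ is compatible, so by Proposition~\ref{prop:qubits-zonoid-necessary} there is a centrally-symmetric zonoid $\mc Z\supseteq\{D\hat m\}$ with $\ell(\mc Z)\le2$. By convexity and symmetry $\mc Z\supseteq DB^3$. Monotonicity of $\ell$ under inclusion, which follows from monotonicity of support functions in Eq.~\eqref{Eq:generator-length-integral}, yields $\ell(DB^3)\le 2$. This is exactly $2\int\Vert D\hat n\Vert d\mu\le1$ with the normalization fixed above.
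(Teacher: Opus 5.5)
Your proposal is correct and is essentially the paper's proof. You use the same zonoid $\kappa DB^3$ with $\kappa=(1-\Vert\vec t\Vert)^{-1}$ (the paper's ellipsoid $\mc{E}$) in Theorem~\ref{Thm:incompatibility-breaking} for sufficiency, and Proposition~\ref{prop:qubits-zonoid-necessary} plus monotonicity of $\ell$ under inclusion for the unital converse. Drop your first normalization $\nu=2\mu$ (and the intermediate inequality $2\int\Vert D\hat n\Vert d\mu\le 2(1-\Vert\vec t\Vert)$ derived from it) in favor of your corrected $\nu=4\mu$, which has total mass $\ell(B^3)=4$. This gives $\ell(DB^3)=4\int_{S^2}\Vert D\hat n\Vert d\mu(\hat n)=4\int_{S^2}\Vert D^{\mathtt T}\hat n\Vert d\mu(\hat n)$, so the constants match exactly. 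Computing $\ell$ both ways also justifies the identity $\int\Vert D^{\mathtt T}\hat n\Vert d\mu=\int\Vert D\hat n\Vert d\mu$, which the paper uses without comment.
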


\begin{proof}
We apply Theorem~\ref{Thm:incompatibility-breaking} with $\mathtt S=S^2$ consisting of all the spin measurements. Consider the ellipsoid
\begin{align}
\mc E:=\left\{\frac{D\vec m}{1-\norm{\vec t}}\;\middle|\;\vec m\in B^3\right\},
\label{Eq:steering-ellipsoid}
\end{align}
which forms a zonoid in $\mathbb{R}^3$.
Since $|\vec t\cdot\vec m|\leq\norm{\vec t}$ for $\vec m\in B^3$,
\begin{align}
\frac{D\vec m}{1-|\vec t\cdot\vec m|}
=\frac{D}{1-\norm{\vec t}}
\left(\frac{1-\norm{\vec t}}{1-|\vec t\cdot\vec m|}\vec m\right)\in\mc E.
\end{align}
Moreover,
\begin{align}
h_{\mc E}(\hat n)
=\frac{1}{1-\norm{\vec t}}\max_{\vec m\in B^3}(D^{\mathtt T}\hat n)\cdot\vec m
=\frac{\norm{D^{\mathtt T}\hat n}}{1-\norm{\vec t}},
\end{align}
and using $\int_{S^2}\norm{D^{\mathtt T}\hat n}d\mu(\hat n) =\int_{S^2}\norm{D\hat n}d\mu(\hat n)$, we therefore
\begin{align}
\ell(\mc E)&=4\int_{S^2}h_{\mc E}(\hat n)d\mu(\hat n) \notag \\
&=\frac{4}{1-\norm{\vec t}}\int_{S^2}\norm{D\hat n}d\mu(\hat n)\le 2,
\end{align}
with the last line following from Eq.~\eqref{Eq:projective-measurement-cond}.
By Theorem~\ref{Thm:incompatibility-breaking} it follows that $\mathcal{N}$ is PVM-incompatibility breaking.

In the case that $\mc{N}$ is unital, the set $\{D\vec{m} | \vec{m}\in S^2\}$ already forms the surface of an ellipsoid $\mc{E}$.
Thus, if $\mc{N}$ is PVM-incompatibility breaking, Proposition~\ref{prop:qubits-zonoid-necessary} can be applied to conclude that $\ell(\mc{E})\leq 2$.
\end{proof}

\begin{remark}
    As proved in a parallel paper,  the criterion in Eq.~\eqref{Eq:projective-measurement-cond} is also sufficient for a channel $\mc N$ to be incompatibility breaking for all POVMs~\cite{zhang2026parallel}. 
    Also, note that the expression $\frac{1}{1-\Vert\vec{t}\Vert}\int_{S^2}\Vert D^{\mathtt{T}}\hat{n}\Vert d\mu(\hat{n})$ is the mean radius of the ellipsoid $\mc{E}$ given in Eq.~\eqref{Eq:steering-ellipsoid}.
    Hence, Corollary~\ref{Cor:projective-measurements-incomp} combined with Ref. \cite{zhang2026parallel} says that $\mc{N}$ is incompatibility breaking whenever this mean radius is no greater than $1/2$.
    \end{remark}

\subsection{Examples}
\label{Sect:Examples}
We next present two examples of incompatibility breaking channels.
\subsubsection{Pauli Channels}
As a direct application of the exact unital criterion, consider Pauli channels
\begin{align}
\mc{N}(\cdot)=\sum_{b_0,b_1=0}^1p_{b_0b_1}\sigma_z^{b_0}\sigma_x^{b_1}(\cdot)\sigma_x^{b_1}\sigma_z^{b_0}.
\end{align}
The corresponding adjoint Bloch matrix $D$ is diagonal, $D=\text{diag}[c_x,c_y,c_z]$ with $c_x=p_{00}+p_{01}-p_{10}-p_{11}$, $c_y=p_{00}-p_{01}-p_{10}+p_{11}$, and $c_z=p_{00}-p_{01}+p_{10}-p_{11}$.
Corollary~\ref{Cor:projective-measurements-incomp} then gives the incompatibility breaking condition of
\begin{align}
2\int_{S^2}\sqrt{c_x^2n_x^2+c_y^2n_y^2+c_z^2n_z^2}d\mu(\hat{n})\le 1.
\end{align}
This recovers previously known results on steerability of Bell diagonal states~\cite{Nguyen-2016a, Nguyen-2019a}.

\subsubsection{Depolarized Amplitude Damping Channels}

The nonunital case is more subtle because the set described in Eq. \eqref{Eq:qubit-breaking-inclusion} of Theorem~\ref{Thm:incompatibility-breaking} is generally not an ellipsoid. 
However, for the depolarized amplitude-damping channel, its convex hull can be determined exactly.

Let $\mc A_\gamma$ denote the amplitude-damping channel, with
\begin{align}
D_\gamma=\operatorname{diag}\left(\sqrt{1-\gamma},\sqrt{1-\gamma},1-\gamma\right),\quad \vec t_\gamma=\gamma\hat z.
\end{align}
Consider the combined channel $\mc N_{\gamma,r}:=\Delta_r\circ\mc A_\gamma$, where depolarizing noise is applied after the amplitude damping. 
We have $D_{\gamma,r}=rD_\gamma$ and $\vec t_{\gamma,r}=r\gamma\hat z$ for $\mc N_{\gamma,r}$. 
The set described in Eq. \eqref{Eq:qubit-breaking-inclusion} is therefore
\begin{align}
\mc K_{\gamma,r}:=\left\{\frac{rD_\gamma\hat m}{1-r\gamma |m_z|}\;\middle|\;\hat m\in S^2\right\},
\end{align}
where $m_z$ denotes the $z$-component of $\hat{m}$.
Write $u:=|m_z|\in[0,1]$ so that the boundary of $\mc K_{\gamma,r}$ in the region $m_z\ge 0$ can be expressed in cylindrical coordinates as
\begin{align}
\rho(u)=\frac{r\sqrt{1-\gamma}\sqrt{1-u^2}}{1-r\gamma u},\qquad z(u)=\frac{r(1-\gamma)u}{1-r\gamma u}.
\end{align}
Defining
\begin{align*}
R_{\gamma,r}=\!\frac{r\sqrt{1-\gamma}}{\sqrt{1-r^2\gamma^2}},~C_{\gamma,r}=\!\frac{r(1-\gamma)}{1-r^2\gamma^2},~s_{\gamma,r}=\!\frac{r^2\gamma(1-\gamma)}{1-r^2\gamma^2},
\end{align*}
one can verify directly that
\begin{align}
\frac{\rho(u)^2}{R_{\gamma,r}^2}+\frac{(z(u)-s_{\gamma,r})^2}{C_{\gamma,r}^2}=1.
\end{align}
Combined with the reflected body under $z\mapsto-z$, we see that the convex hull of $\mc{K}_{\gamma,r}$ is the Minkowski sum
\begin{align}
\conv(\mc K_{\gamma,r})&=\operatorname{diag}\left(R_{\gamma,r},R_{\gamma,r},C_{\gamma,r}\right)B^3+[-s_{\gamma,r}\hat z,s_{\gamma,r}\hat z]\notag  \\
&=:\mc Z_{\gamma,r}^{\star}
\end{align}
Since the Minkowski sum of an ellipsoid and line segment is a zonoid, $\mc Z_{\gamma,r}^{\star}$ is the the smallest enclosing zonoid of $\mc K_{\gamma,r}$, and its generator length
\begin{align}
\ell(\mc Z_{\gamma,r}^{\star})=\frac{2r(1-\gamma)}{1-r\gamma}+\frac{2r\sqrt{1-\gamma}}{\sqrt{\gamma(1-r^2\gamma)}}\arctan\sqrt{\frac{\gamma(1-r^2\gamma)}{1-\gamma}}.
\end{align}

When $r=1$, we reduce to the amplitude damping channel with no depolarizing noise.
In this case,
\begin{align}
\ell(\mc Z_{\gamma,1}^{\star})=2+\frac{2}{\sqrt{\gamma}}\arctan\sqrt{\gamma}>2
\end{align}
for every $0<\gamma<1$. 
Hence, we are unable to identify any damping parameter in the region $0<\gamma<1$ for which $\mc{A}_\gamma$ is PVM-incompatibility breaking.
This is consistent with recent work implying that the Choi state of the amplitude damping channel can demonstrate EPR steering \cite{Song-2026a}.
This alone does not imply that the amplitude damping channel is not incompatibility in the region

On the other hand, when some depolarization is allowed, Theorem~\ref{Thm:incompatibility-breaking} certifies PVM-incompatibility breaking whenever
\begin{align}
\ell(\mc Z_{\gamma,r}^{\star})\leq2.
\end{align}
As one can check, for any $r<1$ there always exists a $\gamma$ such that $\ell(\mc{Z}_{\gamma,r}^{\star})\leq 2$, thereby establishing that $\mc{Z}_{\gamma,r}^{\star}$ is PVM-incompatibility breaking.  

\subsection{Two-Copy Activation of Incompatibility-breaking channels}
\label{Sect:superactivation}

We next move to the bipartite setting with two systems that are held by Alice ($\A$) and Bob ($\B$), respectively.
A special type of measurement on the joint system $\A\B$ involves Alice and Bob measurements independently performing local POVMs followed by joint classical post-processing.
We refer to such measurements as being \textit{freely localizable} \cite{Beckman-2001a, Pauwels-2025a}, and every freely localizable POVM $\Pi_{\A\B}$ has the form
\begin{align}
    E\mapsto \Pi_{\A\B}(E)=\int_{\Lambda_{\A}\times\Lambda_{\B}}q_E(\lambda_\A,\lambda_\B)d\Pi_{\A}(\lambda_\A) d\Pi_{\B}(\lambda_\B),\notag
\end{align}
where $\Pi_{\A}$ and $\Pi_{\B}$ are local POVMs.
Mathematically, the family of dichotomic bipartite measurements that can be simulated by a freely localizable POVM forms a tensor product zonoid $\mc{K}_{\Pi_{\A}}\otimes^{\mc{Z}}\mc{K}_{\Pi_{\B}}$ \cite{Aubrun-2015a}.

It is not difficult to show that two assemblages $\mf{M}_{\A}=\{M_{a|x}\}$ and $\mf{M}_{\B}=\{M_{b|y}\}$ are compatible iff their joint assemblage $\mf{M}_{\A}\otimes\mf{M}_{\B}=\{M_{a|x}\otimes M_{b|y}\}$ is compatible.
This is true despite the fact that joint measurements across both systems are possible, thereby allowing for more powerful parent POVMs than freely localizable ones.
Hence, the additional power of non-localizable measurements for simulating POVMs becomes manifest only when considering assemblages of non-product POVMs.
Indeed, there exists compatible assemblages $\mf{M}_{\A\B}$ consisting of non-product POVMs that cannot be simulated by a freely localizable parent POVM.
However, our first result shows that such a scenario is not possible whenever $\mf{M}_{\A\B}$ contains noisy spin measurements at the compatibility threshold on each system.
\begin{theorem}
\label{Thm:two-copy-unique}
    Consider arbitrary noisy spin assemblages $\mathfrak{M}_{\A}=\{M_{\vec{m}}\}_{\vec{m}\in\mc{Z}_\A}$ and $\mathfrak{M}_{\B}=\{M_{\vec{m}'}\}_{\vec{m}'\in\mc{Z}_\B}$ such that $\ell(\mc{Z}_{\A})=\ell(\mc{Z}_{\B})=2$.
    Let $\Pi_{\A}$ and $\Pi_{\B}$ be POVMs that can simulate $\mathfrak{M}_{\A}$ and $\mathfrak{M}_{\B}$, respectively. 
    Then any bipartite POVM $\Pi_{\A\B}$ that can simulate the product assemblage $\mathfrak{M}_{\A}\otimes \mathfrak{M}_{\B}$ must be operationally equivalent to the product POVM $\Pi_{\A}\otimes\Pi_{\B}$.
\end{theorem}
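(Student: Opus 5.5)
Any parent POVM $\Pi_{\A\B}$ of the product assemblage is itself a parent of each marginal assemblage once the other system is traced out, so Theorem~\ref{Thm:zonoid-unique} should pin down its marginals. The plan is to lift this rigidity to the bipartite level by an extremality argument. Using Lemma~\ref{Lem:POVM-zonoid-generator}, we represent $\Pi_{\A\B}$ up to operational equivalence by a generating probability measure $\omega$ on $\tD(\mathbb{C}^2\otimes\mathbb{C}^2)$, so that simulated effects have the form $4\int q(\rho)\rho\, d\omega(\rho)$ with $q\in[0,1]$. By Theorem~\ref{Thm:zonoid-unique}, $\Pi_{\A}$ and $\Pi_{\B}$ are equivalent to $\Pi_{\mc{Z}_\A}$ and $\Pi_{\mc{Z}_\B}$, generated by unique measures $\nu_\A,\nu_\B$ on $S^2$. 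The target is to show that the effective generating measure of $\Pi_{\A\B}$ is the push-forward of $\nu_\A\otimes\nu_\B$ onto product pure states $\op{\hat s}{\hat s}\otimes\op{\hat s'}{\hat s'}$.

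\emph{Step 1 (marginals).} We trace out $\B$ from the simulation of $M_{\vec m}\otimes\mathbb{I}$. Since $\mathbb{I}_\B-M_{\vec{m}'}$ is also in the assemblage, $M_{\vec m}\otimes\mathbb{I}$ is simulable, and hence the $\A$-marginal POVM $X\mapsto\tr_\B\Pi_{\A\B}(X)$ simulates $\mathfrak{M}_\A$. By Theorem~\ref{Thm:zonoid-unique}, this marginal is operationally equivalent to $\Pi_{\mc{Z}_\A}$. We then revisit the proof of that theorem, specifically the saturation of Eq.~\eqref{Eq:qubit-generator-length-bound}. Equality $\ell=2$ forces $\int\|\vec s\|d\gamma=1$, so the Bloch vectors of the reduced generating states $\tr_\B\rho$ must be pure $\omega$-almost surely. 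The same holds on $\B$.

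\emph{Step 2 (product structure).} A two-qubit state whose two reductions are both pure is itself a product of pure states. Hence $\omega$ is supported on $\{\op{\hat s}{\hat s}\otimes\op{\hat s'}{\hat s'}\}$ and induces a joint measure $\kappa$ on $S^2\times S^2$. Step 1 fixes its marginals to be $\nu_\A$ and $\nu_\B$, up to the operational-equivalence identifications, which merge collinear directions.

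\emph{Step 3 (the coupling is the product).} This is the main obstacle, since a priori $\kappa$ is an arbitrary coupling of $\nu_\A$ and $\nu_\B$. We use the simulation of the product effects $M_{\vec m}\otimes M_{\vec m'}$ for boundary points $\vec m\in\partial\mc Z_\A$, $\vec m'\in\partial\mc Z_\B$. In the single-system case, simulating a boundary point of $\mc Z_\A$ with support direction $\hat n$ forces the response to be the indicator of the hemisphere $\{\hat s\cdot\hat n>0\}$, $\nu_\A$-a.e. This is the mechanism behind the uniqueness in Theorem~\ref{Thm:zonoid-unique}. Taking the partial trace of the $\A\B$ response against product-state test operators, we aim to show that the response function $q(\hat s,\hat s')$ realizing $M_{\vec m}\otimes M_{\vec m'}$ must equal $\chi_{\hat s\cdot\hat n>0}\chi_{\hat s'\cdot\hat n'>0}$, $\kappa$-a.e. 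Matching the $\sigma_i\otimes\sigma_j$ correlation components of both sides then gives
\begin{align}
\int \chi_{H_{\hat n}}(\hat s)\chi_{H_{\hat n'}}(\hat s')\,\hat s\otimes\hat s'\, d\kappa
=\Big(\int_{H_{\hat n}}\hat s\, d\nu_\A\Big)\otimes\Big(\int_{H_{\hat n'}}\hat s'\,d\nu_\B\Big)\notag
\end{align}
for all $\hat n,\hat n'$.

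Pairs of hemispheres generate the Borel $\sigma$-algebra on each factor, and the family of such test functions separates measures modulo antipodal identification; this is where care is needed. From this separation, $\kappa$ should agree with $\nu_\A\otimes\nu_\B$ as a zonoid-generating measure. By Lemma~\ref{Lem:POVM-zonoid-generator}, $\Pi_{\A\B}$ is then operationally equivalent to $\Pi_{\mc Z_\A}\otimes\Pi_{\mc Z_\B}$, and hence to $\Pi_\A\otimes\Pi_\B$.

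I expect the delicate part to be justifying the forced hemispherical response in Step 3 for a non-product parent. If the direct route fails, I would first try a generator-length argument on the four-vector zonoid to force saturation.
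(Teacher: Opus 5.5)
Your Steps 1 and 2 follow the paper's own argument. The $\A$-marginal $\tfrac12\tr_{\B}\Pi_{\A\B}$ simulates $\mathfrak{M}_{\A}$, and Lemma~\ref{Lem:Support-S2} makes the reduced generating states pure. So the generating measure $\kappa$ lives on pure product states, and by the uniqueness in Theorem~\ref{Thm:zonoid-unique} its marginals are exactly $\nu_{\A}$ and $\nu_{\B}$; there are no collinear directions to merge.

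The genuine gap is at the end of Step 3. Keeping only the $\sigma_i\otimes\sigma_j$ components throws away the information you need, and the separation property you invoke is false. Hemispheres generate the Borel $\sigma$-algebra, but they are not closed under intersection. Moreover, the hemispherical transform annihilates, for a.e.\ direction, every even measure of total mass zero. With $\Delta=\kappa-\nu_{\A}\times\nu_{\B}$, your identity does force the part of $\Delta$ even in both $\hat s$ and $\hat t$ to vanish. It sees the three parts odd in at least one variable only through low-order moments.

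Here is a concrete counterexample. Let $\nu_{\A}=\nu_{\B}=\mu$ and $d\kappa=(1+\epsilon f(\hat s)g(\hat t))\,d\mu(\hat s)\,d\mu(\hat t)$, with $f=5s_3^3-3s_3$, $g=3t_3^2-1$ and $0<\epsilon<1/4$. This is a probability measure with marginals $\mu,\mu$. Since $s_if$ is even and orthogonal to constants, $\int_{H_{\hat n}}s_if\,d\mu=0$, so your displayed identity holds for all $\hat n,\hat n'$. Yet $\kappa\neq\mu\times\mu$, and it violates the full operator identity. The $\mbb{I}\otimes\sigma_3$ component of $\int_{H_{\hat z}\times H_{\hat z}}(\mbb{I}+\hat s\cdot\vec\sigma)\otimes(\mbb{I}+\hat t\cdot\vec\sigma)\,d\Delta$ equals $\epsilon(\int_{H_{\hat z}}f\,d\mu)(\int_{H_{\hat z}}t_3g\,d\mu)=-\epsilon/64$. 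Nor is there any ``antipodal'' slack to absorb the discrepancy. The zonoid $\mc{K}_\Pi$ determines its generating measure on states uniquely, because the positive cone is pointed. And $(\hat s,\hat t)\mapsto(-\hat s,\hat t)$ sends a product state to a different one.

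What is needed, and what the paper does, is to keep the full operator identity $\int_{H_{\hat n}\times H_{\hat n'}}(\mbb{I}+\hat s\cdot\vec\sigma)\otimes(\mbb{I}+\hat t\cdot\vec\sigma)\,d\Delta=0$ for $\mu$-a.e.\ $\hat n,\hat n'$. Contract the second factor with $\sigma_\lambda$, $\lambda=0,\dots,3$. This gives signed measures $\tau_\lambda$ on the first factor with $\int_{H_{\hat n}}(\mbb{I}+\hat s\cdot\vec\sigma)\,d\tau_\lambda=0$.
\begin{itemize}
\item The scalar part $\tau_\lambda(H_{\hat n})=0$, together with injectivity of the hemispherical transform on odd measures, makes $\tau_\lambda$ even.
\item Then $X\mapsto\int_X\hat s\,d\tau_\lambda$ is an odd vector measure vanishing on hemispheres, hence zero, hence $\tau_\lambda=0$.
\item Repeating this on the other factor yields $\Delta(X\times Y)=0$.
\end{itemize}
The $\mbb{I}\otimes\mbb{I}$, $\sigma_i\otimes\mbb{I}$ and $\mbb{I}\otimes\sigma_j$ components you dropped are exactly what kill the odd parts.

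Separately, the first half of your Step 3 is only announced, and partial traces against product test operators are not what makes it work. Instead, coarse-grain over the other party's outcome. Write $q_{\pm\pm}$ for the responses of $M_{\pm\vec m}\otimes M_{\pm\vec m'}$, with $\vec m=\vec m(\hat n)$ and $\vec m'=\vec m'(\hat n')$ the support points, and disintegrate $\kappa$ over $\hat s$. The conditional average of $q_{++}+q_{+-}$ is then a response for $M_{\vec m(\hat n)}$ relative to $\nu_{\A}$, so Lemma~\ref{Lem:measure-0-equator} makes it equal to $\chi_{H_{\hat n}}$. Since $0\le q_{++}+q_{+-}\le 1$, the sum itself equals $\chi_{H_{\hat n}}(\hat s)$ $\kappa$-a.e. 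There are four such marginal identities. Checking them on each sector $H_{\pm\hat n}\times H_{\pm\hat n'}$ forces $q_{++}=\chi_{H_{\hat n}}(\hat s)\chi_{H_{\hat n'}}(\hat t)$ whenever $\nu_{\A}(G_{\hat n})=\nu_{\B}(G_{\hat n'})=0$.
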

\noindent The proof of this theorem is given in Appendix \ref{App:two-copy-boundary-parent}, and it can be seen as an extension of Theorem \ref{Thm:zonoid-unique} to the bipartite setting.

We now apply Theorem \ref{Thm:two-copy-unique} directly to the problem of activating incompatibility-breaking channels.
This problem asks whether the property of incompatibility-breaking is stable under parallel composition.
That is, if $\mc N$ is incompatibility breaking, must
$\mc N\otimes\mc N$ also be incompatibility breaking?
In terms of quantum steering, this is equivalent to the question of whether two unsteerable states can become EPR steerable when used in parallel.
Multicopy superactivation of steering is known~\cite{Quintino2016a}, and these results translate into the statement that incompatibility-breaking channels exist that fail to be incompatibility-breaking when a sufficiently large number of copies are used.
In particular, while the partially depolarizing channel 
\begin{align}
\Delta_r(X)=rX+(1-r)\frac{\tr X}{2}\mbb I
\end{align}
is incompatibility-breaking at the threshold value $r=1/2$,
this is no longer the case when using $n=6$ copies of $\Delta_{1/2}$ in parallel.
Here, we demonstrate that such a phenomenon occurs in a particularly sharp form: $\Delta_{1/2}^{\otimes n}$ fails to be incompatibility-breaking already at $n=2$.

\begin{theorem}
\label{Thm:two-copy-boundary-activation}
The channel $\Delta_{1/2}\otimes\Delta_{1/2}$ is not incompatibility
breaking.
\end{theorem}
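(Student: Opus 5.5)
The plan is to reduce the question, via the rigidity results, to a statement about one fixed parent POVM. We then exhibit a single two-qubit measurement that this parent cannot simulate after the channel acts.

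\emph{Step 1: fixing the parent.} Under $\Delta_{1/2}^\dagger$ a projective spin effect $M_{\hat m}$ becomes $M_{\hat m/2}$. So the image under $(\Delta_{1/2}\otimes\Delta_{1/2})^\dagger$ of the product assemblage of all local PVMs is the product of two full noisy-spin assemblages with Bloch vectors in $B^3/2$, and $\ell(B^3/2)=2$. Suppose $\Delta_{1/2}\otimes\Delta_{1/2}$ were incompatibility breaking. Then some $\Pi_{\A\B}$ simulates the images of all two-qubit measurements, in particular this product assemblage. By Theorem~\ref{Thm:zonoid-unique}, each factor is simulated (up to operational equivalence) only by the uniform POVM $\Pi(X)=\int_X(\mbb I+\hat s\cdot\vec\sigma)\,d\mu(\hat s)$. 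By Theorem~\ref{Thm:two-copy-unique}, $\Pi_{\A\B}$ is then operationally equivalent to $\Pi\otimes\Pi$. Hence every effect of the simulated assemblage must have the form
\[\int\!\!\int q(\hat s,\hat s')\,(\mbb I+\hat s\cdot\vec\sigma)\otimes(\mbb I+\hat s'\cdot\vec\sigma)\,d\mu(\hat s)\,d\mu(\hat s'),\]
and for each measurement the response functions must sum to $1$ pointwise.

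\emph{Step 2: the Bell measurement.} Take the four Bell projectors $\Phi_k=\frac14(\mbb I\otimes\mbb I+\sum_i (R_k)_{ii}\,\sigma_i\otimes\sigma_i)$, where
\[R_k\in\{\mathrm{diag}(1,-1,1),\ \mathrm{diag}(1,1,-1),\ \mathrm{diag}(-1,1,1),\ \mathrm{diag}(-1,-1,-1)\}.\]
Their images are $\frac14(\mbb I\otimes\mbb I+\frac14\sum_i (R_k)_{ii}\,\sigma_i\otimes\sigma_i)$, with no local Bloch terms. Matching coefficients, a response $q_k$ must satisfy $\int\!\!\int q_k=\frac14$ and $\int\!\!\int q_k\,\hat s\hat s'^{\mathtt T}=\frac1{16}R_k$.

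We test these constraints with the linear functional $\langle R_k,\int\!\!\int q\,\hat s\hat s'^{\mathtt T}\rangle-\lambda\int\!\!\int q$. For fixed $\hat s$ the quantity $u=\hat s\cdot R_k\hat s'$ is uniform on $[-1,1]$, so the maximum over $q\in[0,1]$ equals $(1-\lambda)^2/4$. The target value is $3/16-\lambda/4$. At $\lambda=1/2$ both sides equal $1/16$, so the target is attained, but only by a maximizer of this functional. Because the integrand is nonzero almost everywhere, the maximizer is unique up to null sets: $q_k=\chi\{\hat s\cdot R_k\hat s'>1/2\}$.

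\emph{Step 3: contradiction with normalization.} The four forced responses must satisfy $\sum_k q_k=1$ almost everywhere. At $\hat s=\hat s'=(1,1,1)/\sqrt3$ the values $\hat s\cdot R_k\hat s'$ are $1/3,1/3,1/3,-1$, all strictly below $1/2$. By continuity there is an open set of positive $\mu\times\mu$ measure on which every $q_k=0$, so normalization fails. Hence the Bell measurement's image is not simulable by $\Pi\otimes\Pi$, and therefore by no parent at all. So $\Delta_{1/2}\otimes\Delta_{1/2}$ is not incompatibility breaking.

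The main obstacle is Step~1's transfer of simulability under operational equivalence. The steps needing care are the uniqueness of the maximizer in Step~2, which requires checking the level set $\{u=1/2\}$ is null, and the reliance on Theorem~\ref{Thm:two-copy-unique}, whose content does the heavy lifting.
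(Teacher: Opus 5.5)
Your proof is correct, and Step~1 coincides with the paper's: Theorems~\ref{Thm:zonoid-unique} and~\ref{Thm:two-copy-unique} force the parent to be the product of uniform measures. You reach the contradiction by a different route, however.

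\textbf{The paper's route.} It never identifies the Bell responses. It contracts the four correlation constraints against the tetrahedron vectors $\vec v^{(k)}$. Using $f_k\geq 0$ and $\sum_k f_k=1$ from the outset, it bounds $\sum_k f_k\,\vec v^{(k)}\cdot(\hat n_1\circ\hat n_2)$ pointwise by $\max_k \vec v^{(k)}\cdot(\hat n_1\circ\hat n_2)$. Integrating then gives $\eta^2\leq\frac14-\frac23 I<\frac14$.

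\textbf{Your route.} You use a separate certificate $\hat s\cdot R_k\hat s'-\frac12$ for each effect, which needs only $0\leq q_k\leq 1$ and the trace constraint. Your comparison of $\frac{(1-\lambda)^2}{4}$ with $\frac{3}{16}-\frac{\lambda}{4}$ is right, and the level set $\{\hat s\cdot R_k\hat s'=\frac12\}$ is $\mu\times\mu$-null. So each $q_k$ is forced to be $\chi_{\{\hat s\cdot R_k\hat s'>1/2\}}$. Normalization then fails near $\hat s=\hat s'=(1,1,1)/\sqrt3$, the same region the paper uses to show $I>0$.

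\textbf{What your route buys.} It gives more structural information. The forced $q_k$ really does reproduce $B_k^{(1/2)}$: the local terms vanish and the correlations equal $\frac{1}{16}R_k$. So each dichotomic $\{B_k^{(1/2)},\mathbb{I}-B_k^{(1/2)}\}$ is individually simulable by the product parent, and each noisy Bell effect sits on the boundary of the product zonoid. The obstruction is therefore purely the four-outcome normalization. This is a concrete instance of the paper's remark that membership in $\mathcal{K}_\Pi$ does not suffice beyond the dichotomic case.

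\textbf{What the paper's route buys.} It gives a quantitative gap: simulation of $\{B_k^{(\eta)}\}_k$ by the product parent is excluded for every $\eta^2>\frac14-\frac23 I$. Your forcing mechanism exists only at exactly $\eta=\frac12$. For smaller $\eta$ no certificate of your form is attained, so nothing is pinned down. Your argument thus offers none of the slack that a robust version of the theorem would need.
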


\begin{proof}
Consider the measurement family consisting of all noisy product PVMs
\begin{align}
M_{\hat m_1,\hat m_2}:=
\Delta_{1/2}(M_{\hat m_1})
\otimes
\Delta_{1/2}(M_{\hat m_2})
\end{align}
together with the noisy Bell measurement
\begin{align}
\Delta_{1/2}^{\otimes2}(\Phi_k),
\qquad k=0,\ldots,3,
\label{eq:noisy-Bell}
\end{align}
where $\{\Phi_k\}_{k=0}^3$ is the standard Bell PVM. Suppose, for contradiction, that this family is compatible.
A parent POVM that simulates $\{\Delta_{1/2}(M_{\hat m})\}_{\hat{m}\in S^2}$ is given by $X\mapsto \int_X(\mathbb{I}+\hat{s}\cdot\vec{\sigma})d\mu(\hat{s})$, where $\mu$ is the generating measure on $S^2$ \cite{Heinosaari-2015a}.
Since $\ell(B^3/2)=2$, Theorem \ref{Thm:two-copy-unique} implies that the two-qubit parent POVM generated by $\mu\times\mu$ on $S^2\times S^2$ must be able to simulate $\{M_{\hat{m}_1,\hat{m}_2}\}$.
Let
\begin{align}
\vec v^{(0)}&=(1,-1,1),&
\vec v^{(1)}&=(1,1,-1),& \notag \\
\vec v^{(2)}&=(-1,1,1),&
\vec v^{(3)}&=(-1,-1,-1)
\end{align}
be the Bell-tetrahedron vectors. Define
\begin{align}
B_k^{(\eta)}:=\Delta_{\eta}^{\otimes2}(\Phi_k)=\frac{1}{4}\left[\mbb I\otimes\mbb I+\eta^2\sum_{i=1}^3v_i^{(k)}\sigma_i\otimes\sigma_i\right].
\label{Eq:noisy-Bell-effects}
\end{align}
The Bell measurement in Eq.~\eqref{eq:noisy-Bell} corresponds to
$B_k^{(1/2)}$.

Suppose that $\{B_k^{(\eta)}\}_k$ can also be simulated by the parent POVM,
so that
\begin{align}
&B_k^{(\eta)} \label{Eq:Bell-product-Haar-simulation} \\
&=\int_{S^2\times S^2}\!\!\!\!\!\!\!\!\!\!f_k(\hat n_1,\hat n_2)\left(\mbb I+\hat n_1\cdot\vec\sigma\right)\otimes\left(\mbb I+\hat n_2\cdot\vec\sigma\right)d\mu(\hat n_1)d\mu(\hat n_2),\notag 
\end{align}
where $f_k\geq0$ and $\sum_k f_k=1$. Comparing the correlation
coefficients in Eq.~\eqref{Eq:noisy-Bell-effects} and Eq.~\eqref{Eq:Bell-product-Haar-simulation} gives
\begin{align}
\int f_k(\hat n_1,\hat n_2)n_{1i}n_{2j}d\mu(\hat n_1)d\mu(\hat n_2)
=\frac{\eta^2}{4}v_i^{(k)}\delta_{ij}.
\label{Eq:Bell-correlation-constraint}
\end{align}
Using
$\sum_{k=0}^3v_i^{(k)}v_j^{(k)}=4\delta_{ij}$, we obtain
\begin{align}
\eta^2&=\frac{1}{3}\int (\sum_{k=0}^3f_k(\hat n_1,\hat n_2)\vec v^{(k)})\cdot(\hat n_1\circ\hat n_2)d\mu(\hat n_1)d\mu(\hat n_2)\notag\\
&\leq\frac{1}{3}\int\max_k\vec v^{(k)}\cdot(\hat n_1\circ\hat n_2)d\mu(\hat n_1)d\mu(\hat n_2).
\label{Eq:Bell-tetrahedron-bound}
\end{align}
Here $\circ$ denotes the Hadamard product. Using
\begin{align*}
\max_k\vec v^{(k)}\cdot\vec z=\sum_{i=1}^3|z_i|-2\min_i|z_i|\mathbf \chi_{\{z_1z_2z_3>0\}},
\end{align*}
with $\chi_{\{z_1z_2z_3>0\}}$ being the indicator function of the
condition $z_1z_2z_3>0$, and
\begin{align*}
\int\sum_{i=1}^3|n_{1i}n_{2i}|d\mu(\hat n_1)d\mu(\hat n_2)=\sum_{i=1}^3\left(\int|n_i|d\mu(\hat n)\right)^2=\frac{3}{4},
\end{align*}
we have
\begin{align}
\eta^2\leq\frac{1}{4}-\frac{2}{3}I,\label{Eq:Bell-strict-bound}
\end{align}
where 
\begin{align*}
I&:=\int_{S^2\times S^2}\min_i|n_{1i}n_{2i}|\mathbf \chi_{\{\prod_i n_{1i}n_{2i}>0\}}d\mu(\hat n_1)d\mu(\hat n_2).
\end{align*}

It remains to show that $I>0$. Define the subsets
\begin{align}
\mc C
:=
\left\{
\hat n\in S^2:
n_i\geq\frac{1}{2}
\text{ for }i=1,2,3
\right\},
\end{align}
which are clearly nonempty ($\mu(\mc C)\approx0.003>0$). Restricting the integral defining $I$ to $\mc C\times\mc C$ therefore gives
\begin{align}
I
\geq
\int_{\mc C\times\mc C}
\min_i|n_{1i}n_{2i}|
d\mu(\hat n_1)d\mu(\hat n_2)
\geq
\frac{1}{4}\mu(\mc C)^2
>0.
\end{align}
Consequently,
\begin{align}
\eta^2\leq\frac{1}{4}-\frac{2}{3}I<\frac{1}{4}.
\end{align}
This contradicts the value $\eta=1/2$ of the noisy Bell measurement
$\{\Delta_{1/2}^{\otimes2}(\Phi_k)\}_k$. Hence the noisy Bell measurement cannot be simulated by the unique parent POVM of the noisy product PVMs, and so $\Delta_{1/2}\otimes\Delta_{1/2}$ is not incompatibility breaking.
\end{proof}

\begin{remark}
Through the channel--state correspondence, this also gives a two-copy steering-activation result for the corresponding Werner states. 
Moreover, a robust extension of Theorem~\ref{Thm:two-copy-boundary-activation} demonstrating superactivation of $\Delta_{\eta}\otimes\Delta_{\eta}$ for $\eta\in[1/2-\epsilon ,1/2]$ will be presented in future work.
\end{remark}

\section{Incompatibility-Preserving Channels and States}

We now turn to the opposite extreme of incompatibility-breaking channels. 
Recall that a channel $\mc N$ is incompatibility-preserving if $\{\mc N^\dagger(M_{a|x})\}_{a,x}$ is incompatible for every incompatible measurement assemblage $\{M_{a|x}\}_{a,x}$.

\subsection{Qubit Channels}
\label{Sect:incompatibility-preserving-qubits}
The planar compatibility criterion gives a direct characterization of incompatibility-preserving qubit channels.

\begin{theorem}
\label{Thm:incompatibility-preserving-qubit}
A qubit channel is incompatibility preserving if and only if it is unitary.
\end{theorem}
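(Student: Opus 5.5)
The plan is to prove the two directions separately. The ``if'' direction is elementary. The ``only if'' direction is proved by contraposition: for a non-unitary channel we build an explicit incompatible planar family that the channel maps to a compatible one, using the planar perimeter criterion (Corollary~\ref{Cor:compatibility-perimeter}) and, for nonunital channels, Theorem~\ref{Thm:incompatibility-breaking}.

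\emph{Unitaries preserve incompatibility.} If $\mc N(\rho)=U\rho U^\dagger$, then $\mc N^\dagger(\cdot)=U^\dagger(\cdot)U$ is invertible and its inverse $V\mapsto UVU^\dagger$ is also unital and completely positive. Suppose $\{U^\dagger M_{a|x}U\}$ were compatible with parent $\Pi$. Then $\{M_{a|x}\}$ would be compatible with parent $U\Pi U^\dagger$ and the same response functions. So a unitary channel never breaks incompatibility.

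\emph{Setup for the converse.} Let $\mc N$ be non-unitary, with Heisenberg data $(D,\vec t)$ as in Eq.~\eqref{Eq:qubit-channel-adjoint}. We will use the standard facts that $\Vert D\Vert_{\rm op}\le 1$ (the image of the Bloch ball lies in the ball), and that $\mc N$ is unitary iff $\vec t=0$ and $D\in O(3)$. We then distinguish how $\mc N$ fails to be unitary.

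\emph{Case 1: some plane $P$ with $\vec t\perp P$ on which $D|_P$ is not an isometry.} Take the family of noisy planar spin measurements $\{M_{r\hat m}\}_{\hat m\in S^1\cap P}$ with $r$ slightly larger than $2/\pi$.
\begin{itemize}
\item Before the channel, $\text{Per}(rB^2)=2\pi r>4$, so the family is incompatible by Corollary~\ref{Cor:compatibility-perimeter}.
\item After the channel, $\vec t\cdot\vec m=0$, so the outputs are again unbiased noisy spin measurements. Their Bloch vectors $rD\hat m$ trace an ellipse in the plane $D(P)$.
\item The semi-axes of this ellipse are the singular values $\sigma_1\ge\sigma_2$ of $D|_P$, with $\sigma_1\le 1$ and $\sigma_2<1$. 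Hence its perimeter is strictly less than $2\pi$.
\item Choosing $r$ with $4/(2\pi)<r\le 4/\text{Per}(\text{ellipse})$ makes the output compatible, again by Corollary~\ref{Cor:compatibility-perimeter}. (If $D|_P$ has rank one, the image is a segment and the same argument applies.)
\end{itemize}
When $\vec t=0$, some plane always works. Indeed, $D\notin O(3)$ together with $\Vert D\Vert\le1$ gives a right singular vector with singular value $<1$, and any plane containing it will do.

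\emph{Case 2: $\vec t\neq 0$, which is the main obstacle.} If $D$ restricted to $\vec t^{\perp}$ is not an isometry, Case~1 already applies. The remaining situation is $D|_{\vec t^\perp}$ an isometry with $\vec t\neq0$.
\begin{itemize}
\item First attempt: show this situation is impossible. Use complete positivity, e.g.\ the Fujiwara--Algoet conditions $(\lambda_1\pm\lambda_2)^2\le(1\pm\lambda_3)^2-t_3^2$ in a singular-value frame. Two singular values equal to $1$ should force $t=0$. Note that the naive positivity bound $\Vert D^{\mathtt T}\vec s+\vec t\Vert\le 1$ alone is insufficient, since $D^{\mathtt T}\vec s$ and $\vec t$ may be orthogonal; complete positivity is what I expect to be needed.
\item Fallback if that fails: work in a plane containing $\vec t$. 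Compare the original perimeter $2\pi r$ with the perimeter of the planar convex hull of $\{rD\hat m/(1-r|\vec t\cdot\hat m|)\}$. Then use the planar version of Theorem~\ref{Thm:incompatibility-breaking} (a planar centrally symmetric convex body is a zonoid with $\ell=\text{Per}/2$) to show the output perimeter is strictly below $4$ for $r$ near $2/\pi$.
\end{itemize}

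Both cases contradict incompatibility preservation, which completes the plan.
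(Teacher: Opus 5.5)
Your strategy is essentially the paper's. You restrict to spin directions in $\vec t^{\perp}$ so the outputs stay unbiased, and use the planar perimeter criterion (Corollary~\ref{Cor:compatibility-perimeter}) to send an incompatible planar family to a compatible one unless $D$ has two unit singular values on the relevant plane. You then exclude that residual case. The paper instead uses two orthogonal spins of length $q>1/\sqrt2$ and the parallelogram bound $4q\sqrt{\lambda_1+\lambda_2}$, where $\lambda_i$ are the eigenvalues of $D^{\mathtt T}D$. This yields a finite two-measurement witness but is otherwise the same argument as your full circle and ellipse perimeter. The one step you leave unfinished is Case 2. There, your stated reason for needing complete positivity is backwards: orthogonality of $D^{\mathtt T}\vec s$ and $\vec t$ is precisely what makes positivity fail.

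Concretely, suppose $D|_{\vec t^{\perp}}$ is an isometry. Because $\Vert D\Vert\le 1$, the positive semidefinite matrix $\mathbb{I}_3-D^{\mathtt T}D$ has vanishing quadratic form on $\vec t^{\perp}$, and hence annihilates it. For any unit $v\in\vec t^{\perp}$, the unit Bloch vector $\vec s:=Dv$ therefore satisfies $D^{\mathtt T}\vec s=v$. So $\mathcal N$ sends it to $v+\vec t$, whose norm is $\sqrt{1+\Vert\vec t\Vert^2}>1$ unless $\vec t=\vec 0$. Case 2 is thus vacuous, and the fallback can be dropped.

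Your Fujiwara--Algoet attempt also succeeds. Here $\vec t$ is an eigenvector of $D^{\mathtt T}D$, so it lies along the third singular axis. The aligned conditions you quote therefore apply and force $t_3=\Vert\vec t\Vert=0$.

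With this fix, your argument invokes complete positivity only to rule out orthogonal $D$ with $\det D=-1$ at $\vec t=\vec 0$. The paper, by contrast, also needs the Fujiwara--Algoet conditions to force the third singular value to have modulus one once $\lambda_1=\lambda_2=1$. That is because its bound uses the top two eigenvalues of $D^{\mathtt T}D$, rather than a plane chosen through a contracting direction.
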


\begin{proof}
If $\mc N$ is unitary, then $\mc N^\dagger$ is reversible and therefore preserves compatibility and incompatibility.

Conversely, suppose that $\mc N$ is not unitary. Let $D$ and $\vec t$ be its Heisenberg parameters as in Eq.~\eqref{Eq:qubit-channel-adjoint}. Choose two orthogonal vectors $\vec m_0,\vec m_1\in\vec t^\perp$ having the same length
\begin{align}
\Vert\vec m_0\Vert=\Vert\vec m_1\Vert=q>\frac{1}{\sqrt2},
\end{align}
when $1/\sqrt2\le q\leq1$, the spin measurements
\begin{align}
M_{\vec m_k}=\frac{1}{2}\left[\mbb I+\vec m_k\cdot\vec\sigma\right], \qquad k=0,1,
\end{align}
are incompatible; this can also be seen from the fact that the polygon generated by $\{\pm\vec m_0,\pm\vec m_1\}$ has perimeter
\begin{align}
2\Vert\vec m_0-\vec m_1\Vert+2\Vert\vec m_0+\vec m_1\Vert=4\sqrt2q>4.
\end{align}
Since $\vec t\cdot\vec m_k=0$, we have:
\begin{align}
\mc N^\dagger(M_{\vec m_k}) = \frac{1}{2} \left[ \mbb I+(D\vec m_k)\cdot\vec\sigma \right].
\end{align}
Define $\hat b_\pm:=\frac{\vec m_0\pm\vec m_1}{\sqrt{\Vert\vec m_0\Vert^2+\Vert\vec m_1\Vert^2}}=\frac{\vec m_0\pm\vec m_1}{\sqrt2q}$.
The vectors $\hat b_+$ and $\hat b_-$ are orthonormal. And let $\lambda_1\geq\lambda_2\geq\lambda_3$ be the eigenvalues of
$D^{\mathtt T}D$, the polygon generated by the output Bloch vectors $D\vec m_0$ and $D\vec m_1$ has perimeter
\begin{align}
& 2\Vert D(\vec m_0-\vec m_1)\Vert +2\Vert D(\vec m_0+\vec m_1)\Vert\notag \\
& =2\sqrt2q\left(\Vert D\hat b_-\Vert+\Vert D\hat b_+\Vert\right) \leq4q\sqrt{\Vert D\hat b_-\Vert^2+\Vert D\hat b_+\Vert^2}\notag \\
& \leq4q\sqrt{\lambda_{1}+\lambda_{2}}, \label{Eq:channel-output-parallelogram-bound}
\end{align}
Choose $1/\sqrt2<q\le 1/\sqrt{\lambda_1+\lambda_2}$, the right-hand side of Eq.~\eqref{Eq:channel-output-parallelogram-bound} can fail to be strictly smaller than four for every such choice only if $\lambda_1=\lambda_2=1$.

Let $\xi_i$ be the signed singular values of $D$, so that $\xi_i^2=\lambda_i$. The generalized Fujiwara--Algoet conditions~\cite{Braun2014} imply
\begin{align}
& (\xi_i\pm\xi_j)^2\leq(1\pm\xi_k)^2. \\
& \|\vec t\|^2\leq1-\sum_{i=1}^3\xi_i^2+2\xi_1\xi_2\xi_3
\end{align}
Since $|\xi_1|=|\xi_2|=1$, these inequalities force
$|\xi_3|=1$ and $\vec t=0$. Thus $\mc N$ is unitary, contrary to the assumption.
\end{proof}

\subsection{Arbitrary Dimensions}
\label{Sect:incompatibility-preserving-arbitrary}

The qubit proof above is geometric and constructive. We now show that the same classification holds in every finite dimension. The general channel
result follows from a known rigidity theorem for maps preserving the coexistency relation between quantum effects.

For an effect $E$, let $\msf M^E:=\{E,\mbb I-E\}$ denote the corresponding binary POVM. Two effects $E$ and $F$ are coexistent if and only if
$\msf M^E$ and $\msf M^F$ are compatible.

We use the following consequence of
Ref.~\cite[Theorem~1.2]{MoriSemrl2021}.

\begin{lemma}
\label{Lem:linear-coexistency-preserver}
Let $\Lambda:\mathcal L(\mc H_{\A})\longrightarrow\mathcal L(\mc H_{\B})$
be a linear, completely positive, and unital map, where $\dim\mc H_{\A}=\dim\mc H_{\B}=d<\infty$. Suppose that, for all effects
$E,F$,
\begin{align}
& \msf M^E\text{ is compatible with }\msf M^F \notag \\
\Longleftrightarrow\quad &
\msf M^{\Lambda(E)}\text{ is compatible with }
\msf M^{\Lambda(F)}.
\label{Eq:binary-compatibility-reflection}
\end{align}
Then there exists a unitary $U:\mc H_{\A}\to\mc H_{\B}$ such that
\begin{align}
\Lambda(X)=UXU^\dagger
\end{align}
for every $X\in\mathcal L(\mc H_{\A})$.
\end{lemma}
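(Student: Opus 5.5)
The plan is to reduce the statement to the rigidity theorem of Ref.~\cite[Theorem~1.2]{MoriSemrl2021}. That theorem describes bijections $\Phi$ of the effect algebra $[0,\mathbb{I}]$ that preserve coexistency in both directions. Up to conjugation by a unitary or antiunitary $U$, such a $\Phi$ acts on a general effect $E$ as $E\mapsto U E' U^\dagger$, where $E'$ is either $E$ or its complement $\mathbb{I}-E$. The only further freedom is an essentially arbitrary relabelling on the exceptional classes of effects, namely those coexistent with everything and, for $d=2$, those lying in commutative pieces. The work therefore splits into two parts: first verify that the restriction of $\Lambda$ to effects satisfies the hypotheses of that theorem, and then use linearity and complete positivity to remove all the nonlinear and antiunitary freedom.

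\textbf{Step 1 (hypotheses).} Since $\Lambda$ is positive and unital, it maps $[0,\mathbb{I}]$ into $[0,\mathbb{I}]$. For injectivity, suppose $\Lambda(X)=0$ for some Hermitian $X$, and set $E=\mathbb{I}/2+\epsilon X$ with $\epsilon$ small enough that $E$ is an effect. Then $\Lambda(E)=\mathbb{I}/2$, which is coexistent with every effect. By Eq.~\eqref{Eq:binary-compatibility-reflection}, $E$ is then coexistent with every effect $F$. In particular $E$ is coexistent with every projection $P$, and an effect coexistent with a projection must commute with it. Hence $E$ commutes with all projections, so $E$ is scalar. Then $X$ is scalar, and $\Lambda(\mathbb{I})=\mathbb{I}$ forces $X=0$. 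Because $\dim\mc H_\A=\dim\mc H_\B$, the map $\Lambda$ is a linear bijection.

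\textbf{Step 2 (surjectivity onto effects).} It remains to show that $\Lambda([0,\mathbb{I}])=[0,\mathbb{I}]$, i.e.\ that $\Lambda^{-1}$ is positive. The approach is to show first that $\Lambda$ maps rank-one projections to rank-one projections. The idea is that projections are characterized within the coexistency graph: $P$ is a projection exactly when the set of effects coexistent with it equals its commutant, which is maximally restrictive. Because $\Lambda$ preserves coexistency in both directions and is a linear bijection, it should carry this combinatorial characterization over. A positive unital map that sends projections onto projections bijectively is then a Jordan automorphism, and this in particular gives surjectivity onto effects. I expect this step to be the main obstacle. If it cannot be settled cleanly, the fallback is to check whether the cited theorem can be applied to coexistency-preserving maps that are merely injective, or to transport the problem to the closed convex image $\Lambda([0,\mathbb{I}])$.

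\textbf{Step 3 (removing the freedom).} Once the Mori--\v{S}emrl form is available, linearity finishes the argument. On an open set of non-exceptional effects, $\Lambda$ agrees with either $E\mapsto UEU^\dagger$ or $E\mapsto U(\mathbb{I}-E)U^\dagger$, with $U$ unitary or antiunitary. The complementation branch is not linear; on any open set it is also incompatible with $\Lambda(0)=0$ via scaling $E\mapsto tE$. So it can occur at most on a set that linearity extends away. By linearity and density, $\Lambda(X)=UXU^\dagger$ holds on all of $\mathcal L(\mc H_\A)$. An antiunitary $U$ would make $\Lambda$ the transpose composed with a unitary conjugation, which is not completely positive. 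This contradicts the hypothesis, so $U$ is unitary. For $d=2$, where the cited theorem leaves additional freedom, the same linearity argument applies; alternatively one can fall back on Theorem~\ref{Thm:incompatibility-preserving-qubit}.
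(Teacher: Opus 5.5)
Your Step~2 is a genuine gap, and it carries the whole lemma rather than being a technical preliminary. The sketched route is circular. The hypothesis only says $E\sim F\iff\Lambda(E)\sim\Lambda(F)$ for effects $E,F$ on $\mathcal{H}_{\A}$, so it constrains coexistency only among effects lying in the image $\Lambda([0,\mathbb{I}])$. Your characterization of projections (``the effects coexistent with $P$ are exactly those commuting with it'', or ``its coexistency neighbourhood is minimal'') quantifies over \emph{all} effects on $\mathcal{H}_{\B}$. It also uses commutation, which you do not know $\Lambda$ preserves. Transporting it to $\Lambda(P)$ therefore already presupposes that $\Lambda$ maps onto $[0,\mathbb{I}]$, which is what Step~2 is meant to prove. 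Moreover, once $\Lambda^{-1}$ is known to be positive, $\Lambda$ is a unital order isomorphism. Kadison's theorem then makes it a Jordan automorphism, $X\mapsto UXU^\dagger$ or $X\mapsto UX^{\mathtt T}U^\dagger$, and complete positivity finishes the proof with no coexistency theorem at all. So the proposal leaves exactly the hard part open. A smaller point: Step~3 needs an argument that some open set has a single branch. For example, both coincidence sets are closed by continuity, so Baire gives one of them nonempty interior.

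The missing idea is the one you list as a fallback, and it is what the paper does. As the paper invokes it, Theorem~1.2 of Mori--\v{S}emrl concerns \emph{continuous} maps on the effect algebra that preserve coexistency in both directions, with no bijectivity assumed. $\Lambda$ is linear on a finite-dimensional space, hence continuous, and being positive and unital it maps effects to effects. Identifying $\mathcal{H}_{\A}\cong\mathcal{H}_{\B}$, the theorem applies directly. It gives globally either $E\mapsto UEU^\dagger$ or $E\mapsto U(\mathbb{I}-E)U^\dagger$, with $U$ unitary or antiunitary. Unitality excludes the orthocomplemented form, since that form sends $\mathbb{I}$ to $0$. Complete positivity excludes the antiunitary case, which is a transpose composed with a unitary conjugation. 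Your injectivity argument in Step~1 is correct, but on this route Steps~1 and~2 and the branch selection in Step~3 are all unnecessary.
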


\begin{proof}
By Ref.~\cite[Theorem~1.2]{MoriSemrl2021}, a continuous map on quantum effects that preserves coexistency in both directions must be given by
either a unitary or antiunitary conjugation, possibly composed with orthocomplementation. The orthocomplementation cases are excluded by
unitality, since orthocomplementation maps $\mbb I$ to $\mbb 0$. An antiunitary conjugation can be represented, in a fixed basis, as transposition followed by unitary conjugation. Since transposition is not completely positive for $d\geq2$, this possibility is excluded by the complete positivity of $\Lambda$. Hence only unitary conjugation remains.
\end{proof}

We can now obtain the channel classification immediately.

\begin{theorem}
\label{Thm:incompatibility-preserving-general-channel}
Let $\mc N:\mathcal L(\mc H_{\B})\longrightarrow\mathcal L(\mc H_{\A})$
be a quantum channel, where
$\dim\mc H_{\A}=\dim\mc H_{\B}=d<\infty$. Then $\mc N$ is incompatibility preserving if and only if it is unitary.
\end{theorem}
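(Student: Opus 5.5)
The plan is to reduce the statement to Lemma~\ref{Lem:linear-coexistency-preserver}, applied to the Heisenberg adjoint $\Lambda:=\mc N^\dagger$. This map is linear, completely positive and unital from $\mathcal L(\mc H_{\A})$ to $\mathcal L(\mc H_{\B})$, and both spaces have dimension $d$.

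The "if" direction is immediate. If $\mc N(\rho)=V\rho V^\dagger$ with $V$ unitary, then $\mc N^\dagger(X)=V^\dagger XV$ is invertible, and its inverse $Y\mapsto VYV^\dagger$ is again unital and completely positive. Any parent POVM $\Pi$ with post-processing $p$ for $\{\mc N^\dagger(M_{a|x})\}$ therefore yields the parent $V\Pi(\cdot)V^\dagger$, with the same $p$, for $\{M_{a|x}\}$. Hence compatibility of the output forces compatibility of the input, which is exactly incompatibility preservation.

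For the "only if" direction, assume $\mc N$ is incompatibility preserving. I will verify the two-sided equivalence in Eq.~\eqref{Eq:binary-compatibility-reflection}.

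\begin{itemize}
\item \emph{Forward direction.} This holds for every channel. If $\msf M^E$ and $\msf M^F$ have a common parent $\Pi$, then $\mc N^\dagger\circ\Pi$ is a POVM, by positivity and unitality. Linearity of $\mc N^\dagger$ lets it pass through the integral in Eq.~\eqref{Eq:compatibility-defn}, so the same response functions simulate $\msf M^{\mc N^\dagger(E)}$ and $\msf M^{\mc N^\dagger(F)}$.
\item \emph{Backward direction.} This is exactly the preservation hypothesis restricted to the two-element dichotomic assemblage $\{\msf M^E,\msf M^F\}$. If that assemblage is incompatible, its image is incompatible. Contrapositively, compatibility of the images implies compatibility of $\msf M^E,\msf M^F$. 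Here I use $\mc N^\dagger(\mbb I-E)=\mbb I-\mc N^\dagger(E)$, so the image of $\msf M^E$ is indeed $\msf M^{\mc N^\dagger(E)}$.
\end{itemize}

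Lemma~\ref{Lem:linear-coexistency-preserver} then gives a unitary $U$ with $\mc N^\dagger(X)=UXU^\dagger$. Taking the adjoint, $\mc N(\rho)=U^\dagger\rho U$, so $\mc N$ is a unitary channel.

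The main point to check is whether the lemma's hypotheses are really met. The cited rigidity theorem of Ref.~\cite{MoriSemrl2021} concerns maps on the effect algebra, and it typically needs the map to be a bijection onto the effects, or at least to satisfy some nondegeneracy condition.

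\begin{itemize}
\item If such a condition is needed beyond what the lemma states, I would first establish injectivity of $\mc N^\dagger$ on effects. Suppose $\mc N^\dagger(E)=\mc N^\dagger(F)$ with $E\neq F$. Then I would pick an effect $G$ that is coexistent with $E$ but not with $F$; a projection commuting with $E$ but not with a projection-like $F$ should work, after first reducing to the case where $E$ or $F$ is non-trivial. The images of $(E,G)$ and $(F,G)$ coincide, yet one pair is compatible and the other is not, which contradicts the two-sided equivalence.
\item Injectivity together with linearity in equal dimensions gives bijectivity of $\mc N^\dagger$ on $\mathcal L(\mc H)$.
\item Surjectivity onto the effect interval $[0,\mbb I]$ is the remaining issue. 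I expect this to be the real obstacle. If it cannot be obtained directly, my fallback is to argue, via the coexistency-preserving property, that the image must contain all rank-one projections, since these are characterized as the effects coexistent only with operators commuting with them. That would yield surjectivity onto the effects.
\end{itemize}
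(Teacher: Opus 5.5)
Your argument is correct and is essentially the paper's proof. The paper likewise observes that $\mc N^\dagger$ automatically sends compatible pairs to compatible pairs, gets the reverse implication for all pairs $\msf M^E,\msf M^F$ as the contrapositive of incompatibility preservation, and then applies Lemma~\ref{Lem:linear-coexistency-preserver} to obtain $\mc N^\dagger(X)=UXU^\dagger$. Your contingency plan about injectivity and surjectivity onto the effect interval is unnecessary: the paper invokes Ref.~\cite{MoriSemrl2021} for continuous maps preserving coexistency in both directions, with no surjectivity required, and $\mc N^\dagger$ is continuous because it is linear on a finite-dimensional space.
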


\begin{proof}
If $\mc N$ is unitary, then $\mc N^\dagger$ is reversible and therefore preserves both compatibility and incompatibility.

Conversely, suppose that $\mc N$ is incompatibility preserving. Since $\mc N^\dagger$ is positive and unital, it automatically maps every compatible measurement family to a compatible measurement family.
On the other hand, incompatibility preservation implies, by contraposition, that if $\mc N^\dagger(\msf M)$ is compatible, then $\msf M$ must have been compatible. Hence compatibility is preserved in both directions. In particular, for all effects $E,F$,
\begin{align}
& \msf M^E\text{ is compatible with }\msf M^F \notag \\
\Longleftrightarrow\quad &
\msf M^{\mc N^\dagger(E)}\text{ is compatible with }
\msf M^{\mc N^\dagger(F)}.
\end{align}
Since $\mc N^\dagger$ is linear, completely positive, and unital,
Lemma~\ref{Lem:linear-coexistency-preserver} implies
\begin{align}
\mc N^\dagger(X)=UXU^\dagger
\end{align}
for some unitary $U$. Hence $\mc N$ is unitary.
\end{proof}
\subsection{Bipartite States}
\label{Sect:States-incompatibility-preserving}
We next translate Theorem~\ref{Thm:incompatibility-preserving-qubit} into a steering statement. We say that a bipartite state $\rho_{\A\B}$ is incompatibility-preserving from Alice to Bob if every incompatible measurement assemblage $\{M_{a|x}\}$ on Alice generates a steerable assemblage
\begin{align}
\sigma_{a|x}:=\tr_{\A}\left[(M_{a|x}\otimes\mbb I)\rho_{\A\B}\right].
\end{align}
The reverse implication always holds: compatible measurements generate an unsteerable assemblage for every bipartite state.

From the steering--joint-measurability correspondence gives~~\cite{Quintino2014,Uola2014,Uola2015,Kiukas2017}, we thus have, for two-qubit states,

\begin{corollary}
\label{Cor:incompatibility-steering-two-qubit}
A two-qubit state is incompatibility preserving if and only if it is pure and entangled.
\end{corollary}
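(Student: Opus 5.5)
The plan is to reduce the statement to Theorem~\ref{Thm:incompatibility-preserving-qubit} via the channel--state correspondence of Section~\ref{Sect:Prelim-Prior}, while handling the local contraction and the rank of Bob's marginal carefully.

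\emph{Pure entangled $\Rightarrow$ preserving.} Write $\rho_{\A\B}=\op{\psi}{\psi}$ with $\ket{\psi}=(\mathbb{I}\otimes\sqrt{\rho_\B}\,V)\sqrt{2}\ket{\Phi^+}$-type form; concretely, entanglement means Schmidt rank two, so $\rho_\B$ is invertible. Then $\hat\rho_{\A\B}=\frac12(\mathbb{I}\otimes\rho_\B^{-1/2})\rho_{\A\B}(\mathbb{I}\otimes\rho_\B^{-1/2})$ is maximally entangled, i.e.\ the Choi state of a unitary channel. By the computation $\sigma_{a|x}=\frac12\mc{N}^\dagger(M_{a|x})^{\mathtt T}$, Bob's assemblage from $\hat\rho$ is unsteerable iff $\{U^\dagger M_{a|x}U\}$ is compatible, iff $\{M_{a|x}\}$ is compatible. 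The invertible local contraction does not change steerability (as cited from \cite{Quintino-2015a}): an LHS model for one assemblage maps to an LHS model for the other via $\sigma\mapsto \rho_\B^{\pm1/2}\sigma\rho_\B^{\pm1/2}$, and this congruence preserves positivity and the no-signalling marginal structure. Hence every incompatible assemblage steers $\rho_{\A\B}$.

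\emph{Preserving $\Rightarrow$ pure entangled.} First, if $\rho_\B$ is rank deficient, Bob's conditional states all live on a one-dimensional support, so every assemblage $\sigma_{a|x}=p(a|x)\op{\phi}{\phi}$ trivially admits an LHS model. Taking any incompatible pair of Pauli measurements then shows that $\rho$ is not preserving. So assume $\rho_\B$ is full rank. In that case $\rho$ is, up to the invertible local contraction, the Choi state of a qubit channel $\mc{N}$, and by the same equivalence, $\rho$ is preserving iff $\mc{N}$ is incompatibility preserving. Theorem~\ref{Thm:incompatibility-preserving-qubit} then forces $\mc{N}$ to be unitary, so $\hat\rho$ is a maximally entangled pure state. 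Since $\rho=2(\mathbb{I}\otimes\rho_\B^{1/2})\hat\rho(\mathbb{I}\otimes\rho_\B^{1/2})$, $\rho$ is rank one, hence pure, and it has Schmidt rank two, hence entangled.

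The main obstacle is the middle step: justifying rigorously that the contraction $\rho\mapsto\hat\rho$ preserves steerability for arbitrary (possibly continuous) assemblages, and that the channel defined from $\hat\rho$ is genuinely CPTP with Choi state $\hat\rho$. I would verify this directly: $\tr_\A\hat\rho=\mathbb{I}/2$ gives trace preservation, and the LHS correspondence is the congruence described above, which is bijective when $\rho_\B>0$. The degenerate rank-one case of $\rho_\B$ is handled separately, as done above.
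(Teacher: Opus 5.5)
Your proposal is correct and follows the paper's route: you transport Theorem~\ref{Thm:incompatibility-preserving-qubit} through the Choi-state/steering correspondence, using the invertible local filter $\rho_{\B}^{-1/2}$ in both directions. The one difference is that you rule out a rank-one $\rho_{\B}$ with a trivial LHS model, where the paper's general Theorem~\ref{Thm:incompatibility-preserving-general-state} uses an injectivity and dimension-counting argument; for two qubits your version is valid.
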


Although Corollary~\ref{Cor:incompatibility-steering-two-qubit} will follow from the general theorem below, the qubit proof is retained because it is elementary, uses the planar geometry developed above, and gives an explicit two-measurement witness.
We next derive the corresponding classification of bipartite states by associating a quantum channel with the steering-equivalent-observable map.

\begin{theorem}
\label{Thm:incompatibility-preserving-general-state}
Let $\rho_{\A\B}\in\mathcal D(\mc H_{\A}\otimes\mc H_{\B})$, where
$\dim\mc H_{\A}=\dim\mc H_{\B}=d$. Then $\rho_{\A\B}$ is incompatibility
preserving from Alice to Bob if and only if it is a pure state of Schmidt
rank $d$.
\end{theorem}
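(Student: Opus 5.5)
The plan is to reduce the state statement to Theorem~\ref{Thm:incompatibility-preserving-general-channel} through the channel--state correspondence of Section~\ref{Sect:Prelim-Prior}, while paying attention to the rank of Bob's marginal $\rho_{\B}$. Let $P$ be the projector onto $\supp(\rho_{\B})$ and let $r=\text{rank}(\rho_{\B})$. On this support, define the completely positive map
\[
\Lambda(X):=\big[\rho_{\B}^{-1/2}\tr_{\A}[(X\otimes\mathbb{I})\rho_{\A\B}]\rho_{\B}^{-1/2}\big]^{\mathtt T},
\]
which sends $\mathcal L(\mc H_{\A})$ to $\mathcal L(P\mc H_{\B})$. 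It is unital because $\tr_{\A}\rho_{\A\B}=\rho_{\B}$, so $\Lambda=\mc N^\dagger$ for the channel $\mc N$ built from $\hat\rho_{\A\B}$. Every steered assemblage then has the form $\sigma_{a|x}=\rho_{\B}^{1/2}\Lambda(M_{a|x})^{\mathtt T}\rho_{\B}^{1/2}$, and every $\sigma_{a|x}$ is supported in $P\mc H_{\B}$. Conjugation by the invertible operator $\rho_{\B}^{1/2}$ on the support, together with the transpose, maps LHS models to parent POVMs and back. Hence $\{\sigma_{a|x}\}$ is unsteerable iff $\{\Lambda(M_{a|x})\}$ is compatible on $P\mc H_{\B}$. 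This is the step I consider the main obstacle: it must be checked carefully, including that it gives the same normalizations as the full-rank Choi discussion earlier in the paper. Once it holds, $\rho_{\A\B}$ is incompatibility preserving iff $\Lambda$ is.

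Next I would rule out $r<d$. I embed $P\mc H_{\B}\cong\mathbb C^r$ into $\mathbb C^d$ with an isometry $V$ and set
\[
\Lambda'(X):=V\Lambda(X)V^\dagger+\frac{\tr X}{d}(\mathbb I-VV^\dagger).
\]
This map is linear, completely positive and unital on $\mathcal L(\mathbb C^d)$. Its outputs are block diagonal, and the second block consists of scalar multiples of a fixed projector. A parent POVM for $\{\Lambda(M_{a|x})\}$ therefore extends to one for $\{\Lambda'(M_{a|x})\}$ by adding a trivial block. Conversely, compressing a parent POVM for $\Lambda'$ by $V^\dagger(\cdot)V$ gives a parent for $\Lambda$. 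So $\Lambda'$ preserves incompatibility whenever $\Lambda$ does. Lemma~\ref{Lem:linear-coexistency-preserver} then says $\Lambda'$ is a unitary conjugation, which is impossible because $\Lambda'$ is not surjective (it never produces off-diagonal blocks). Hence $r=d$.

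For full rank, $\Lambda$ is a unital CP map between equal dimensions, so Theorem~\ref{Thm:incompatibility-preserving-general-channel} (equivalently Lemma~\ref{Lem:linear-coexistency-preserver}) gives $\Lambda(X)=UXU^\dagger$. Then $\hat\rho_{\A\B}$ is the Choi state of a unitary channel, which is a locally rotated maximally entangled state. Undoing the contraction gives $\rho_{\A\B}\propto(\mathbb I\otimes\rho_{\B}^{1/2})(U\otimes\mathbb I)\Phi^+(U^\dagger\otimes\mathbb I)(\mathbb I\otimes\rho_{\B}^{1/2})$. This is a rank-one operator, i.e.\ a pure state, and its Schmidt rank is $d$ since $\rho_{\B}$ is invertible.

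For the converse, let $\rho_{\A\B}$ be pure with Schmidt rank $d$. Then $\rho_{\B}$ is full rank, and the same computation gives $\Lambda(X)=UXU^\dagger$ for some unitary $U$. By the equivalence in the first paragraph, an assemblage steers iff $\{UM_{a|x}U^\dagger\}$ is incompatible, which holds iff $\{M_{a|x}\}$ is incompatible.
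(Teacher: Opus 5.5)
Your proof is correct. On the full-rank part it matches the paper. Both arguments pass to the steering-equivalent unital CP map; you transpose on the output and the paper on the input, and either works because $X\mapsto\tr_{\A}[(X\otimes\mathbb I)\rho_{\A\B}]$ is a CP map precomposed with the transpose. Both then apply Theorem~\ref{Thm:incompatibility-preserving-general-channel} and read off purity and Schmidt rank $d$ from the rank-one Choi operator and the invertibility of $\rho_{\B}$.

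The real difference is how you exclude $\operatorname{rank}\rho_{\B}<d$. The paper argues directly. Suppose $\Phi_\rho(E)=\rho_{\B}^{-1/2}\tr_{\A}[(E\otimes\mathbb I)\rho_{\A\B}]\rho_{\B}^{-1/2}$ annihilated a nonzero Hermitian $\Delta$. Then $E=\mathbb I/2+\epsilon\Delta$ would be a nonscalar effect mapped to $\mathbb I/2$. Pairing it with a projection that does not commute with $E$ gives an incompatible pair with compatible image. Hence $\Phi_\rho$ is injective, and $d^2\le(\dim K_{\B})^2$.

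You instead pad to $\Lambda'(X)=V\Lambda(X)V^\dagger+\frac{\tr X}{d}(\mathbb I-VV^\dagger)$ and transfer incompatibility preservation by compression. You then invoke Lemma~\ref{Lem:linear-coexistency-preserver} a second time and contradict surjectivity, since the range of $\Lambda'$ has dimension at most $r^2+1<d^2$.

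Your route is uniform, with one rigidity tool doing all the work. However, it spends the Mori--\v{S}emrl theorem on a fact that has an elementary proof, and it needs $d_{\A}=d_{\B}$ for the lemma to apply. The paper's injectivity argument is dimension-free and gives $d_{\A}\le\operatorname{rank}(\rho_{\B})\le d_{\B}$. The paper reuses this afterwards to rule out incompatibility-preserving states when $d_{\B}<d_{\A}$.

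Two small remarks. With the transpose on the output, $\Lambda$ lands in operators on $P^{\mathtt T}\mc H_{\B}$ rather than $P\mc H_{\B}$, which is harmless. The step you flag as the main obstacle is just the standard steering-equivalent-observables correspondence, and it goes through verbatim on the support of $\rho_{\B}$.
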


\begin{proof}
Suppose first that $\rho_{\A\B}$ is incompatibility preserving. Let
$K_{\B}:=\supp(\rho_{\B})$ and define
\begin{align}
\Phi_\rho(E):=
\rho_{\B}^{-1/2}
\tr_{\A}\left[(E\otimes\mbb I)\rho_{\A\B}\right]
\rho_{\B}^{-1/2},
\end{align}
where the inverse is taken on $K_{\B}$. Incompatibility preservation forces
$\Phi_\rho$ to be injective. Indeed, if
$\Phi_\rho(\Delta)=0$ for some nonzero Hermitian $\Delta$, then for
sufficiently small $\epsilon>0$,
$E=\mbb I/2+\epsilon\Delta$ is a nonscalar effect satisfying
$\Phi_\rho(E)=\mbb I/2$. Choosing a projection $P$ that does not commute
with $E$, the binary POVMs $\msf M^E$ and $\msf M^P$ are incompatible,
whereas $\msf M^{\Phi_\rho(E)}$ is compatible with every POVM, contradicting
incompatibility preservation.

Since $\Phi_\rho:\text{Herm}(\mc H_{\A})\to
\text{Herm}(K_{\B})$ is injective $d^2\leq(\dim K_{\B})^2$. 
But $\dim K_{\B}\leq d$, and therefore $\dim K_{\B}=d$. Hence $\rho_{\B}$
has full rank.

Define the state-associated channel through
\begin{align}
\mc N_\rho^\dagger(X):=
\rho_{\B}^{-1/2}
\tr_{\A}\left[(X^{\mathtt T}\otimes\mbb I)\rho_{\A\B}\right]
\rho_{\B}^{-1/2}.
\label{Eq:state-associated-channel}
\end{align}
This map is unital and completely positive. By the
steering--joint-measurability correspondence, together with the fact that
transposition preserves compatibility, $\mc N_\rho$ is incompatibility
preserving. Hence, by
Theorem~\ref{Thm:incompatibility-preserving-general-channel},
$\mc N_\rho$ is unitary. Its Choi matrix
\begin{align}
J(\mc N_\rho^\dagger)
=
(\mbb I\otimes\rho_{\B}^{-1/2})
\rho_{\A\B}
(\mbb I\otimes\rho_{\B}^{-1/2})
\end{align}
therefore has rank one. Since $\rho_{\B}$ is full rank, the local filter is
invertible, and hence $\rho_{\A\B}$ itself has rank one and is pure. Since $\text{rank}(\rho_{\B})=d$, $\rho_{\A\B}$ has Schmidt rank
$d$.

Conversely, suppose that $\rho_{\A\B}=\op{\psi}{\psi}$ is pure with Schmidt
rank $d$. We may write
\begin{align}
\ket{\psi}=\sum_{i=1}^d\sqrt{\lambda_i}\ket{i}_{\A}\ket{i}_{\B},
\qquad \lambda_i>0,
\end{align}
so that $\rho_{\B}=\sum_i\lambda_i\op{i}{i}$, and
\begin{align}
\rho_{\B}^{-1/2}\tr_{\A}\left[(X\otimes\mbb I)\rho_{\A\B}\right]\rho_{\B}^{-1/2}=X^{\mathtt T},
\end{align}
and hence $\Phi_\rho(X)=X^{\mathtt T}$. Equivalently, the state-associated
channel in Eq.~\eqref{Eq:state-associated-channel} satisfies
\begin{align}
\mc N_\rho^\dagger(X)=\Phi_\rho(X^{\mathtt T})=X.
\end{align}
Thus $\mc N_\rho$ is unitary. Since transposition
and unitary conjugation preserve compatibility in both directions, every
incompatible measurement assemblage on Alice gives an incompatible
steering-equivalent measurement assemblage on Bob. 
\end{proof}

\label{Sect:incompatibility-preserving-unequal}

The assumption of equal input and output dimensions in
Theorem~\ref{Thm:incompatibility-preserving-general-channel} is essential.
When the output dimension is larger than the input dimension,
nonunitary incompatibility-preserving maps can exist. A simple class of
such maps is given by the following construction.

\begin{proposition}
\label{Prop:unequal-dimensional-preserver}
Let $\Lambda:\mathcal L(\mc H_{\A})\to\mathcal L(\mc H_{\B})$ be a
completely positive unital map. Suppose that, up to a unitary on
$\mc H_{\B}$, it has the form
\begin{align}
\Lambda(X)=U\left[X\otimes\mbb I_m\oplus\mc E(X)\right]U^\dagger,
\label{Eq:general-preserver-sufficient}
\end{align}
where $\mc E$ is a completely positive unital map acting on the remaining
orthogonal subspace. Then $\Lambda$ is incompatibility preserving.
\end{proposition}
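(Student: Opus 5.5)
We prove the contrapositive: if the output assemblage $\{\Lambda(M_{a|x})\}$ is compatible, then the input assemblage $\{M_{a|x}\}$ was already compatible. Since the outer unitary $U$ is reversible and preserves compatibility in both directions, we may take $\Lambda(X)=X\otimes\mbb I_m\oplus\mc E(X)$, with $\mc H_{\B}=(\mc H_{\A}\otimes\mbb C^m)\oplus K$. The idea is to build a completely positive, unital \emph{left inverse} $\Gamma:\mathcal L(\mc H_{\B})\to\mathcal L(\mc H_{\A})$ with $\Gamma\circ\Lambda=\text{id}$. We then push the parent POVM of the output assemblage back through $\Gamma$.

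The natural choice is to compress to the first block and take a partial trace against a fixed state on the multiplicity space. Let $P$ be the projection onto $\mc H_{\A}\otimes\mbb C^m$, and fix any state $\tau$ on $\mbb C^m$, for instance $\tau=\mbb I_m/m$. Define
\begin{align}
\Gamma(Y):=\tr_{\mbb C^m}\!\left[(\mbb I_{\A}\otimes\tau)\,PYP\right].
\end{align}
This is a composition of a compression and a partial trace with a state, so it is completely positive. It is unital because $\Gamma(\mbb I_{\B})=\tr_{\mbb C^m}[(\mbb I\otimes\tau)(\mbb I\otimes\mbb I_m)]=\mbb I_{\A}$. Finally, the compression kills the $\mc E(X)$ block, so
\begin{align}
\Gamma(\Lambda(X))=\tr_{\mbb C^m}[(\mbb I\otimes\tau)(X\otimes\mbb I_m)]=X.
\end{align}
Here $\Gamma$ is (up to adjoint) a decoding channel.

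Now suppose $\Lambda(M_{a|x})=\int_\Lambda p(a|x,\lambda)\,d\Pi(\lambda)$ for some parent POVM $\Pi$ on $\mc H_{\B}$. Define $\Pi'(E):=\Gamma(\Pi(E))$. Because $\Gamma$ is linear, positive and unital, $\Pi'$ is a POVM on $\mc H_{\A}$. Countable additivity passes through $\Gamma$ because $\Gamma$ is a bounded linear map in finite dimensions. Applying $\Gamma$ to the simulation identity gives
\begin{align}
M_{a|x}=\Gamma(\Lambda(M_{a|x}))=\int p(a|x,\lambda)\,d\Pi'(\lambda).
\end{align}
Interchanging $\Gamma$ with the integral is justified by linearity and continuity of $\Gamma$ on simple functions followed by a limit. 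Hence $\{M_{a|x}\}$ is compatible, and incompatibility is preserved.

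The only point needing care is that the parent POVM may have a general measurable outcome space, so that the integral against $\Pi'$ is well defined. Since $\Gamma$ acts on finite-dimensional operators, this reduces to entrywise scalar measures and is routine. The real content of the proof is the existence of the unital CP left inverse, which the block structure in Eq.~\eqref{Eq:general-preserver-sufficient} provides directly.
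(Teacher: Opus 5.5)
Your proof is correct and is essentially the paper's argument. The paper compresses the output parent POVM onto a single copy of $\mc H_{\A}$ inside the $X\otimes\mbb I_m$ block, which is your $\Gamma$ with $\tau$ a pure state. Your choice of a general $\tau$ is just an average of such compressions, and you also spell out why $\Gamma(\Pi(\cdot))$ is a valid POVM and why $\Gamma$ commutes with the integral, which the paper leaves implicit.
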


\begin{proof}
Suppose that the image of a measurement assemblage
$\msf M=\{M_{a|x}\}_{a,x}$ under $\Lambda$ is compatible. Compressing the
corresponding parent POVM onto any one of the copies of $\mc H_{\A}$
contained in the block $X\otimes\mbb I_m$ gives a parent POVM for the
original assemblage $\msf M$. Hence, compatibility of $\Lambda(\msf M)$
implies compatibility of $\msf M$. Therefore, every incompatible
assemblage remains incompatible under $\Lambda$, and $\Lambda$ is
incompatibility preserving.
\end{proof}

\noindent This construction motivates the following conjecture.

\begin{conjecture}
\label{Conj:unequal-dimensional-preserver}
Let $\Lambda:\mathcal L(\mc H_{\A})\to\mathcal L(\mc H_{\B})$ be a
completely positive unital map with $\dim\mc H_{\B}>\dim\mc H_{\A}$.
Then $\Lambda$ is incompatibility preserving if and only if, up to a
unitary and basis choices, it has the form
\begin{align}
\Lambda(X)=U\left[X\otimes\mbb I_m\oplus\mc E(X)\right]U^\dagger,
\label{Eq:conjectured-general-preserver}
\end{align}
where $\mc E$ is a completely positive unital map on the remaining
orthogonal subspace.
\end{conjecture}

We finally translate the dimensional restriction back to bipartite states.
If Alice is measured and Bob is steered, incompatibility preservation
requires the map $\Phi_\rho$ defined above to be injective, and therefore
\begin{align}
d_{\A}\leq\operatorname{rank}(\rho_{\B})\leq d_{\B}.
\end{align}
Hence, no incompatibility-preserving state exists when $d_{\B}<d_{\A}$.

When $d_{\B}>d_{\A}$, incompatibility preservation is no longer restricted
to pure states, in accordance with the additional possibilities suggested
by Conjecture~\ref{Conj:unequal-dimensional-preserver}. If preservation is
required in both steering directions, injectivity gives
$d_{\A}\leq d_{\B}$ and $d_{\B}\leq d_{\A}$, and hence $d_{\A}=d_{\B}$.
Theorem~\ref{Thm:incompatibility-preserving-general-channel} then implies
that the two-way incompatibility-preserving states are precisely the pure
states of full Schmidt rank.

\section{Conclusion and Outlook} 

In this work, we studied two opposite ways in which a quantum channel can act on incompatible measurement assemblages: destroying incompatibility and preserving it. Our main technical tool was the geometry of zonotopes and zonoids, which provides a particularly useful description of compatibility for dichotomic qubit measurements. Using this framework, we characterized the compatibility of noisy spin measurements in terms of centrally symmetric zonoids. As a direct application, we obtained an exact compatibility criterion for arbitrary finite families of planar noisy qubit PVMs, solving a conjecture discussed in~\cite{Andrejic-2020a,Grinko-2025a, Porto-2026a}. We then applied the same geometric framework to PVM-incompatibility-breaking qubit channels, obtaining a general sufficient criterion for nonunital channels that becomes necessary and sufficient in the unital case. At the incompatibility-breaking boundary, we further established a rigidity property that essentially fixes the parent POVM. 
We believe this result can be of independent interest in the study of quantum networks and network steering \cite{Jones-2021a}.

For this work, the boundary rigidity allowed us to study how incompatibility breaking behaves under parallel composition. We showed that the class of incompatibility-breaking channels is not closed under tensor products: although the qubit depolarizing channel $\Delta_{1/2}$ is incompatibility-breaking, $\Delta_{1/2}\otimes\Delta_{1/2}$ is not. Through the channel--state correspondence, this provides a simple two-copy activation of quantum steering.

At the opposite extreme, we found a much stronger form of rigidity for incompatibility preservation. For equal input and output dimensions, a channel preserves every incompatible measurement assemblage if and only if it is unitary. Correspondingly, for equal local dimensions, the bipartite states that preserve incompatibility from one party to the other are precisely the pure states of full Schmidt rank. 

Several questions remain open. Most notably, while the incompatibility-breaking condition is completely understood for unital qubit channels, an exact characterization for general nonunital qubit channels is still lacking. For incompatibility-preserving maps, when the output dimension exceeds the input dimension, nonunitary incompatibility-preserving maps can exist. The classification of such maps, including the conjectured block structure discussed above, remains open.
\blk

\begin{acknowledgments}
EC would like to thank Marco T\'{u}lio Quintino for suggesting the problem of incompatibility-preserving bipartite states.
EC acknowledges support from the National Science Foundation (NSF) Grant \# 119348.
A preliminary version of the manuscript was presented at the Bad Honnef WE-Heraeus-Seminar - 90 Years of Einstein‐Podolsky‐Rosen Paradox: Foundations and Applications. Y.Z. acknowledges the support of the Institute for Quantum Computing, the University of Waterloo, and the Perimeter Institute. Research at the Institute for Quantum Computing is supported by Innovation, Science and Economic Development Canada.
\end{acknowledgments}

\bibliography{compatibility}

\begin{thebibliography}{95}%
\makeatletter
\providecommand \@ifxundefined [1]{%
 \@ifx{#1\undefined}
}%
\providecommand \@ifnum [1]{%
 \ifnum #1\expandafter \@firstoftwo
 \else \expandafter \@secondoftwo
 \fi
}%
\providecommand \@ifx [1]{%
 \ifx #1\expandafter \@firstoftwo
 \else \expandafter \@secondoftwo
 \fi
}%
\providecommand \natexlab [1]{#1}%
\providecommand \enquote  [1]{``#1''}%
\providecommand \bibnamefont  [1]{#1}%
\providecommand \bibfnamefont [1]{#1}%
\providecommand \citenamefont [1]{#1}%
\providecommand \href@noop [0]{\@secondoftwo}%
\providecommand \href [0]{\begingroup \@sanitize@url \@href}%
\providecommand \@href[1]{\@@startlink{#1}\@@href}%
\providecommand \@@href[1]{\endgroup#1\@@endlink}%
\providecommand \@sanitize@url [0]{\catcode `\\12\catcode `\$12\catcode `\&12\catcode `\#12\catcode `\^12\catcode `\_12\catcode `\%12\relax}%
\providecommand \@@startlink[1]{}%
\providecommand \@@endlink[0]{}%
\providecommand \url  [0]{\begingroup\@sanitize@url \@url }%
\providecommand \@url [1]{\endgroup\@href {#1}{\urlprefix }}%
\providecommand \urlprefix  [0]{URL }%
\providecommand \Eprint [0]{\href }%
\providecommand \doibase [0]{https://doi.org/}%
\providecommand \selectlanguage [0]{\@gobble}%
\providecommand \bibinfo  [0]{\@secondoftwo}%
\providecommand \bibfield  [0]{\@secondoftwo}%
\providecommand \translation [1]{[#1]}%
\providecommand \BibitemOpen [0]{}%
\providecommand \bibitemStop [0]{}%
\providecommand \bibitemNoStop [0]{.\EOS\space}%
\providecommand \EOS [0]{\spacefactor3000\relax}%
\providecommand \BibitemShut  [1]{\csname bibitem#1\endcsname}%
\let\auto@bib@innerbib\@empty
\bibitem [{\citenamefont {Horodecki}\ \emph {et~al.}(2003)\citenamefont {Horodecki}, \citenamefont {Shor},\ and\ \citenamefont {Ruskai}}]{Horodecki-2003a}%
  \BibitemOpen
  \bibfield  {author} {\bibinfo {author} {\bibfnamefont {M.}~\bibnamefont {Horodecki}}, \bibinfo {author} {\bibfnamefont {P.~W.}\ \bibnamefont {Shor}},\ and\ \bibinfo {author} {\bibfnamefont {M.~B.}\ \bibnamefont {Ruskai}},\ }\bibfield  {title} {\bibinfo {title} {Entanglement breaking channels},\ }\href {https://doi.org/10.1142/S0129055X03001709} {\bibfield  {journal} {\bibinfo  {journal} {Reviews in Mathematical Physics}\ }\textbf {\bibinfo {volume} {15}},\ \bibinfo {pages} {629} (\bibinfo {year} {2003})}\BibitemShut {NoStop}%
\bibitem [{\citenamefont {Chruściński}\ and\ \citenamefont {Kossakowski}(2006)}]{Chruciski-2006a}%
  \BibitemOpen
  \bibfield  {author} {\bibinfo {author} {\bibfnamefont {D.}~\bibnamefont {Chruściński}}\ and\ \bibinfo {author} {\bibfnamefont {A.}~\bibnamefont {Kossakowski}},\ }\bibfield  {title} {\bibinfo {title} {On partially entanglement breaking channels},\ }\href {https://doi.org/10.1007/s11080-006-7264-7} {\bibfield  {journal} {\bibinfo  {journal} {Open Systems \& amp; Information Dynamics}\ }\textbf {\bibinfo {volume} {13}},\ \bibinfo {pages} {17–26} (\bibinfo {year} {2006})}\BibitemShut {NoStop}%
\bibitem [{\citenamefont {Chen}\ and\ \citenamefont {Chitambar}(2020)}]{Chen-2020a}%
  \BibitemOpen
  \bibfield  {author} {\bibinfo {author} {\bibfnamefont {S.}~\bibnamefont {Chen}}\ and\ \bibinfo {author} {\bibfnamefont {E.}~\bibnamefont {Chitambar}},\ }\bibfield  {title} {\bibinfo {title} {Entanglement-breaking superchannels},\ }\href {https://doi.org/10.22331/q-2020-07-16-299} {\bibfield  {journal} {\bibinfo  {journal} {Quantum}\ }\textbf {\bibinfo {volume} {4}},\ \bibinfo {pages} {299} (\bibinfo {year} {2020})}\BibitemShut {NoStop}%
\bibitem [{\citenamefont {Liu}\ \emph {et~al.}(2017)\citenamefont {Liu}, \citenamefont {Hu},\ and\ \citenamefont {Lloyd}}]{Liu-2017a}%
  \BibitemOpen
  \bibfield  {author} {\bibinfo {author} {\bibfnamefont {Z.-W.}\ \bibnamefont {Liu}}, \bibinfo {author} {\bibfnamefont {X.}~\bibnamefont {Hu}},\ and\ \bibinfo {author} {\bibfnamefont {S.}~\bibnamefont {Lloyd}},\ }\bibfield  {title} {\bibinfo {title} {Resource destroying maps},\ }\href {https://doi.org/10.1103/PhysRevLett.118.060502} {\bibfield  {journal} {\bibinfo  {journal} {Phys. Rev. Lett.}\ }\textbf {\bibinfo {volume} {118}},\ \bibinfo {pages} {060502} (\bibinfo {year} {2017})}\BibitemShut {NoStop}%
\bibitem [{\citenamefont {Gour}(2017)}]{Gour-2017a}%
  \BibitemOpen
  \bibfield  {author} {\bibinfo {author} {\bibfnamefont {G.}~\bibnamefont {Gour}},\ }\bibfield  {title} {\bibinfo {title} {Quantum resource theories in the single-shot regime},\ }\href {https://doi.org/10.1103/PhysRevA.95.062314} {\bibfield  {journal} {\bibinfo  {journal} {Phys. Rev. A}\ }\textbf {\bibinfo {volume} {95}},\ \bibinfo {pages} {062314} (\bibinfo {year} {2017})}\BibitemShut {NoStop}%
\bibitem [{\citenamefont {Bu}\ \emph {et~al.}(2016)\citenamefont {Bu}, \citenamefont {Swati}, \citenamefont {Singh},\ and\ \citenamefont {Wu}}]{Bu-2016a}%
  \BibitemOpen
  \bibfield  {author} {\bibinfo {author} {\bibfnamefont {K.}~\bibnamefont {Bu}}, \bibinfo {author} {\bibnamefont {Swati}}, \bibinfo {author} {\bibfnamefont {U.}~\bibnamefont {Singh}},\ and\ \bibinfo {author} {\bibfnamefont {J.}~\bibnamefont {Wu}},\ }\bibfield  {title} {\bibinfo {title} {Coherence-breaking channels and coherence sudden death},\ }\href {https://doi.org/10.1103/PhysRevA.94.052335} {\bibfield  {journal} {\bibinfo  {journal} {Phys. Rev. A}\ }\textbf {\bibinfo {volume} {94}},\ \bibinfo {pages} {052335} (\bibinfo {year} {2016})}\BibitemShut {NoStop}%
\bibitem [{\citenamefont {Kamin}\ \emph {et~al.}(2020)\citenamefont {Kamin}, \citenamefont {Tabesh}, \citenamefont {Salimi},\ and\ \citenamefont {Kheirandish}}]{Kamin-2020a}%
  \BibitemOpen
  \bibfield  {author} {\bibinfo {author} {\bibfnamefont {F.~H.}\ \bibnamefont {Kamin}}, \bibinfo {author} {\bibfnamefont {F.~T.}\ \bibnamefont {Tabesh}}, \bibinfo {author} {\bibfnamefont {S.}~\bibnamefont {Salimi}},\ and\ \bibinfo {author} {\bibfnamefont {F.}~\bibnamefont {Kheirandish}},\ }\bibfield  {title} {\bibinfo {title} {The resource theory of coherence for quantum channels},\ }\bibfield  {journal} {\bibinfo  {journal} {Quantum Information Processing}\ }\textbf {\bibinfo {volume} {19}},\ \href {https://doi.org/10.1007/s11128-020-02702-9} {10.1007/s11128-020-02702-9} (\bibinfo {year} {2020})\BibitemShut {NoStop}%
\bibitem [{\citenamefont {Ivan}\ \emph {et~al.}(2013)\citenamefont {Ivan}, \citenamefont {Sabapathy},\ and\ \citenamefont {Simon}}]{Ivan-2013a}%
  \BibitemOpen
  \bibfield  {author} {\bibinfo {author} {\bibfnamefont {J.~S.}\ \bibnamefont {Ivan}}, \bibinfo {author} {\bibfnamefont {K.~K.}\ \bibnamefont {Sabapathy}},\ and\ \bibinfo {author} {\bibfnamefont {R.}~\bibnamefont {Simon}},\ }\bibfield  {title} {\bibinfo {title} {Nonclassicality breaking is the same as entanglement breaking for bosonic gaussian channels},\ }\href {https://doi.org/10.1103/PhysRevA.88.032302} {\bibfield  {journal} {\bibinfo  {journal} {Phys. Rev. A}\ }\textbf {\bibinfo {volume} {88}},\ \bibinfo {pages} {032302} (\bibinfo {year} {2013})}\BibitemShut {NoStop}%
\bibitem [{\citenamefont {Meznaric}\ \emph {et~al.}(2013)\citenamefont {Meznaric}, \citenamefont {Clark},\ and\ \citenamefont {Datta}}]{Meznaric-2013a}%
  \BibitemOpen
  \bibfield  {author} {\bibinfo {author} {\bibfnamefont {S.}~\bibnamefont {Meznaric}}, \bibinfo {author} {\bibfnamefont {S.~R.}\ \bibnamefont {Clark}},\ and\ \bibinfo {author} {\bibfnamefont {A.}~\bibnamefont {Datta}},\ }\bibfield  {title} {\bibinfo {title} {Quantifying the nonclassicality of operations},\ }\href {https://doi.org/10.1103/PhysRevLett.110.070502} {\bibfield  {journal} {\bibinfo  {journal} {Phys. Rev. Lett.}\ }\textbf {\bibinfo {volume} {110}},\ \bibinfo {pages} {070502} (\bibinfo {year} {2013})}\BibitemShut {NoStop}%
\bibitem [{\citenamefont {Pal}\ and\ \citenamefont {Ghosh}(2015)}]{Pal-2015a}%
  \BibitemOpen
  \bibfield  {author} {\bibinfo {author} {\bibfnamefont {R.}~\bibnamefont {Pal}}\ and\ \bibinfo {author} {\bibfnamefont {S.}~\bibnamefont {Ghosh}},\ }\bibfield  {title} {\bibinfo {title} {Non-locality breaking qubit channels: the case for chsh inequality},\ }\href {https://doi.org/10.1088/1751-8113/48/15/155302} {\bibfield  {journal} {\bibinfo  {journal} {Journal of Physics A: Mathematical and Theoretical}\ }\textbf {\bibinfo {volume} {48}},\ \bibinfo {pages} {155302} (\bibinfo {year} {2015})}\BibitemShut {NoStop}%
\bibitem [{\citenamefont {Zhang}\ \emph {et~al.}(2020)\citenamefont {Zhang}, \citenamefont {Bravo}, \citenamefont {Lorenz},\ and\ \citenamefont {Chitambar}}]{Zhang-2020a}%
  \BibitemOpen
  \bibfield  {author} {\bibinfo {author} {\bibfnamefont {Y.}~\bibnamefont {Zhang}}, \bibinfo {author} {\bibfnamefont {R.~A.}\ \bibnamefont {Bravo}}, \bibinfo {author} {\bibfnamefont {V.~O.}\ \bibnamefont {Lorenz}},\ and\ \bibinfo {author} {\bibfnamefont {E.}~\bibnamefont {Chitambar}},\ }\bibfield  {title} {\bibinfo {title} {Channel activation of chsh nonlocality},\ }\href {https://doi.org/10.1088/1367-2630/ab7bef} {\bibfield  {journal} {\bibinfo  {journal} {New Journal of Physics}\ }\textbf {\bibinfo {volume} {22}},\ \bibinfo {pages} {043003} (\bibinfo {year} {2020})}\BibitemShut {NoStop}%
\bibitem [{\citenamefont {Mukherjee}\ and\ \citenamefont {Ganguly}(2025)}]{Mukherjee-2025a}%
  \BibitemOpen
  \bibfield  {author} {\bibinfo {author} {\bibfnamefont {K.}~\bibnamefont {Mukherjee}}\ and\ \bibinfo {author} {\bibfnamefont {N.}~\bibnamefont {Ganguly}},\ }\href@noop {} {\bibinfo {title} {Network nonlocality breaking channels}} (\bibinfo {year} {2025}),\ \Eprint {https://arxiv.org/abs/arXiv:2510.26417} {arXiv:2510.26417} \BibitemShut {NoStop}%
\bibitem [{\citenamefont {Heinosaari}\ \emph {et~al.}(2015{\natexlab{a}})\citenamefont {Heinosaari}, \citenamefont {Kiukas}, \citenamefont {Reitzner},\ and\ \citenamefont {Schultz}}]{Heinosaari-2015a}%
  \BibitemOpen
  \bibfield  {author} {\bibinfo {author} {\bibfnamefont {T.}~\bibnamefont {Heinosaari}}, \bibinfo {author} {\bibfnamefont {J.}~\bibnamefont {Kiukas}}, \bibinfo {author} {\bibfnamefont {D.}~\bibnamefont {Reitzner}},\ and\ \bibinfo {author} {\bibfnamefont {J.}~\bibnamefont {Schultz}},\ }\bibfield  {title} {\bibinfo {title} {Incompatibility breaking quantum channels},\ }\href {https://doi.org/10.1088/1751-8113/48/43/435301} {\bibfield  {journal} {\bibinfo  {journal} {Journal of Physics A: Mathematical and Theoretical}\ }\textbf {\bibinfo {volume} {48}},\ \bibinfo {pages} {435301} (\bibinfo {year} {2015}{\natexlab{a}})}\BibitemShut {NoStop}%
\bibitem [{\citenamefont {Kumari}\ \emph {et~al.}(2023)\citenamefont {Kumari}, \citenamefont {Naikoo}, \citenamefont {Ghosh},\ and\ \citenamefont {Pan}}]{Kumari-2023a}%
  \BibitemOpen
  \bibfield  {author} {\bibinfo {author} {\bibfnamefont {S.}~\bibnamefont {Kumari}}, \bibinfo {author} {\bibfnamefont {J.}~\bibnamefont {Naikoo}}, \bibinfo {author} {\bibfnamefont {S.}~\bibnamefont {Ghosh}},\ and\ \bibinfo {author} {\bibfnamefont {A.~K.}\ \bibnamefont {Pan}},\ }\bibfield  {title} {\bibinfo {title} {Interplay of nonlocality and incompatibility breaking qubit channels},\ }\href {https://doi.org/10.1103/PhysRevA.107.022201} {\bibfield  {journal} {\bibinfo  {journal} {Phys. Rev. A}\ }\textbf {\bibinfo {volume} {107}},\ \bibinfo {pages} {022201} (\bibinfo {year} {2023})}\BibitemShut {NoStop}%
\bibitem [{\citenamefont {Heinosaari}\ \emph {et~al.}(2016)\citenamefont {Heinosaari}, \citenamefont {Miyadera},\ and\ \citenamefont {Ziman}}]{Heinosaari-2016a}%
  \BibitemOpen
  \bibfield  {author} {\bibinfo {author} {\bibfnamefont {T.}~\bibnamefont {Heinosaari}}, \bibinfo {author} {\bibfnamefont {T.}~\bibnamefont {Miyadera}},\ and\ \bibinfo {author} {\bibfnamefont {M.}~\bibnamefont {Ziman}},\ }\bibfield  {title} {\bibinfo {title} {An invitation to quantum incompatibility},\ }\href {https://doi.org/10.1088/1751-8113/49/12/123001} {\bibfield  {journal} {\bibinfo  {journal} {Journal of Physics A: Mathematical and Theoretical}\ }\textbf {\bibinfo {volume} {49}},\ \bibinfo {pages} {123001} (\bibinfo {year} {2016})}\BibitemShut {NoStop}%
\bibitem [{\citenamefont {G\"uhne}\ \emph {et~al.}(2023)\citenamefont {G\"uhne}, \citenamefont {Haapasalo}, \citenamefont {Kraft}, \citenamefont {Pellonp\"a\"a},\ and\ \citenamefont {Uola}}]{Guhne-2023a}%
  \BibitemOpen
  \bibfield  {author} {\bibinfo {author} {\bibfnamefont {O.}~\bibnamefont {G\"uhne}}, \bibinfo {author} {\bibfnamefont {E.}~\bibnamefont {Haapasalo}}, \bibinfo {author} {\bibfnamefont {T.}~\bibnamefont {Kraft}}, \bibinfo {author} {\bibfnamefont {J.-P.}\ \bibnamefont {Pellonp\"a\"a}},\ and\ \bibinfo {author} {\bibfnamefont {R.}~\bibnamefont {Uola}},\ }\bibfield  {title} {\bibinfo {title} {Colloquium: Incompatible measurements in quantum information science},\ }\href {https://doi.org/10.1103/RevModPhys.95.011003} {\bibfield  {journal} {\bibinfo  {journal} {Rev. Mod. Phys.}\ }\textbf {\bibinfo {volume} {95}},\ \bibinfo {pages} {011003} (\bibinfo {year} {2023})}\BibitemShut {NoStop}%
\bibitem [{\citenamefont {Skrzypczyk}\ and\ \citenamefont {Linden}(2019)}]{Skrzypczyk-2019a}%
  \BibitemOpen
  \bibfield  {author} {\bibinfo {author} {\bibfnamefont {P.}~\bibnamefont {Skrzypczyk}}\ and\ \bibinfo {author} {\bibfnamefont {N.}~\bibnamefont {Linden}},\ }\bibfield  {title} {\bibinfo {title} {Robustness of measurement, discrimination games, and accessible information},\ }\href {https://doi.org/10.1103/PhysRevLett.122.140403} {\bibfield  {journal} {\bibinfo  {journal} {Phys. Rev. Lett.}\ }\textbf {\bibinfo {volume} {122}},\ \bibinfo {pages} {140403} (\bibinfo {year} {2019})}\BibitemShut {NoStop}%
\bibitem [{\citenamefont {Designolle}\ \emph {et~al.}(2019)\citenamefont {Designolle}, \citenamefont {Farkas},\ and\ \citenamefont {Kaniewski}}]{Designolle-2019a}%
  \BibitemOpen
  \bibfield  {author} {\bibinfo {author} {\bibfnamefont {S.}~\bibnamefont {Designolle}}, \bibinfo {author} {\bibfnamefont {M.}~\bibnamefont {Farkas}},\ and\ \bibinfo {author} {\bibfnamefont {J.}~\bibnamefont {Kaniewski}},\ }\bibfield  {title} {\bibinfo {title} {Incompatibility robustness of quantum measurements: a unified framework},\ }\href {https://doi.org/10.1088/1367-2630/ab5020} {\bibfield  {journal} {\bibinfo  {journal} {New Journal of Physics}\ }\textbf {\bibinfo {volume} {21}},\ \bibinfo {pages} {113053} (\bibinfo {year} {2019})}\BibitemShut {NoStop}%
\bibitem [{\citenamefont {Buscemi}\ \emph {et~al.}(2020)\citenamefont {Buscemi}, \citenamefont {Chitambar},\ and\ \citenamefont {Zhou}}]{Buscemi-2020a}%
  \BibitemOpen
  \bibfield  {author} {\bibinfo {author} {\bibfnamefont {F.}~\bibnamefont {Buscemi}}, \bibinfo {author} {\bibfnamefont {E.}~\bibnamefont {Chitambar}},\ and\ \bibinfo {author} {\bibfnamefont {W.}~\bibnamefont {Zhou}},\ }\bibfield  {title} {\bibinfo {title} {Complete resource theory of quantum incompatibility as quantum programmability},\ }\href {https://doi.org/10.1103/PhysRevLett.124.120401} {\bibfield  {journal} {\bibinfo  {journal} {Phys. Rev. Lett.}\ }\textbf {\bibinfo {volume} {124}},\ \bibinfo {pages} {120401} (\bibinfo {year} {2020})}\BibitemShut {NoStop}%
\bibitem [{\citenamefont {Ji}\ and\ \citenamefont {Chitambar}(2024)}]{Ji-2024a}%
  \BibitemOpen
  \bibfield  {author} {\bibinfo {author} {\bibfnamefont {K.}~\bibnamefont {Ji}}\ and\ \bibinfo {author} {\bibfnamefont {E.}~\bibnamefont {Chitambar}},\ }\bibfield  {title} {\bibinfo {title} {Incompatibility as a resource for programmable quantum instruments},\ }\href {https://doi.org/10.1103/PRXQuantum.5.010340} {\bibfield  {journal} {\bibinfo  {journal} {PRX Quantum}\ }\textbf {\bibinfo {volume} {5}},\ \bibinfo {pages} {010340} (\bibinfo {year} {2024})}\BibitemShut {NoStop}%
\bibitem [{\citenamefont {Mitra}\ and\ \citenamefont {Farkas}(2022)}]{Mitra-2022a}%
  \BibitemOpen
  \bibfield  {author} {\bibinfo {author} {\bibfnamefont {A.}~\bibnamefont {Mitra}}\ and\ \bibinfo {author} {\bibfnamefont {M.}~\bibnamefont {Farkas}},\ }\bibfield  {title} {\bibinfo {title} {Compatibility of quantum instruments},\ }\href {https://doi.org/10.1103/PhysRevA.105.052202} {\bibfield  {journal} {\bibinfo  {journal} {Phys. Rev. A}\ }\textbf {\bibinfo {volume} {105}},\ \bibinfo {pages} {052202} (\bibinfo {year} {2022})}\BibitemShut {NoStop}%
\bibitem [{\citenamefont {Bluhm}\ \emph {et~al.}(2025)\citenamefont {Bluhm}, \citenamefont {Lepp\"{a}j\"{a}rvi},\ and\ \citenamefont {Nechita}}]{Bluhm-2025a}%
  \BibitemOpen
  \bibfield  {author} {\bibinfo {author} {\bibfnamefont {A.}~\bibnamefont {Bluhm}}, \bibinfo {author} {\bibfnamefont {L.}~\bibnamefont {Lepp\"{a}j\"{a}rvi}},\ and\ \bibinfo {author} {\bibfnamefont {I.}~\bibnamefont {Nechita}},\ }\bibfield  {title} {\bibinfo {title} {On the simulation of quantum multimeters},\ }\href {https://doi.org/10.22331/q-2025-01-27-1608} {\bibfield  {journal} {\bibinfo  {journal} {Quantum}\ }\textbf {\bibinfo {volume} {9}},\ \bibinfo {pages} {1608} (\bibinfo {year} {2025})}\BibitemShut {NoStop}%
\bibitem [{\citenamefont {Terhal}\ \emph {et~al.}(2003)\citenamefont {Terhal}, \citenamefont {Doherty},\ and\ \citenamefont {Schwab}}]{Terhal-2003a}%
  \BibitemOpen
  \bibfield  {author} {\bibinfo {author} {\bibfnamefont {B.~M.}\ \bibnamefont {Terhal}}, \bibinfo {author} {\bibfnamefont {A.~C.}\ \bibnamefont {Doherty}},\ and\ \bibinfo {author} {\bibfnamefont {D.}~\bibnamefont {Schwab}},\ }\bibfield  {title} {\bibinfo {title} {Symmetric extensions of quantum states and local hidden variable theories},\ }\href {https://doi.org/10.1103/PhysRevLett.90.157903} {\bibfield  {journal} {\bibinfo  {journal} {Phys. Rev. Lett.}\ }\textbf {\bibinfo {volume} {90}},\ \bibinfo {pages} {157903} (\bibinfo {year} {2003})}\BibitemShut {NoStop}%
\bibitem [{\citenamefont {Wolf}\ \emph {et~al.}(2009)\citenamefont {Wolf}, \citenamefont {Perez-Garcia},\ and\ \citenamefont {Fernandez}}]{Wolf-2009a}%
  \BibitemOpen
  \bibfield  {author} {\bibinfo {author} {\bibfnamefont {M.~M.}\ \bibnamefont {Wolf}}, \bibinfo {author} {\bibfnamefont {D.}~\bibnamefont {Perez-Garcia}},\ and\ \bibinfo {author} {\bibfnamefont {C.}~\bibnamefont {Fernandez}},\ }\bibfield  {title} {\bibinfo {title} {Measurements incompatible in quantum theory cannot be measured jointly in any other no-signaling theory},\ }\href {https://doi.org/10.1103/PhysRevLett.103.230402} {\bibfield  {journal} {\bibinfo  {journal} {Phys. Rev. Lett.}\ }\textbf {\bibinfo {volume} {103}},\ \bibinfo {pages} {230402} (\bibinfo {year} {2009})}\BibitemShut {NoStop}%
\bibitem [{\citenamefont {Pl\'avala}\ \emph {et~al.}(2025)\citenamefont {Pl\'avala}, \citenamefont {G\"uhne},\ and\ \citenamefont {Quintino}}]{Plavala-2025a}%
  \BibitemOpen
  \bibfield  {author} {\bibinfo {author} {\bibfnamefont {M.}~\bibnamefont {Pl\'avala}}, \bibinfo {author} {\bibfnamefont {O.}~\bibnamefont {G\"uhne}},\ and\ \bibinfo {author} {\bibfnamefont {M.~T.}\ \bibnamefont {Quintino}},\ }\bibfield  {title} {\bibinfo {title} {All incompatible measurements on qubits lead to multiparticle bell nonlocality},\ }\href {https://doi.org/10.1103/PhysRevLett.134.200201} {\bibfield  {journal} {\bibinfo  {journal} {Phys. Rev. Lett.}\ }\textbf {\bibinfo {volume} {134}},\ \bibinfo {pages} {200201} (\bibinfo {year} {2025})}\BibitemShut {NoStop}%
\bibitem [{\citenamefont {Quintino}\ \emph {et~al.}(2016{\natexlab{a}})\citenamefont {Quintino}, \citenamefont {Bowles}, \citenamefont {Hirsch},\ and\ \citenamefont {Brunner}}]{Quintino-2016a}%
  \BibitemOpen
  \bibfield  {author} {\bibinfo {author} {\bibfnamefont {M.~T.}\ \bibnamefont {Quintino}}, \bibinfo {author} {\bibfnamefont {J.}~\bibnamefont {Bowles}}, \bibinfo {author} {\bibfnamefont {F.}~\bibnamefont {Hirsch}},\ and\ \bibinfo {author} {\bibfnamefont {N.}~\bibnamefont {Brunner}},\ }\bibfield  {title} {\bibinfo {title} {Incompatible quantum measurements admitting a local-hidden-variable model},\ }\href {https://doi.org/10.1103/PhysRevA.93.052115} {\bibfield  {journal} {\bibinfo  {journal} {Phys. Rev. A}\ }\textbf {\bibinfo {volume} {93}},\ \bibinfo {pages} {052115} (\bibinfo {year} {2016}{\natexlab{a}})}\BibitemShut {NoStop}%
\bibitem [{\citenamefont {Bene}\ and\ \citenamefont {Vértesi}(2018)}]{Bene-2018a}%
  \BibitemOpen
  \bibfield  {author} {\bibinfo {author} {\bibfnamefont {E.}~\bibnamefont {Bene}}\ and\ \bibinfo {author} {\bibfnamefont {T.}~\bibnamefont {Vértesi}},\ }\bibfield  {title} {\bibinfo {title} {Measurement incompatibility does not give rise to bell violation in general},\ }\href {https://doi.org/10.1088/1367-2630/aa9ca3} {\bibfield  {journal} {\bibinfo  {journal} {New Journal of Physics}\ }\textbf {\bibinfo {volume} {20}},\ \bibinfo {pages} {013021} (\bibinfo {year} {2018})}\BibitemShut {NoStop}%
\bibitem [{\citenamefont {Hirsch}\ \emph {et~al.}(2018)\citenamefont {Hirsch}, \citenamefont {Quintino},\ and\ \citenamefont {Brunner}}]{Hirsch-2018a}%
  \BibitemOpen
  \bibfield  {author} {\bibinfo {author} {\bibfnamefont {F.}~\bibnamefont {Hirsch}}, \bibinfo {author} {\bibfnamefont {M.~T.}\ \bibnamefont {Quintino}},\ and\ \bibinfo {author} {\bibfnamefont {N.}~\bibnamefont {Brunner}},\ }\bibfield  {title} {\bibinfo {title} {Quantum measurement incompatibility does not imply bell nonlocality},\ }\href {https://doi.org/10.1103/PhysRevA.97.012129} {\bibfield  {journal} {\bibinfo  {journal} {Phys. Rev. A}\ }\textbf {\bibinfo {volume} {97}},\ \bibinfo {pages} {012129} (\bibinfo {year} {2018})}\BibitemShut {NoStop}%
\bibitem [{\citenamefont {Spekkens}(2005)}]{Spekkens2005}%
  \BibitemOpen
  \bibfield  {author} {\bibinfo {author} {\bibfnamefont {R.~W.}\ \bibnamefont {Spekkens}},\ }\bibfield  {title} {\bibinfo {title} {Contextuality for preparations, transformations, and unsharp measurements},\ }\href {https://doi.org/10.1103/PhysRevA.71.052108} {\bibfield  {journal} {\bibinfo  {journal} {Phys. Rev. A}\ }\textbf {\bibinfo {volume} {71}},\ \bibinfo {pages} {052108} (\bibinfo {year} {2005})}\BibitemShut {NoStop}%
\bibitem [{\citenamefont {Tavakoli}\ and\ \citenamefont {Uola}(2020)}]{TavakoliUola2020}%
  \BibitemOpen
  \bibfield  {author} {\bibinfo {author} {\bibfnamefont {A.}~\bibnamefont {Tavakoli}}\ and\ \bibinfo {author} {\bibfnamefont {R.}~\bibnamefont {Uola}},\ }\bibfield  {title} {\bibinfo {title} {Measurement incompatibility and steering are necessary and sufficient for operational contextuality},\ }\href {https://doi.org/10.1103/PhysRevResearch.2.013011} {\bibfield  {journal} {\bibinfo  {journal} {Phys. Rev. Research}\ }\textbf {\bibinfo {volume} {2}},\ \bibinfo {pages} {013011} (\bibinfo {year} {2020})}\BibitemShut {NoStop}%
\bibitem [{\citenamefont {Selby}\ \emph {et~al.}(2023)\citenamefont {Selby}, \citenamefont {Schmid}, \citenamefont {Wolfe}, \citenamefont {Sainz}, \citenamefont {Kunjwal},\ and\ \citenamefont {Spekkens}}]{Selby2023}%
  \BibitemOpen
  \bibfield  {author} {\bibinfo {author} {\bibfnamefont {J.~H.}\ \bibnamefont {Selby}}, \bibinfo {author} {\bibfnamefont {D.}~\bibnamefont {Schmid}}, \bibinfo {author} {\bibfnamefont {E.}~\bibnamefont {Wolfe}}, \bibinfo {author} {\bibfnamefont {A.~B.}\ \bibnamefont {Sainz}}, \bibinfo {author} {\bibfnamefont {R.}~\bibnamefont {Kunjwal}},\ and\ \bibinfo {author} {\bibfnamefont {R.~W.}\ \bibnamefont {Spekkens}},\ }\bibfield  {title} {\bibinfo {title} {Contextuality without incompatibility},\ }\href {https://doi.org/10.1103/PhysRevLett.130.230201} {\bibfield  {journal} {\bibinfo  {journal} {Phys. Rev. Lett.}\ }\textbf {\bibinfo {volume} {130}},\ \bibinfo {pages} {230201} (\bibinfo {year} {2023})}\BibitemShut {NoStop}%
\bibitem [{\citenamefont {Zhang}\ \emph {et~al.}(2026{\natexlab{a}})\citenamefont {Zhang}, \citenamefont {Schmid}, \citenamefont {Y{\=i}ng},\ and\ \citenamefont {Spekkens}}]{ZhangSchmidYingSpekkens2026}%
  \BibitemOpen
  \bibfield  {author} {\bibinfo {author} {\bibfnamefont {Y.}~\bibnamefont {Zhang}}, \bibinfo {author} {\bibfnamefont {D.}~\bibnamefont {Schmid}}, \bibinfo {author} {\bibfnamefont {Y.}~\bibnamefont {Y{\=i}ng}},\ and\ \bibinfo {author} {\bibfnamefont {R.~W.}\ \bibnamefont {Spekkens}},\ }\bibfield  {title} {\bibinfo {title} {Reassessing the boundary between classical and nonclassical for individual quantum processes},\ }\href {https://doi.org/10.1103/vqfz-wzjg} {\bibfield  {journal} {\bibinfo  {journal} {Phys. Rev. X}\ }\textbf {\bibinfo {volume} {16}},\ \bibinfo {pages} {021050} (\bibinfo {year} {2026}{\natexlab{a}})}\BibitemShut {NoStop}%
\bibitem [{\citenamefont {Zhang}\ \emph {et~al.}(2026{\natexlab{b}})\citenamefont {Zhang}, \citenamefont {Yīng},\ and\ \citenamefont {Schmid}}]{ZhangYingSchmid2025}%
  \BibitemOpen
  \bibfield  {author} {\bibinfo {author} {\bibfnamefont {Y.}~\bibnamefont {Zhang}}, \bibinfo {author} {\bibfnamefont {Y.}~\bibnamefont {Yīng}},\ and\ \bibinfo {author} {\bibfnamefont {D.}~\bibnamefont {Schmid}},\ }\bibfield  {title} {\bibinfo {title} {Quantifiers and witnesses for the nonclassicality of measurements and of states},\ }\href {https://doi.org/10.22331/q-2026-07-30-2180} {\bibfield  {journal} {\bibinfo  {journal} {{Quantum}}\ }\textbf {\bibinfo {volume} {10}},\ \bibinfo {pages} {2180} (\bibinfo {year} {2026}{\natexlab{b}})}\BibitemShut {NoStop}%
\bibitem [{\citenamefont {Quintino}\ \emph {et~al.}(2014)\citenamefont {Quintino}, \citenamefont {V\'ertesi},\ and\ \citenamefont {Brunner}}]{Quintino2014}%
  \BibitemOpen
  \bibfield  {author} {\bibinfo {author} {\bibfnamefont {M.~T.}\ \bibnamefont {Quintino}}, \bibinfo {author} {\bibfnamefont {T.}~\bibnamefont {V\'ertesi}},\ and\ \bibinfo {author} {\bibfnamefont {N.}~\bibnamefont {Brunner}},\ }\bibfield  {title} {\bibinfo {title} {Joint measurability, einstein-podolsky-rosen steering, and bell nonlocality},\ }\href {https://doi.org/10.1103/PhysRevLett.113.160402} {\bibfield  {journal} {\bibinfo  {journal} {Phys. Rev. Lett.}\ }\textbf {\bibinfo {volume} {113}},\ \bibinfo {pages} {160402} (\bibinfo {year} {2014})}\BibitemShut {NoStop}%
\bibitem [{\citenamefont {Uola}\ \emph {et~al.}(2014)\citenamefont {Uola}, \citenamefont {Moroder},\ and\ \citenamefont {G\"uhne}}]{Uola2014}%
  \BibitemOpen
  \bibfield  {author} {\bibinfo {author} {\bibfnamefont {R.}~\bibnamefont {Uola}}, \bibinfo {author} {\bibfnamefont {T.}~\bibnamefont {Moroder}},\ and\ \bibinfo {author} {\bibfnamefont {O.}~\bibnamefont {G\"uhne}},\ }\bibfield  {title} {\bibinfo {title} {Joint measurability of generalized measurements implies classicality},\ }\href {https://doi.org/10.1103/PhysRevLett.113.160403} {\bibfield  {journal} {\bibinfo  {journal} {Phys. Rev. Lett.}\ }\textbf {\bibinfo {volume} {113}},\ \bibinfo {pages} {160403} (\bibinfo {year} {2014})}\BibitemShut {NoStop}%
\bibitem [{\citenamefont {Wiseman}\ \emph {et~al.}(2007)\citenamefont {Wiseman}, \citenamefont {Jones},\ and\ \citenamefont {Doherty}}]{Wiseman-2007a}%
  \BibitemOpen
  \bibfield  {author} {\bibinfo {author} {\bibfnamefont {H.~M.}\ \bibnamefont {Wiseman}}, \bibinfo {author} {\bibfnamefont {S.~J.}\ \bibnamefont {Jones}},\ and\ \bibinfo {author} {\bibfnamefont {A.~C.}\ \bibnamefont {Doherty}},\ }\bibfield  {title} {\bibinfo {title} {Steering, entanglement, nonlocality, and the einstein-podolsky-rosen paradox},\ }\href {https://doi.org/10.1103/PhysRevLett.98.140402} {\bibfield  {journal} {\bibinfo  {journal} {Phys. Rev. Lett.}\ }\textbf {\bibinfo {volume} {98}},\ \bibinfo {pages} {140402} (\bibinfo {year} {2007})}\BibitemShut {NoStop}%
\bibitem [{\citenamefont {Peres}(1990)}]{Peres-1990a}%
  \BibitemOpen
  \bibfield  {author} {\bibinfo {author} {\bibfnamefont {A.}~\bibnamefont {Peres}},\ }\bibfield  {title} {\bibinfo {title} {Neumark’s theorem and quantum inseparability},\ }\href {https://doi.org/10.1007/bf01883517} {\bibfield  {journal} {\bibinfo  {journal} {Foundations of Physics}\ }\textbf {\bibinfo {volume} {20}},\ \bibinfo {pages} {1441–1453} (\bibinfo {year} {1990})}\BibitemShut {NoStop}%
\bibitem [{\citenamefont {Kim}\ \emph {et~al.}(2026)\citenamefont {Kim}, \citenamefont {Zhang}, \citenamefont {Chitambar}, \citenamefont {Junge},\ and\ \citenamefont {Leditzky}}]{Kim-2026a}%
  \BibitemOpen
  \bibfield  {author} {\bibinfo {author} {\bibfnamefont {C.}~\bibnamefont {Kim}}, \bibinfo {author} {\bibfnamefont {Y.}~\bibnamefont {Zhang}}, \bibinfo {author} {\bibfnamefont {E.}~\bibnamefont {Chitambar}}, \bibinfo {author} {\bibfnamefont {M.}~\bibnamefont {Junge}},\ and\ \bibinfo {author} {\bibfnamefont {F.}~\bibnamefont {Leditzky}},\ }\href@noop {} {\bibinfo {title} {Compatibility of quantum instruments}} (\bibinfo {year} {2026}),\ \Eprint {https://arxiv.org/abs/arXiv:2609.40266} {arXiv:2609.40266} \BibitemShut {NoStop}%
\bibitem [{\citenamefont {Andrejic}\ and\ \citenamefont {Kunjwal}(2020)}]{Andrejic-2020a}%
  \BibitemOpen
  \bibfield  {author} {\bibinfo {author} {\bibfnamefont {N.}~\bibnamefont {Andrejic}}\ and\ \bibinfo {author} {\bibfnamefont {R.}~\bibnamefont {Kunjwal}},\ }\bibfield  {title} {\bibinfo {title} {Joint measurability structures realizable with qubit measurements: Incompatibility via marginal surgery},\ }\href {https://doi.org/10.1103/PhysRevResearch.2.043147} {\bibfield  {journal} {\bibinfo  {journal} {Phys. Rev. Res.}\ }\textbf {\bibinfo {volume} {2}},\ \bibinfo {pages} {043147} (\bibinfo {year} {2020})}\BibitemShut {NoStop}%
\bibitem [{\citenamefont {Porto}\ \emph {et~al.}(2026)\citenamefont {Porto}, \citenamefont {Designolle}, \citenamefont {Pokutta},\ and\ \citenamefont {Quintino}}]{Porto-2026a}%
  \BibitemOpen
  \bibfield  {author} {\bibinfo {author} {\bibfnamefont {L.~E.~A.}\ \bibnamefont {Porto}}, \bibinfo {author} {\bibfnamefont {S.}~\bibnamefont {Designolle}}, \bibinfo {author} {\bibfnamefont {S.}~\bibnamefont {Pokutta}},\ and\ \bibinfo {author} {\bibfnamefont {M.~T.}\ \bibnamefont {Quintino}},\ }\bibfield  {title} {\bibinfo {title} {Measurement incompatibility and quantum steering via linear programming},\ }\href {https://doi.org/10.22331/q-2026-06-19-2141} {\bibfield  {journal} {\bibinfo  {journal} {Quantum}\ }\textbf {\bibinfo {volume} {10}},\ \bibinfo {pages} {2141} (\bibinfo {year} {2026})}\BibitemShut {NoStop}%
\bibitem [{\citenamefont {Zhang}\ \emph {et~al.}(2026{\natexlab{c}})\citenamefont {Zhang}, \citenamefont {Kwek},\ and\ \citenamefont {Chen}}]{zhang2026spectral}%
  \BibitemOpen
  \bibfield  {author} {\bibinfo {author} {\bibfnamefont {Y.-X.}\ \bibnamefont {Zhang}}, \bibinfo {author} {\bibfnamefont {L.-C.}\ \bibnamefont {Kwek}},\ and\ \bibinfo {author} {\bibfnamefont {J.-L.}\ \bibnamefont {Chen}},\ }\href {https://arxiv.org/abs/2609.14272} {\bibinfo {title} {Epr-steering boundaries from a universal spectral equation}} (\bibinfo {year} {2026}{\natexlab{c}}),\ \Eprint {https://arxiv.org/abs/2609.14272} {arXiv:2609.14272 [quant-ph]} \BibitemShut {NoStop}%
\bibitem [{\citenamefont {Aubrun}\ and\ \citenamefont {Lancien}(2015)}]{Aubrun-2015a}%
  \BibitemOpen
  \bibfield  {author} {\bibinfo {author} {\bibfnamefont {G.}~\bibnamefont {Aubrun}}\ and\ \bibinfo {author} {\bibfnamefont {C.}~\bibnamefont {Lancien}},\ }\bibfield  {title} {\bibinfo {title} {Zonoids and sparsification of quantum measurements},\ }\href {https://doi.org/10.1007/s11117-015-0337-5} {\bibfield  {journal} {\bibinfo  {journal} {Positivity}\ }\textbf {\bibinfo {volume} {20}},\ \bibinfo {pages} {1–23} (\bibinfo {year} {2015})}\BibitemShut {NoStop}%
\bibitem [{\citenamefont {Ku}\ \emph {et~al.}(2022)\citenamefont {Ku}, \citenamefont {Kadlec}, \citenamefont {\ifmmode~\check{C}\else \v{C}\fi{}ernoch}, \citenamefont {Quintino}, \citenamefont {Zhou}, \citenamefont {Lemr}, \citenamefont {Lambert}, \citenamefont {Miranowicz}, \citenamefont {Chen}, \citenamefont {Nori},\ and\ \citenamefont {Chen}}]{Ku2022}%
  \BibitemOpen
  \bibfield  {author} {\bibinfo {author} {\bibfnamefont {H.-Y.}\ \bibnamefont {Ku}}, \bibinfo {author} {\bibfnamefont {J.}~\bibnamefont {Kadlec}}, \bibinfo {author} {\bibfnamefont {A.}~\bibnamefont {\ifmmode~\check{C}\else \v{C}\fi{}ernoch}}, \bibinfo {author} {\bibfnamefont {M.~T.}\ \bibnamefont {Quintino}}, \bibinfo {author} {\bibfnamefont {W.}~\bibnamefont {Zhou}}, \bibinfo {author} {\bibfnamefont {K.}~\bibnamefont {Lemr}}, \bibinfo {author} {\bibfnamefont {N.}~\bibnamefont {Lambert}}, \bibinfo {author} {\bibfnamefont {A.}~\bibnamefont {Miranowicz}}, \bibinfo {author} {\bibfnamefont {S.-L.}\ \bibnamefont {Chen}}, \bibinfo {author} {\bibfnamefont {F.}~\bibnamefont {Nori}},\ and\ \bibinfo {author} {\bibfnamefont {Y.-N.}\ \bibnamefont {Chen}},\ }\bibfield  {title} {\bibinfo {title} {Quantifying quantumness of channels without entanglement},\ }\href {https://doi.org/10.1103/PRXQuantum.3.020338} {\bibfield  {journal} {\bibinfo  {journal} {PRX Quantum}\ }\textbf {\bibinfo {volume} {3}},\ \bibinfo {pages}
  {020338} (\bibinfo {year} {2022})}\BibitemShut {NoStop}%
\bibitem [{\citenamefont {Ku}\ \emph {et~al.}(2023)\citenamefont {Ku}, \citenamefont {Lee}, \citenamefont {Lai}, \citenamefont {Lin},\ and\ \citenamefont {Chen}}]{Ku2023}%
  \BibitemOpen
  \bibfield  {author} {\bibinfo {author} {\bibfnamefont {H.-Y.}\ \bibnamefont {Ku}}, \bibinfo {author} {\bibfnamefont {K.-Y.}\ \bibnamefont {Lee}}, \bibinfo {author} {\bibfnamefont {P.-R.}\ \bibnamefont {Lai}}, \bibinfo {author} {\bibfnamefont {J.-D.}\ \bibnamefont {Lin}},\ and\ \bibinfo {author} {\bibfnamefont {Y.-N.}\ \bibnamefont {Chen}},\ }\bibfield  {title} {\bibinfo {title} {Coherent activation of a steerability-breaking channel},\ }\href {https://doi.org/10.1103/PhysRevA.107.042415} {\bibfield  {journal} {\bibinfo  {journal} {Phys. Rev. A}\ }\textbf {\bibinfo {volume} {107}},\ \bibinfo {pages} {042415} (\bibinfo {year} {2023})}\BibitemShut {NoStop}%
\bibitem [{\citenamefont {Heinosaari}\ \emph {et~al.}(2015{\natexlab{b}})\citenamefont {Heinosaari}, \citenamefont {Kiukas},\ and\ \citenamefont {Reitzner}}]{Heinosaari-2015b}%
  \BibitemOpen
  \bibfield  {author} {\bibinfo {author} {\bibfnamefont {T.}~\bibnamefont {Heinosaari}}, \bibinfo {author} {\bibfnamefont {J.}~\bibnamefont {Kiukas}},\ and\ \bibinfo {author} {\bibfnamefont {D.}~\bibnamefont {Reitzner}},\ }\bibfield  {title} {\bibinfo {title} {Noise robustness of the incompatibility of quantum measurements},\ }\href {https://doi.org/10.1103/PhysRevA.92.022115} {\bibfield  {journal} {\bibinfo  {journal} {Phys. Rev. A}\ }\textbf {\bibinfo {volume} {92}},\ \bibinfo {pages} {022115} (\bibinfo {year} {2015}{\natexlab{b}})}\BibitemShut {NoStop}%
\bibitem [{\citenamefont {Bavaresco}\ \emph {et~al.}(2017)\citenamefont {Bavaresco}, \citenamefont {Quintino}, \citenamefont {Guerini}, \citenamefont {Maciel}, \citenamefont {Cavalcanti},\ and\ \citenamefont {Cunha}}]{Bavaresco-2017a}%
  \BibitemOpen
  \bibfield  {author} {\bibinfo {author} {\bibfnamefont {J.}~\bibnamefont {Bavaresco}}, \bibinfo {author} {\bibfnamefont {M.~T.}\ \bibnamefont {Quintino}}, \bibinfo {author} {\bibfnamefont {L.}~\bibnamefont {Guerini}}, \bibinfo {author} {\bibfnamefont {T.~O.}\ \bibnamefont {Maciel}}, \bibinfo {author} {\bibfnamefont {D.}~\bibnamefont {Cavalcanti}},\ and\ \bibinfo {author} {\bibfnamefont {M.~T.}\ \bibnamefont {Cunha}},\ }\bibfield  {title} {\bibinfo {title} {Most incompatible measurements for robust steering tests},\ }\href {https://doi.org/10.1103/PhysRevA.96.022110} {\bibfield  {journal} {\bibinfo  {journal} {Phys. Rev. A}\ }\textbf {\bibinfo {volume} {96}},\ \bibinfo {pages} {022110} (\bibinfo {year} {2017})}\BibitemShut {NoStop}%
\bibitem [{\citenamefont {Piani}\ \emph {et~al.}(2016)\citenamefont {Piani}, \citenamefont {Cianciaruso}, \citenamefont {Bromley}, \citenamefont {Napoli}, \citenamefont {Johnston},\ and\ \citenamefont {Adesso}}]{Piani-2016a}%
  \BibitemOpen
  \bibfield  {author} {\bibinfo {author} {\bibfnamefont {M.}~\bibnamefont {Piani}}, \bibinfo {author} {\bibfnamefont {M.}~\bibnamefont {Cianciaruso}}, \bibinfo {author} {\bibfnamefont {T.~R.}\ \bibnamefont {Bromley}}, \bibinfo {author} {\bibfnamefont {C.}~\bibnamefont {Napoli}}, \bibinfo {author} {\bibfnamefont {N.}~\bibnamefont {Johnston}},\ and\ \bibinfo {author} {\bibfnamefont {G.}~\bibnamefont {Adesso}},\ }\bibfield  {title} {\bibinfo {title} {Robustness of asymmetry and coherence of quantum states},\ }\href {https://doi.org/10.1103/PhysRevA.93.042107} {\bibfield  {journal} {\bibinfo  {journal} {Phys. Rev. A}\ }\textbf {\bibinfo {volume} {93}},\ \bibinfo {pages} {042107} (\bibinfo {year} {2016})}\BibitemShut {NoStop}%
\bibitem [{\citenamefont {Oszmaniec}\ and\ \citenamefont {Biswas}(2019)}]{Oszmaniec-2019a}%
  \BibitemOpen
  \bibfield  {author} {\bibinfo {author} {\bibfnamefont {M.}~\bibnamefont {Oszmaniec}}\ and\ \bibinfo {author} {\bibfnamefont {T.}~\bibnamefont {Biswas}},\ }\bibfield  {title} {\bibinfo {title} {Operational relevance of resource theories of quantum measurements},\ }\href {https://doi.org/10.22331/q-2019-04-26-133} {\bibfield  {journal} {\bibinfo  {journal} {Quantum}\ }\textbf {\bibinfo {volume} {3}},\ \bibinfo {pages} {133} (\bibinfo {year} {2019})}\BibitemShut {NoStop}%
\bibitem [{\citenamefont {Chitambar}\ and\ \citenamefont {Gour}(2019)}]{Chitambar-2019a}%
  \BibitemOpen
  \bibfield  {author} {\bibinfo {author} {\bibfnamefont {E.}~\bibnamefont {Chitambar}}\ and\ \bibinfo {author} {\bibfnamefont {G.}~\bibnamefont {Gour}},\ }\bibfield  {title} {\bibinfo {title} {Quantum resource theories},\ }\href {https://doi.org/10.1103/RevModPhys.91.025001} {\bibfield  {journal} {\bibinfo  {journal} {Rev. Mod. Phys.}\ }\textbf {\bibinfo {volume} {91}},\ \bibinfo {pages} {025001} (\bibinfo {year} {2019})}\BibitemShut {NoStop}%
\bibitem [{\citenamefont {Takagi}\ and\ \citenamefont {Regula}(2019)}]{Takagi-2019a}%
  \BibitemOpen
  \bibfield  {author} {\bibinfo {author} {\bibfnamefont {R.}~\bibnamefont {Takagi}}\ and\ \bibinfo {author} {\bibfnamefont {B.}~\bibnamefont {Regula}},\ }\bibfield  {title} {\bibinfo {title} {General resource theories in quantum mechanics and beyond: Operational characterization via discrimination tasks},\ }\href {https://doi.org/10.1103/PhysRevX.9.031053} {\bibfield  {journal} {\bibinfo  {journal} {Phys. Rev. X}\ }\textbf {\bibinfo {volume} {9}},\ \bibinfo {pages} {031053} (\bibinfo {year} {2019})}\BibitemShut {NoStop}%
\bibitem [{\citenamefont {Kuroiwa}\ \emph {et~al.}(2024)\citenamefont {Kuroiwa}, \citenamefont {Takagi}, \citenamefont {Adesso},\ and\ \citenamefont {Yamasaki}}]{Kuroiwa-2024a}%
  \BibitemOpen
  \bibfield  {author} {\bibinfo {author} {\bibfnamefont {K.}~\bibnamefont {Kuroiwa}}, \bibinfo {author} {\bibfnamefont {R.}~\bibnamefont {Takagi}}, \bibinfo {author} {\bibfnamefont {G.}~\bibnamefont {Adesso}},\ and\ \bibinfo {author} {\bibfnamefont {H.}~\bibnamefont {Yamasaki}},\ }\bibfield  {title} {\bibinfo {title} {Robustness- and weight-based resource measures without convexity restriction: Multicopy witness and operational advantage in static and dynamical quantum resource theories},\ }\href {https://doi.org/10.1103/PhysRevA.109.042403} {\bibfield  {journal} {\bibinfo  {journal} {Phys. Rev. A}\ }\textbf {\bibinfo {volume} {109}},\ \bibinfo {pages} {042403} (\bibinfo {year} {2024})}\BibitemShut {NoStop}%
\bibitem [{\citenamefont {Uola}\ \emph {et~al.}(2015)\citenamefont {Uola}, \citenamefont {Budroni}, \citenamefont {G\"uhne},\ and\ \citenamefont {Pellonp\"a\"a}}]{Uola2015}%
  \BibitemOpen
  \bibfield  {author} {\bibinfo {author} {\bibfnamefont {R.}~\bibnamefont {Uola}}, \bibinfo {author} {\bibfnamefont {C.}~\bibnamefont {Budroni}}, \bibinfo {author} {\bibfnamefont {O.}~\bibnamefont {G\"uhne}},\ and\ \bibinfo {author} {\bibfnamefont {J.-P.}\ \bibnamefont {Pellonp\"a\"a}},\ }\bibfield  {title} {\bibinfo {title} {One-to-one mapping between steering and joint measurability problems},\ }\href {https://doi.org/10.1103/PhysRevLett.115.230402} {\bibfield  {journal} {\bibinfo  {journal} {Phys. Rev. Lett.}\ }\textbf {\bibinfo {volume} {115}},\ \bibinfo {pages} {230402} (\bibinfo {year} {2015})}\BibitemShut {NoStop}%
\bibitem [{\citenamefont {Quintino}\ \emph {et~al.}(2015)\citenamefont {Quintino}, \citenamefont {V\'ertesi}, \citenamefont {Cavalcanti}, \citenamefont {Augusiak}, \citenamefont {Demianowicz}, \citenamefont {Ac\'{\i}n},\ and\ \citenamefont {Brunner}}]{Quintino-2015a}%
  \BibitemOpen
  \bibfield  {author} {\bibinfo {author} {\bibfnamefont {M.~T.}\ \bibnamefont {Quintino}}, \bibinfo {author} {\bibfnamefont {T.}~\bibnamefont {V\'ertesi}}, \bibinfo {author} {\bibfnamefont {D.}~\bibnamefont {Cavalcanti}}, \bibinfo {author} {\bibfnamefont {R.}~\bibnamefont {Augusiak}}, \bibinfo {author} {\bibfnamefont {M.}~\bibnamefont {Demianowicz}}, \bibinfo {author} {\bibfnamefont {A.}~\bibnamefont {Ac\'{\i}n}},\ and\ \bibinfo {author} {\bibfnamefont {N.}~\bibnamefont {Brunner}},\ }\bibfield  {title} {\bibinfo {title} {Inequivalence of entanglement, steering, and bell nonlocality for general measurements},\ }\href {https://doi.org/10.1103/PhysRevA.92.032107} {\bibfield  {journal} {\bibinfo  {journal} {Phys. Rev. A}\ }\textbf {\bibinfo {volume} {92}},\ \bibinfo {pages} {032107} (\bibinfo {year} {2015})}\BibitemShut {NoStop}%
\bibitem [{\citenamefont {Werner}(1989)}]{Werner-1989a}%
  \BibitemOpen
  \bibfield  {author} {\bibinfo {author} {\bibfnamefont {R.~F.}\ \bibnamefont {Werner}},\ }\bibfield  {title} {\bibinfo {title} {Quantum states with einstein-podolsky-rosen correlations admitting a hidden-variable model},\ }\href {https://doi.org/10.1103/PhysRevA.40.4277} {\bibfield  {journal} {\bibinfo  {journal} {Phys. Rev. A}\ }\textbf {\bibinfo {volume} {40}},\ \bibinfo {pages} {4277} (\bibinfo {year} {1989})}\BibitemShut {NoStop}%
\bibitem [{\citenamefont {Renner}(2024)}]{Renner-2024a}%
  \BibitemOpen
  \bibfield  {author} {\bibinfo {author} {\bibfnamefont {M.~J.}\ \bibnamefont {Renner}},\ }\bibfield  {title} {\bibinfo {title} {Compatibility of generalized noisy qubit measurements},\ }\href {https://doi.org/10.1103/PhysRevLett.132.250202} {\bibfield  {journal} {\bibinfo  {journal} {Phys. Rev. Lett.}\ }\textbf {\bibinfo {volume} {132}},\ \bibinfo {pages} {250202} (\bibinfo {year} {2024})}\BibitemShut {NoStop}%
\bibitem [{\citenamefont {Zhang}\ and\ \citenamefont {Chitambar}(2024)}]{Zhang-2024a}%
  \BibitemOpen
  \bibfield  {author} {\bibinfo {author} {\bibfnamefont {Y.}~\bibnamefont {Zhang}}\ and\ \bibinfo {author} {\bibfnamefont {E.}~\bibnamefont {Chitambar}},\ }\bibfield  {title} {\bibinfo {title} {Exact steering bound for two-qubit werner states},\ }\href {https://doi.org/10.1103/PhysRevLett.132.250201} {\bibfield  {journal} {\bibinfo  {journal} {Phys. Rev. Lett.}\ }\textbf {\bibinfo {volume} {132}},\ \bibinfo {pages} {250201} (\bibinfo {year} {2024})}\BibitemShut {NoStop}%
\bibitem [{\citenamefont {Barrett}(2002)}]{Barrett-2002a}%
  \BibitemOpen
  \bibfield  {author} {\bibinfo {author} {\bibfnamefont {J.}~\bibnamefont {Barrett}},\ }\bibfield  {title} {\bibinfo {title} {Nonsequential positive-operator-valued measurements on entangled mixed states do not always violate a bell inequality},\ }\href {https://doi.org/10.1103/PhysRevA.65.042302} {\bibfield  {journal} {\bibinfo  {journal} {Phys. Rev. A}\ }\textbf {\bibinfo {volume} {65}},\ \bibinfo {pages} {042302} (\bibinfo {year} {2002})}\BibitemShut {NoStop}%
\bibitem [{\citenamefont {Zhang}\ \emph {et~al.}(2025)\citenamefont {Zhang}, \citenamefont {Zhang},\ and\ \citenamefont {Chitambar}}]{Zhang-2025a}%
  \BibitemOpen
  \bibfield  {author} {\bibinfo {author} {\bibfnamefont {Y.}~\bibnamefont {Zhang}}, \bibinfo {author} {\bibfnamefont {J.}~\bibnamefont {Zhang}},\ and\ \bibinfo {author} {\bibfnamefont {E.}~\bibnamefont {Chitambar}},\ }\bibfield  {title} {\bibinfo {title} {Cost of simulating entanglement in steering scenarios},\ }\href {https://doi.org/10.22331/q-2025-10-31-1902} {\bibfield  {journal} {\bibinfo  {journal} {Quantum}\ }\textbf {\bibinfo {volume} {9}},\ \bibinfo {pages} {1902} (\bibinfo {year} {2025})}\BibitemShut {NoStop}%
\bibitem [{\citenamefont {Cavalcanti}\ and\ \citenamefont {Skrzypczyk}(2016{\natexlab{a}})}]{Cavalcanti-2017a}%
  \BibitemOpen
  \bibfield  {author} {\bibinfo {author} {\bibfnamefont {D.}~\bibnamefont {Cavalcanti}}\ and\ \bibinfo {author} {\bibfnamefont {P.}~\bibnamefont {Skrzypczyk}},\ }\bibfield  {title} {\bibinfo {title} {Quantum steering: a review with focus on semidefinite programming},\ }\href {https://doi.org/10.1088/1361-6633/80/2/024001} {\bibfield  {journal} {\bibinfo  {journal} {Reports on Progress in Physics}\ }\textbf {\bibinfo {volume} {80}},\ \bibinfo {pages} {024001} (\bibinfo {year} {2016}{\natexlab{a}})}\BibitemShut {NoStop}%
\bibitem [{\citenamefont {Uola}\ \emph {et~al.}(2020)\citenamefont {Uola}, \citenamefont {Costa}, \citenamefont {Nguyen},\ and\ \citenamefont {G\"uhne}}]{Uola-2020a}%
  \BibitemOpen
  \bibfield  {author} {\bibinfo {author} {\bibfnamefont {R.}~\bibnamefont {Uola}}, \bibinfo {author} {\bibfnamefont {A.~C.~S.}\ \bibnamefont {Costa}}, \bibinfo {author} {\bibfnamefont {H.~C.}\ \bibnamefont {Nguyen}},\ and\ \bibinfo {author} {\bibfnamefont {O.}~\bibnamefont {G\"uhne}},\ }\bibfield  {title} {\bibinfo {title} {Quantum steering},\ }\href {https://doi.org/10.1103/RevModPhys.92.015001} {\bibfield  {journal} {\bibinfo  {journal} {Rev. Mod. Phys.}\ }\textbf {\bibinfo {volume} {92}},\ \bibinfo {pages} {015001} (\bibinfo {year} {2020})}\BibitemShut {NoStop}%
\bibitem [{\citenamefont {Jevtic}\ \emph {et~al.}(2015)\citenamefont {Jevtic}, \citenamefont {Hall}, \citenamefont {Anderson}, \citenamefont {Zwierz},\ and\ \citenamefont {Wiseman}}]{Jevtic-2015a}%
  \BibitemOpen
  \bibfield  {author} {\bibinfo {author} {\bibfnamefont {S.}~\bibnamefont {Jevtic}}, \bibinfo {author} {\bibfnamefont {M.~J.~W.}\ \bibnamefont {Hall}}, \bibinfo {author} {\bibfnamefont {M.~R.}\ \bibnamefont {Anderson}}, \bibinfo {author} {\bibfnamefont {M.}~\bibnamefont {Zwierz}},\ and\ \bibinfo {author} {\bibfnamefont {H.~M.}\ \bibnamefont {Wiseman}},\ }\bibfield  {title} {\bibinfo {title} {Einstein–podolsky–rosen steering and the steering ellipsoid},\ }\href {https://doi.org/10.1364/josab.32.000a40} {\bibfield  {journal} {\bibinfo  {journal} {Journal of the Optical Society of America B}\ }\textbf {\bibinfo {volume} {32}},\ \bibinfo {pages} {A40} (\bibinfo {year} {2015})}\BibitemShut {NoStop}%
\bibitem [{\citenamefont {Nguyen}\ and\ \citenamefont {Vu}(2016)}]{Nguyen-2016a}%
  \BibitemOpen
  \bibfield  {author} {\bibinfo {author} {\bibfnamefont {H.~C.}\ \bibnamefont {Nguyen}}\ and\ \bibinfo {author} {\bibfnamefont {T.}~\bibnamefont {Vu}},\ }\bibfield  {title} {\bibinfo {title} {Necessary and sufficient condition for steerability of two-qubit states by the geometry of steering outcomes},\ }\href {https://doi.org/10.1209/0295-5075/115/10003} {\bibfield  {journal} {\bibinfo  {journal} {EPL (Europhysics Letters)}\ }\textbf {\bibinfo {volume} {115}},\ \bibinfo {pages} {10003} (\bibinfo {year} {2016})}\BibitemShut {NoStop}%
\bibitem [{\citenamefont {Nguyen}\ \emph {et~al.}(2019)\citenamefont {Nguyen}, \citenamefont {Nguyen},\ and\ \citenamefont {G\"uhne}}]{Nguyen-2019a}%
  \BibitemOpen
  \bibfield  {author} {\bibinfo {author} {\bibfnamefont {H.~C.}\ \bibnamefont {Nguyen}}, \bibinfo {author} {\bibfnamefont {H.-V.}\ \bibnamefont {Nguyen}},\ and\ \bibinfo {author} {\bibfnamefont {O.}~\bibnamefont {G\"uhne}},\ }\bibfield  {title} {\bibinfo {title} {Geometry of einstein-podolsky-rosen correlations},\ }\href {https://doi.org/10.1103/PhysRevLett.122.240401} {\bibfield  {journal} {\bibinfo  {journal} {Phys. Rev. Lett.}\ }\textbf {\bibinfo {volume} {122}},\ \bibinfo {pages} {240401} (\bibinfo {year} {2019})}\BibitemShut {NoStop}%
\bibitem [{\citenamefont {Zhang}(2026)}]{zhang2026parallel}%
  \BibitemOpen
  \bibfield  {author} {\bibinfo {author} {\bibfnamefont {Y.}~\bibnamefont {Zhang}},\ }\bibfield  {title} {\bibinfo {title} {Exact incompatibility-breaking criterion for unital qubit channels}} (\bibinfo {year} {2026})\BibitemShut {NoStop}%
\bibitem [{\citenamefont {Bowles}\ \emph {et~al.}(2014)\citenamefont {Bowles}, \citenamefont {V\'ertesi}, \citenamefont {Quintino},\ and\ \citenamefont {Brunner}}]{Bowles-2014a}%
  \BibitemOpen
  \bibfield  {author} {\bibinfo {author} {\bibfnamefont {J.}~\bibnamefont {Bowles}}, \bibinfo {author} {\bibfnamefont {T.}~\bibnamefont {V\'ertesi}}, \bibinfo {author} {\bibfnamefont {M.~T.}\ \bibnamefont {Quintino}},\ and\ \bibinfo {author} {\bibfnamefont {N.}~\bibnamefont {Brunner}},\ }\bibfield  {title} {\bibinfo {title} {One-way einstein-podolsky-rosen steering},\ }\href {https://doi.org/10.1103/PhysRevLett.112.200402} {\bibfield  {journal} {\bibinfo  {journal} {Phys. Rev. Lett.}\ }\textbf {\bibinfo {volume} {112}},\ \bibinfo {pages} {200402} (\bibinfo {year} {2014})}\BibitemShut {NoStop}%
\bibitem [{\citenamefont {Bowles}\ \emph {et~al.}(2016)\citenamefont {Bowles}, \citenamefont {Hirsch}, \citenamefont {Quintino},\ and\ \citenamefont {Brunner}}]{Bowles-2016a}%
  \BibitemOpen
  \bibfield  {author} {\bibinfo {author} {\bibfnamefont {J.}~\bibnamefont {Bowles}}, \bibinfo {author} {\bibfnamefont {F.}~\bibnamefont {Hirsch}}, \bibinfo {author} {\bibfnamefont {M.~T.}\ \bibnamefont {Quintino}},\ and\ \bibinfo {author} {\bibfnamefont {N.}~\bibnamefont {Brunner}},\ }\bibfield  {title} {\bibinfo {title} {Sufficient criterion for guaranteeing that a two-qubit state is unsteerable},\ }\href {https://doi.org/10.1103/PhysRevA.93.022121} {\bibfield  {journal} {\bibinfo  {journal} {Phys. Rev. A}\ }\textbf {\bibinfo {volume} {93}},\ \bibinfo {pages} {022121} (\bibinfo {year} {2016})}\BibitemShut {NoStop}%
\bibitem [{\citenamefont {Song}\ and\ \citenamefont {Bae}(2026)}]{Song-2026a}%
  \BibitemOpen
  \bibfield  {author} {\bibinfo {author} {\bibfnamefont {Q.-C.}\ \bibnamefont {Song}}\ and\ \bibinfo {author} {\bibfnamefont {J.}~\bibnamefont {Bae}},\ }\href@noop {} {\bibinfo {title} {One pure steered state implies einstein-podolsky-rosen steering}} (\bibinfo {year} {2026}),\ \Eprint {https://arxiv.org/abs/arXiv:2605.18243} {arXiv:2605.18243} \BibitemShut {NoStop}%
\bibitem [{\citenamefont {Busch}(1986)}]{Busch-1986a}%
  \BibitemOpen
  \bibfield  {author} {\bibinfo {author} {\bibfnamefont {P.}~\bibnamefont {Busch}},\ }\bibfield  {title} {\bibinfo {title} {Unsharp reality and joint measurements for spin observables},\ }\href {https://doi.org/10.1103/PhysRevD.33.2253} {\bibfield  {journal} {\bibinfo  {journal} {Phys. Rev. D}\ }\textbf {\bibinfo {volume} {33}},\ \bibinfo {pages} {2253} (\bibinfo {year} {1986})}\BibitemShut {NoStop}%
\bibitem [{\citenamefont {Stano}\ \emph {et~al.}(2008)\citenamefont {Stano}, \citenamefont {Reitzner},\ and\ \citenamefont {Heinosaari}}]{Stano-2008a}%
  \BibitemOpen
  \bibfield  {author} {\bibinfo {author} {\bibfnamefont {P.}~\bibnamefont {Stano}}, \bibinfo {author} {\bibfnamefont {D.}~\bibnamefont {Reitzner}},\ and\ \bibinfo {author} {\bibfnamefont {T.}~\bibnamefont {Heinosaari}},\ }\bibfield  {title} {\bibinfo {title} {Coexistence of qubit effects},\ }\href {https://doi.org/10.1103/PhysRevA.78.012315} {\bibfield  {journal} {\bibinfo  {journal} {Phys. Rev. A}\ }\textbf {\bibinfo {volume} {78}},\ \bibinfo {pages} {012315} (\bibinfo {year} {2008})}\BibitemShut {NoStop}%
\bibitem [{\citenamefont {Busch}\ and\ \citenamefont {Schmidt}(2009)}]{Busch-2009a}%
  \BibitemOpen
  \bibfield  {author} {\bibinfo {author} {\bibfnamefont {P.}~\bibnamefont {Busch}}\ and\ \bibinfo {author} {\bibfnamefont {H.-J.}\ \bibnamefont {Schmidt}},\ }\bibfield  {title} {\bibinfo {title} {Coexistence of qubit effects},\ }\href {https://doi.org/10.1007/s11128-009-0109-x} {\bibfield  {journal} {\bibinfo  {journal} {Quantum Information Processing}\ }\textbf {\bibinfo {volume} {9}},\ \bibinfo {pages} {143–169} (\bibinfo {year} {2009})}\BibitemShut {NoStop}%
\bibitem [{\citenamefont {Yu}\ \emph {et~al.}(2010)\citenamefont {Yu}, \citenamefont {Liu}, \citenamefont {Li},\ and\ \citenamefont {Oh}}]{Yu-2010a}%
  \BibitemOpen
  \bibfield  {author} {\bibinfo {author} {\bibfnamefont {S.}~\bibnamefont {Yu}}, \bibinfo {author} {\bibfnamefont {N.-l.}\ \bibnamefont {Liu}}, \bibinfo {author} {\bibfnamefont {L.}~\bibnamefont {Li}},\ and\ \bibinfo {author} {\bibfnamefont {C.~H.}\ \bibnamefont {Oh}},\ }\bibfield  {title} {\bibinfo {title} {Joint measurement of two unsharp observables of a qubit},\ }\href {https://doi.org/10.1103/PhysRevA.81.062116} {\bibfield  {journal} {\bibinfo  {journal} {Phys. Rev. A}\ }\textbf {\bibinfo {volume} {81}},\ \bibinfo {pages} {062116} (\bibinfo {year} {2010})}\BibitemShut {NoStop}%
\bibitem [{\citenamefont {Liang}\ \emph {et~al.}(2011)\citenamefont {Liang}, \citenamefont {Spekkens},\ and\ \citenamefont {Wiseman}}]{Liang-2011a}%
  \BibitemOpen
  \bibfield  {author} {\bibinfo {author} {\bibfnamefont {Y.-C.}\ \bibnamefont {Liang}}, \bibinfo {author} {\bibfnamefont {R.~W.}\ \bibnamefont {Spekkens}},\ and\ \bibinfo {author} {\bibfnamefont {H.~M.}\ \bibnamefont {Wiseman}},\ }\bibfield  {title} {\bibinfo {title} {Specker’s parable of the overprotective seer: A road to contextuality, nonlocality and complementarity},\ }\href {https://doi.org/10.1016/j.physrep.2011.05.001} {\bibfield  {journal} {\bibinfo  {journal} {Physics Reports}\ }\textbf {\bibinfo {volume} {506}},\ \bibinfo {pages} {1–39} (\bibinfo {year} {2011})}\BibitemShut {NoStop}%
\bibitem [{\citenamefont {Yu}\ and\ \citenamefont {Oh}(2013)}]{Yu-2013a}%
  \BibitemOpen
  \bibfield  {author} {\bibinfo {author} {\bibfnamefont {S.}~\bibnamefont {Yu}}\ and\ \bibinfo {author} {\bibfnamefont {C.~H.}\ \bibnamefont {Oh}},\ }\href@noop {} {\bibinfo {title} {Quantum contextuality and joint measurement of three observables of a qubit}} (\bibinfo {year} {2013}),\ \Eprint {https://arxiv.org/abs/arXiv:1312.6470} {arXiv:1312.6470} \BibitemShut {NoStop}%
\bibitem [{\citenamefont {Kunjwal}\ \emph {et~al.}(2014)\citenamefont {Kunjwal}, \citenamefont {Heunen},\ and\ \citenamefont {Fritz}}]{Kunjwal-2014a}%
  \BibitemOpen
  \bibfield  {author} {\bibinfo {author} {\bibfnamefont {R.}~\bibnamefont {Kunjwal}}, \bibinfo {author} {\bibfnamefont {C.}~\bibnamefont {Heunen}},\ and\ \bibinfo {author} {\bibfnamefont {T.}~\bibnamefont {Fritz}},\ }\bibfield  {title} {\bibinfo {title} {Quantum realization of arbitrary joint measurability structures},\ }\href {https://doi.org/10.1103/PhysRevA.89.052126} {\bibfield  {journal} {\bibinfo  {journal} {Phys. Rev. A}\ }\textbf {\bibinfo {volume} {89}},\ \bibinfo {pages} {052126} (\bibinfo {year} {2014})}\BibitemShut {NoStop}%
\bibitem [{\citenamefont {Uola}\ \emph {et~al.}(2016)\citenamefont {Uola}, \citenamefont {Luoma}, \citenamefont {Moroder},\ and\ \citenamefont {Heinosaari}}]{Uola-2016a}%
  \BibitemOpen
  \bibfield  {author} {\bibinfo {author} {\bibfnamefont {R.}~\bibnamefont {Uola}}, \bibinfo {author} {\bibfnamefont {K.}~\bibnamefont {Luoma}}, \bibinfo {author} {\bibfnamefont {T.}~\bibnamefont {Moroder}},\ and\ \bibinfo {author} {\bibfnamefont {T.}~\bibnamefont {Heinosaari}},\ }\bibfield  {title} {\bibinfo {title} {Adaptive strategy for joint measurements},\ }\href {https://doi.org/10.1103/PhysRevA.94.022109} {\bibfield  {journal} {\bibinfo  {journal} {Phys. Rev. A}\ }\textbf {\bibinfo {volume} {94}},\ \bibinfo {pages} {022109} (\bibinfo {year} {2016})}\BibitemShut {NoStop}%
\bibitem [{\citenamefont {Jae}\ \emph {et~al.}(2019)\citenamefont {Jae}, \citenamefont {Baek}, \citenamefont {Ryu},\ and\ \citenamefont {Lee}}]{Jae-2019a}%
  \BibitemOpen
  \bibfield  {author} {\bibinfo {author} {\bibfnamefont {J.}~\bibnamefont {Jae}}, \bibinfo {author} {\bibfnamefont {K.}~\bibnamefont {Baek}}, \bibinfo {author} {\bibfnamefont {J.}~\bibnamefont {Ryu}},\ and\ \bibinfo {author} {\bibfnamefont {J.}~\bibnamefont {Lee}},\ }\bibfield  {title} {\bibinfo {title} {Necessary and sufficient condition for joint measurability},\ }\href {https://doi.org/10.1103/PhysRevA.100.032113} {\bibfield  {journal} {\bibinfo  {journal} {Phys. Rev. A}\ }\textbf {\bibinfo {volume} {100}},\ \bibinfo {pages} {032113} (\bibinfo {year} {2019})}\BibitemShut {NoStop}%
\bibitem [{\citenamefont {Cavalcanti}\ and\ \citenamefont {Skrzypczyk}(2016{\natexlab{b}})}]{Cavalcanti-2016a}%
  \BibitemOpen
  \bibfield  {author} {\bibinfo {author} {\bibfnamefont {D.}~\bibnamefont {Cavalcanti}}\ and\ \bibinfo {author} {\bibfnamefont {P.}~\bibnamefont {Skrzypczyk}},\ }\bibfield  {title} {\bibinfo {title} {Quantitative relations between measurement incompatibility, quantum steering, and nonlocality},\ }\href {https://doi.org/10.1103/PhysRevA.93.052112} {\bibfield  {journal} {\bibinfo  {journal} {Phys. Rev. A}\ }\textbf {\bibinfo {volume} {93}},\ \bibinfo {pages} {052112} (\bibinfo {year} {2016}{\natexlab{b}})}\BibitemShut {NoStop}%
\bibitem [{\citenamefont {Skrzypczyk}\ \emph {et~al.}(2020)\citenamefont {Skrzypczyk}, \citenamefont {Hoban}, \citenamefont {Sainz},\ and\ \citenamefont {Linden}}]{Skrzypczyk-2020a}%
  \BibitemOpen
  \bibfield  {author} {\bibinfo {author} {\bibfnamefont {P.}~\bibnamefont {Skrzypczyk}}, \bibinfo {author} {\bibfnamefont {M.~J.}\ \bibnamefont {Hoban}}, \bibinfo {author} {\bibfnamefont {A.~B.}\ \bibnamefont {Sainz}},\ and\ \bibinfo {author} {\bibfnamefont {N.}~\bibnamefont {Linden}},\ }\bibfield  {title} {\bibinfo {title} {Complexity of compatible measurements},\ }\href {https://doi.org/10.1103/PhysRevResearch.2.023292} {\bibfield  {journal} {\bibinfo  {journal} {Phys. Rev. Res.}\ }\textbf {\bibinfo {volume} {2}},\ \bibinfo {pages} {023292} (\bibinfo {year} {2020})}\BibitemShut {NoStop}%
\bibitem [{\citenamefont {Grinko}\ and\ \citenamefont {Uola}(2025)}]{Grinko-2025a}%
  \BibitemOpen
  \bibfield  {author} {\bibinfo {author} {\bibfnamefont {D.}~\bibnamefont {Grinko}}\ and\ \bibinfo {author} {\bibfnamefont {R.}~\bibnamefont {Uola}},\ }\bibfield  {title} {\bibinfo {title} {Compatibility of binary qubit measurements},\ }\href {https://doi.org/10.1103/vv1h-5mf9} {\bibfield  {journal} {\bibinfo  {journal} {Phys. Rev. Lett.}\ }\textbf {\bibinfo {volume} {135}},\ \bibinfo {pages} {200201} (\bibinfo {year} {2025})}\BibitemShut {NoStop}%
\bibitem [{\citenamefont {Bolker}(1969)}]{Bolker-1969a}%
  \BibitemOpen
  \bibfield  {author} {\bibinfo {author} {\bibfnamefont {E.~D.}\ \bibnamefont {Bolker}},\ }\bibfield  {title} {\bibinfo {title} {A class of convex bodies},\ }\href {https://doi.org/10.1090/s0002-9947-1969-0256265-x} {\bibfield  {journal} {\bibinfo  {journal} {Transactions of the American Mathematical Society}\ }\textbf {\bibinfo {volume} {145}},\ \bibinfo {pages} {323–345} (\bibinfo {year} {1969})}\BibitemShut {NoStop}%
\bibitem [{\citenamefont {Goodey}\ and\ \citenamefont {Weil}(1993)}]{Goodey-1993a}%
  \BibitemOpen
  \bibfield  {author} {\bibinfo {author} {\bibfnamefont {P.}~\bibnamefont {Goodey}}\ and\ \bibinfo {author} {\bibfnamefont {W.}~\bibnamefont {Weil}},\ }\bibinfo {title} {Zonoids and generalisations},\ in\ \href {https://doi.org/10.1016/b978-0-444-89597-4.50020-2} {\emph {\bibinfo {booktitle} {Handbook of Convex Geometry}}}\ (\bibinfo  {publisher} {Elsevier},\ \bibinfo {year} {1993})\ p.\ \bibinfo {pages} {1297–1326}\BibitemShut {NoStop}%
\bibitem [{\citenamefont {Buscemi}\ \emph {et~al.}(2005)\citenamefont {Buscemi}, \citenamefont {Keyl}, \citenamefont {D’Ariano}, \citenamefont {Perinotti},\ and\ \citenamefont {Werner}}]{Buscemi-2005a}%
  \BibitemOpen
  \bibfield  {author} {\bibinfo {author} {\bibfnamefont {F.}~\bibnamefont {Buscemi}}, \bibinfo {author} {\bibfnamefont {M.}~\bibnamefont {Keyl}}, \bibinfo {author} {\bibfnamefont {G.~M.}\ \bibnamefont {D’Ariano}}, \bibinfo {author} {\bibfnamefont {P.}~\bibnamefont {Perinotti}},\ and\ \bibinfo {author} {\bibfnamefont {R.~F.}\ \bibnamefont {Werner}},\ }\bibfield  {title} {\bibinfo {title} {Clean positive operator valued measures},\ }\bibfield  {journal} {\bibinfo  {journal} {Journal of Mathematical Physics}\ }\textbf {\bibinfo {volume} {46}},\ \href {https://doi.org/10.1063/1.2008996} {10.1063/1.2008996} (\bibinfo {year} {2005})\BibitemShut {NoStop}%
\bibitem [{\citenamefont {Heinonen}(2005)}]{Heinonen-2005a}%
  \BibitemOpen
  \bibfield  {author} {\bibinfo {author} {\bibfnamefont {T.}~\bibnamefont {Heinonen}},\ }\bibfield  {title} {\bibinfo {title} {Optimal measurements in quantum mechanics},\ }\href {https://doi.org/10.1016/j.physleta.2005.08.003} {\bibfield  {journal} {\bibinfo  {journal} {Physics Letters A}\ }\textbf {\bibinfo {volume} {346}},\ \bibinfo {pages} {77–86} (\bibinfo {year} {2005})}\BibitemShut {NoStop}%
\bibitem [{\citenamefont {Kelly}\ and\ \citenamefont {Weiss}(1979)}]{Kelly-1979a}%
  \BibitemOpen
  \bibfield  {author} {\bibinfo {author} {\bibfnamefont {P.}~\bibnamefont {Kelly}}\ and\ \bibinfo {author} {\bibfnamefont {M.}~\bibnamefont {Weiss}},\ }\href@noop {} {\emph {\bibinfo {title} {Geometry and Convexity: A Study in Mathematical Methods}}}\ (\bibinfo  {publisher} {Wiley},\ \bibinfo {year} {1979})\BibitemShut {NoStop}%
\bibitem [{\citenamefont {Beckman}\ \emph {et~al.}(2001)\citenamefont {Beckman}, \citenamefont {Gottesman}, \citenamefont {Nielsen},\ and\ \citenamefont {Preskill}}]{Beckman-2001a}%
  \BibitemOpen
  \bibfield  {author} {\bibinfo {author} {\bibfnamefont {D.}~\bibnamefont {Beckman}}, \bibinfo {author} {\bibfnamefont {D.}~\bibnamefont {Gottesman}}, \bibinfo {author} {\bibfnamefont {M.~A.}\ \bibnamefont {Nielsen}},\ and\ \bibinfo {author} {\bibfnamefont {J.}~\bibnamefont {Preskill}},\ }\bibfield  {title} {\bibinfo {title} {Causal and localizable quantum operations},\ }\href {https://doi.org/10.1103/PhysRevA.64.052309} {\bibfield  {journal} {\bibinfo  {journal} {Phys. Rev. A}\ }\textbf {\bibinfo {volume} {64}},\ \bibinfo {pages} {052309} (\bibinfo {year} {2001})}\BibitemShut {NoStop}%
\bibitem [{\citenamefont {Pauwels}\ \emph {et~al.}(2025)\citenamefont {Pauwels}, \citenamefont {Pozas-Kerstjens}, \citenamefont {Del~Santo},\ and\ \citenamefont {Gisin}}]{Pauwels-2025a}%
  \BibitemOpen
  \bibfield  {author} {\bibinfo {author} {\bibfnamefont {J.}~\bibnamefont {Pauwels}}, \bibinfo {author} {\bibfnamefont {A.}~\bibnamefont {Pozas-Kerstjens}}, \bibinfo {author} {\bibfnamefont {F.}~\bibnamefont {Del~Santo}},\ and\ \bibinfo {author} {\bibfnamefont {N.}~\bibnamefont {Gisin}},\ }\bibfield  {title} {\bibinfo {title} {Classification of joint quantum measurements based on entanglement cost of localization},\ }\href {https://doi.org/10.1103/PhysRevX.15.021013} {\bibfield  {journal} {\bibinfo  {journal} {Phys. Rev. X}\ }\textbf {\bibinfo {volume} {15}},\ \bibinfo {pages} {021013} (\bibinfo {year} {2025})}\BibitemShut {NoStop}%
\bibitem [{\citenamefont {Quintino}\ \emph {et~al.}(2016{\natexlab{b}})\citenamefont {Quintino}, \citenamefont {Brunner},\ and\ \citenamefont {Huber}}]{Quintino2016a}%
  \BibitemOpen
  \bibfield  {author} {\bibinfo {author} {\bibfnamefont {M.~T.}\ \bibnamefont {Quintino}}, \bibinfo {author} {\bibfnamefont {N.}~\bibnamefont {Brunner}},\ and\ \bibinfo {author} {\bibfnamefont {M.}~\bibnamefont {Huber}},\ }\bibfield  {title} {\bibinfo {title} {Superactivation of quantum steering},\ }\href {https://doi.org/10.1103/PhysRevA.94.062123} {\bibfield  {journal} {\bibinfo  {journal} {Phys. Rev. A}\ }\textbf {\bibinfo {volume} {94}},\ \bibinfo {pages} {062123} (\bibinfo {year} {2016}{\natexlab{b}})}\BibitemShut {NoStop}%
\bibitem [{\citenamefont {Braun}\ \emph {et~al.}(2014)\citenamefont {Braun}, \citenamefont {Giraud}, \citenamefont {Nechita}, \citenamefont {Pellegrini},\ and\ \citenamefont {{\v Z}nidari{\v c}}}]{Braun2014}%
  \BibitemOpen
  \bibfield  {author} {\bibinfo {author} {\bibfnamefont {D.}~\bibnamefont {Braun}}, \bibinfo {author} {\bibfnamefont {O.}~\bibnamefont {Giraud}}, \bibinfo {author} {\bibfnamefont {I.}~\bibnamefont {Nechita}}, \bibinfo {author} {\bibfnamefont {C.}~\bibnamefont {Pellegrini}},\ and\ \bibinfo {author} {\bibfnamefont {M.}~\bibnamefont {{\v Z}nidari{\v c}}},\ }\bibfield  {title} {\bibinfo {title} {A universal set of qubit quantum channels},\ }\href {https://doi.org/10.1088/1751-8113/47/13/135302} {\bibfield  {journal} {\bibinfo  {journal} {Journal of Physics A: Mathematical and Theoretical}\ }\textbf {\bibinfo {volume} {47}},\ \bibinfo {pages} {135302} (\bibinfo {year} {2014})},\ \Eprint {https://arxiv.org/abs/1306.0495} {arXiv:1306.0495 [quant-ph]} \BibitemShut {NoStop}%
\bibitem [{\citenamefont {Mori}\ and\ \citenamefont {{\v{S}}emrl}(2021)}]{MoriSemrl2021}%
  \BibitemOpen
  \bibfield  {author} {\bibinfo {author} {\bibfnamefont {M.}~\bibnamefont {Mori}}\ and\ \bibinfo {author} {\bibfnamefont {P.}~\bibnamefont {{\v{S}}emrl}},\ }\bibfield  {title} {\bibinfo {title} {Continuous coexistency preservers on effect algebras},\ }\href {https://doi.org/10.1088/1751-8121/abcb44} {\bibfield  {journal} {\bibinfo  {journal} {Journal of Physics A: Mathematical and Theoretical}\ }\textbf {\bibinfo {volume} {54}},\ \bibinfo {pages} {015303} (\bibinfo {year} {2021})},\ \Eprint {https://arxiv.org/abs/1911.09490} {arXiv:1911.09490 [math-ph]} \BibitemShut {NoStop}%
\bibitem [{\citenamefont {Kiukas}\ \emph {et~al.}(2017)\citenamefont {Kiukas}, \citenamefont {Budroni}, \citenamefont {Uola},\ and\ \citenamefont {Pellonp\"a\"a}}]{Kiukas2017}%
  \BibitemOpen
  \bibfield  {author} {\bibinfo {author} {\bibfnamefont {J.}~\bibnamefont {Kiukas}}, \bibinfo {author} {\bibfnamefont {C.}~\bibnamefont {Budroni}}, \bibinfo {author} {\bibfnamefont {R.}~\bibnamefont {Uola}},\ and\ \bibinfo {author} {\bibfnamefont {J.-P.}\ \bibnamefont {Pellonp\"a\"a}},\ }\bibfield  {title} {\bibinfo {title} {Continuous-variable steering and incompatibility via state-channel duality},\ }\href {https://doi.org/10.1103/PhysRevA.96.042331} {\bibfield  {journal} {\bibinfo  {journal} {Phys. Rev. A}\ }\textbf {\bibinfo {volume} {96}},\ \bibinfo {pages} {042331} (\bibinfo {year} {2017})}\BibitemShut {NoStop}%
\bibitem [{\citenamefont {Jones}\ \emph {et~al.}(2021)\citenamefont {Jones}, \citenamefont {{\ifmmode \check{S}\else \v{S}\fi{}}upi{\ifmmode \acute{c}\else \'{c}\fi{}}}, \citenamefont {Uola}, \citenamefont {Brunner},\ and\ \citenamefont {Skrzypczyk}}]{Jones-2021a}%
  \BibitemOpen
  \bibfield  {author} {\bibinfo {author} {\bibfnamefont {B.~D.~M.}\ \bibnamefont {Jones}}, \bibinfo {author} {\bibfnamefont {I.}~\bibnamefont {{\ifmmode \check{S}\else \v{S}\fi{}}upi{\ifmmode \acute{c}\else \'{c}\fi{}}}}, \bibinfo {author} {\bibfnamefont {R.}~\bibnamefont {Uola}}, \bibinfo {author} {\bibfnamefont {N.}~\bibnamefont {Brunner}},\ and\ \bibinfo {author} {\bibfnamefont {P.}~\bibnamefont {Skrzypczyk}},\ }\bibfield  {title} {\bibinfo {title} {Network quantum steering},\ }\href {https://doi.org/10.1103/PhysRevLett.127.170405} {\bibfield  {journal} {\bibinfo  {journal} {Phys. Rev. Lett.}\ }\textbf {\bibinfo {volume} {127}},\ \bibinfo {pages} {170405} (\bibinfo {year} {2021})}\BibitemShut {NoStop}%
\bibitem [{\citenamefont {Rudin}(1991)}]{Rudin-1991a}%
  \BibitemOpen
  \bibfield  {author} {\bibinfo {author} {\bibfnamefont {W.}~\bibnamefont {Rudin}},\ }\href {https://books.google.com/books?id=Sh_vAAAAMAAJ} {\emph {\bibinfo {title} {Functional Analysis}}},\ International series in pure and applied mathematics\ (\bibinfo  {publisher} {McGraw-Hill},\ \bibinfo {year} {1991})\BibitemShut {NoStop}%
\bibitem [{\citenamefont {Hug}\ and\ \citenamefont {Weil}(2020)}]{HugWeil2020LecturesConvexGeometry}%
  \BibitemOpen
  \bibfield  {author} {\bibinfo {author} {\bibfnamefont {D.}~\bibnamefont {Hug}}\ and\ \bibinfo {author} {\bibfnamefont {W.}~\bibnamefont {Weil}},\ }\href {https://doi.org/10.1007/978-3-030-50180-8} {\emph {\bibinfo {title} {Lectures on Convex Geometry}}},\ \bibinfo {edition} {1st}\ ed.,\ \bibinfo {series} {Graduate Texts in Mathematics}, Vol.\ \bibinfo {volume} {286}\ (\bibinfo  {publisher} {Springer},\ \bibinfo {address} {Cham},\ \bibinfo {year} {2020})\BibitemShut {NoStop}%
\bibitem [{\citenamefont {Rubin}(2014)}]{Rubin-2014a}%
  \BibitemOpen
  \bibfield  {author} {\bibinfo {author} {\bibfnamefont {B.}~\bibnamefont {Rubin}},\ }\bibfield  {title} {\bibinfo {title} {The $\lambda$-cosine transforms with odd kernel and the hemispherical transform},\ }\href {https://doi.org/10.2478/s13540-014-0198-9} {\bibfield  {journal} {\bibinfo  {journal} {Fractional Calculus and Applied Analysis}\ }\textbf {\bibinfo {volume} {17}},\ \bibinfo {pages} {765–806} (\bibinfo {year} {2014})}\BibitemShut {NoStop}%
\bibitem [{\citenamefont {Bogachev}(2007)}]{Bogachev-2007a}%
  \BibitemOpen
  \bibfield  {author} {\bibinfo {author} {\bibfnamefont {V.~I.}\ \bibnamefont {Bogachev}},\ }\href {https://doi.org/10.1007/978-3-540-34514-5} {\emph {\bibinfo {title} {Measure Theory}}}\ (\bibinfo  {publisher} {Springer Berlin Heidelberg},\ \bibinfo {year} {2007})\BibitemShut {NoStop}%
\end{thebibliography}%

\appendix
\clearpage
\onecolumngrid

\section{Generator Representations of a Zonoid}

\label{App:zonoid-generator}

In this appendix, we provide more details on the generator representation of a zonoid.
Specifically, we derive Eq. \eqref{Eq:zonoid-generator-representation}, Lemma \ref{Lem:POVM-zonoid-generator}, and expressions for the support functions in Eqns. \eqref{Eq:support-function-zonotope1} and \eqref{Eq:support-function-zonoid1}.

\medskip

\begin{applemma}
\label{Lem:Zonoid-representation-Appendix}
Let $\mc{Z}=\pi(\mc{F})$ be the image of some non-atomic vector-valued measure $\pi:\mc{F}\to\mathbb{R}^d$ on a measurable space $(\Lambda,\mc{F})$, i.e. a zonoid.
Then there exists a Borel measure $\nu$ on $S^{d-1}$, called a generating measure, such that
    \begin{align}
\label{Eq:zonoid-generator-representation-appendix}
    \mc{Z}=\left\{\int_{S^{d-1}}q(\hat{r})\hat{r}\;d\nu(\hat{r})\;|\; q\in[0,1] \;\text{is measurable}\right\}=\left\{\int_X \hat{r}d\nu(\hat{r})\:|\; X\in\mf{B}(S^{d-1})\right\}.
\end{align}
\end{applemma}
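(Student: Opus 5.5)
The idea is to push the vector measure $\pi$ forward onto the sphere of directions, using a density (Radon--Nikodym) decomposition. Let $|\pi|$ denote the total variation measure of $\pi$; it is a finite positive measure on $(\Lambda,\mc{F})$ and $\pi\ll|\pi|$. By the vector Radon--Nikodym theorem there is a measurable $g:\Lambda\to\mathbb{R}^d$ with $\Vert g\Vert=1$ $|\pi|$-a.e. such that $\pi(E)=\int_E g\,d|\pi|$. We define the generating measure as the pushforward $\nu:=g_*|\pi|$ on $\mf{B}(S^{d-1})$, so that $\int_{S^{d-1}}f(\hat r)\,d\nu(\hat r)=\int_\Lambda f(g(\lambda))\,d|\pi|(\lambda)$ for bounded measurable $f$.

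First we show $\mc{Z}\subset\{\int q\hat r\,d\nu\}$, and in fact $\mc{Z}\subset\conv(\cdot)$ of that set. The key tool is the support function. For any unit $\vec n$,
\begin{align}
h_{\mc{Z}}(\vec n)=\sup_{E\in\mc F}\int_E\langle\vec n,g\rangle d|\pi|=\int_\Lambda\langle\vec n,g\rangle_+d|\pi|=\int_{S^{d-1}}\langle\vec n,\hat r\rangle_+d\nu(\hat r).\notag
\end{align}
Here we use that, by Lyapunov's theorem, the non-atomic image $\pi(\mc F)$ is already convex and compact. Next let $\mc{Z}'$ be the set on the right of Eq.~\eqref{Eq:zonoid-generator-representation-appendix} with $q\in[0,1]$. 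It is convex, and it is compact because the unit ball of $L^\infty(\nu)$ is weak-$*$ compact (Banach--Alaoglu) and $q\mapsto\int q\hat r\,d\nu$ is weak-$*$ continuous. Its support function is $\sup_q\int q\langle\vec n,\hat r\rangle d\nu=\int\langle\vec n,\hat r\rangle_+d\nu$. Two compact convex sets with equal support functions coincide, so $\mc{Z}=\mc{Z}'$.

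It then remains to identify $\mc{Z}'$ with the set of $\int_X\hat r\,d\nu$ over Borel $X$. One inclusion is trivial: take $q=\chi_X$. For the converse we would like Lyapunov again, but $\nu$ may have atoms; an example is $\pi$ concentrated along a single direction, where $\nu$ is a point mass. This is the main obstacle. We handle it by splitting $\nu=\nu_c+\sum_k c_k\delta_{\hat r_k}$ into non-atomic and atomic parts. On the non-atomic part, Lyapunov's theorem applied to $X\mapsto\int_X\hat r\,d\nu_c$ shows that $\int q\hat r\,d\nu_c$ equals $\int_{X_c}\hat r\,d\nu_c$ for some Borel set $X_c$. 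On an atom $\hat r_k$, the contribution is $q(\hat r_k)c_k\hat r_k$ with $q(\hat r_k)\in[0,1]$. A fractional value cannot be realized by a Borel subset of $S^{d-1}$ for an atomic $\nu$, so we exploit the fact that $\nu$ came from a non-atomic $|\pi|$.

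Concretely, this means re-running the argument at the level of $\Lambda$ rather than $S^{d-1}$. The map $E\mapsto\int_E g\,d|\pi|=\pi(E)$ is non-atomic, so any value $\int_\Lambda q(g(\lambda))g\,d|\pi|$ equals $\pi(E)$ for some $E\in\mc F$. If the paper intends the second equality literally with $X\in\mf B(S^{d-1})$, we may need to replace $\nu$ by a non-atomic generating measure with the same even part. One way is to spread each atom $c_k\delta_{\hat r_k}$ as a uniform measure along a tiny segment. That is not possible on the sphere without changing directions, so the alternative is to interpret $\nu$ on a non-atomic model $(S^{d-1}\times[0,1],\nu\otimes\text{Leb})$ and identify Borel sets there. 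I expect to state the second equality as holding up to this identification, or to note that it holds verbatim whenever $\nu$ is non-atomic. The first equality, which is what is used elsewhere in the paper, is fully established by the support-function argument.
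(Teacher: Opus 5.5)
Your proof of the first equality is correct, and it takes a different route from the paper. The paper gets $\mathcal{Z}\subseteq\mathcal{Z}'$ by disintegrating $|\pi|$ along the direction map $\phi$, so that $\pi(E)$ is represented by $q(\hat r)=\int\chi_E\,d\mu_{\hat r}$. It gets the reverse inclusion by checking $\pi(\phi^{-1}(B))=\int_B\hat r\,d\nu$ and then approximating a general $q$ by convex combinations of indicators, using that $\mathcal{Z}$ is closed and convex. You instead show that $\mathcal{Z}$ and $\mathcal{Z}'$ are compact convex sets with the same support function $\int\langle\vec n,\hat r\rangle_+\,d\nu$. Your route avoids the disintegration theorem, which the paper invokes on an arbitrary measurable space $(\Lambda,\mathcal{F})$ where it is not automatically available; that step could be repaired by taking $q=d\,\phi_*(\chi_E|\pi|)/d\nu\in[0,1]$. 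The paper's route is more explicit about which $q$ represents a given $\pi(E)$, and it needs no weak-$*$ compactness. Both routes use Lyapunov's theorem to know that $\pi(\mathcal{F})$ is closed and convex.

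Your proposal does not prove the second equality, and no proof can: the obstruction you found is real, and the equality is false as stated. The paper's proof fails exactly where you suspected. It asserts that $\{\int_X\hat r\,d\nu : X\in\mathfrak{B}(S^{d-1})\}$ is convex ``since it is a zonoid.'' But the range of $X\mapsto\int_X\hat r\,d\nu$ is convex only when $\nu$ has no atoms. The pushforward has an atom at $\hat r_0$ whenever $|\pi|(\phi^{-1}(\hat r_0))>0$, for instance for every nondegenerate zonotope.

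Concretely, take $\Lambda=[0,1]$ and $\pi(E)=\mathrm{Leb}(E)\,\hat e_1$. This is non-atomic, with $\mathcal{Z}=\{t\hat e_1: 0\le t\le 1\}$. Suppose a finite Borel $\nu$ on $S^{d-1}$, $d\ge2$, had $\{\int_X\hat r\,d\nu\}=\mathcal{Z}$. Then for every unit $\vec n\perp\hat e_1$, the set $X=\{\hat r:\langle\vec n,\hat r\rangle>0\}$ gives $\int_X\langle\vec n,\hat r\rangle\,d\nu=0$, hence $\nu(X)=0$. Letting $\vec n$ run over a countable dense subset of the unit sphere of $\hat e_1^{\perp}$ shows $\nu$ is concentrated on $\{\pm\hat e_1\}$. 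The range then has at most four points and cannot be the segment; for $d=1$ this is immediate since $S^0=\{\pm1\}$.

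So no measure on $S^{d-1}$, atomic or not, rescues the second equality, which rules out your proposed replacement of $\nu$. The lift to $S^{d-1}\times[0,1]$ proves a different statement. What is true, and what you should state, is that $\{\int_X\hat r\,d\nu\}\subseteq\mathcal{Z}$ with closed convex hull equal to $\mathcal{Z}$. Equality holds when $\nu$ is non-atomic, by Lyapunov applied to $X\mapsto\int_X\hat r\,d\nu$; this is your fallback. As you observe, only the first equality is used later in the paper. The same unjustified second equality reappears in the POVM version of the lemma and in its qubit corollary.
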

\begin{proof}

We begin by proving the first equality in Eq. \eqref{Eq:zonoid-generator-representation-appendix}.
Let $\tau:=|\pi| $ denote the total variation of $\pi$ w.r.t. the Euclidean norm on $\mathbb{R}^d$.
By the Radon-Nikodym theorem, there exists a measurable function $\phi:\Lambda\to S^{d-1}$ such that $d\pi=\phi d\tau$ \cite{Rudin-1991a}.
Hence, for every $\vec{w}\in\mc{Z}$, we can find a measurable $E\in\mc{F}$ such that
\begin{align}
    \vec{w}=\pi(E)=\int_\Lambda \chi_E(\lambda) d\pi(\lambda)&=\int_\Lambda \chi_E(\lambda) \phi(\lambda)d\tau(\lambda).
\end{align}
Let $\nu:\mf{B}(S^{d-1})\to[0,\infty)$ denote the pushforward measure of $\tau$, respectively, under the map $\phi$.
Then by the Disintegration Theorem \cite{}, there exists a family of measures $\{\mu_{\hat{r}}\}$ on $\mc{F}$ such that 
\begin{align}
\label{Eq:measure-change-formula}
    \int_\Lambda \chi_E(\lambda) \phi(\lambda)d\tau(\lambda)=\int_\Lambda \chi_E(\lambda) \hat{r}d\tau(\lambda)&=\int_{S^{d-1}} \left[\int_{\phi^{-1}(\hat{r})}\chi_E(\lambda)d\mu_{\hat{r}}(\lambda)\right]\hat{r}d\nu(\hat{r})=\int_{S^{d-1}} q(\hat{r})\hat{r}d\nu(\hat{r}),
\end{align}
where $q(\hat{r})=\int_{\phi^{-1}(\hat{r})}\chi_E(\lambda)d\mu_{\hat{r}}(\lambda)$ is a measurable function bounded in the region $[0,1]$.
Since this holds for every $\vec{w}\in\mc{Z}$, we have
\begin{align}
\label{Eq:zonotope-inclusion}
    \mc{Z}\subseteq \left\{\int_{S^{d-1}}q(\hat{r})\hat{r}\;d\nu(\hat{r})\;|\; q\in[0,1] \;\text{is measurable}\right\}.
\end{align}

To see the converse inclusion, for any $B\subset \mf{B}(S^{d-1})$ let $E_B=\phi^{-1}(B)\subset\mc{F}$.
Then for $A\subset S^{d-1}$ we have
\begin{align}
\nu_{E_B}(A)=\tau_{E_B}(\phi^{-1}(A))
&=\tau(E_B\cap \phi^{-1}(A))\notag\\
&=\tau(\phi^{-1}(B)\cap\phi^{-1}(A))\notag\\
&=\tau(\phi^{-1}(B\cap A))=\nu(B\cap A),\notag
\end{align}
and so $\nu(B\cap A)=\int_{A}q_{E_B}d\nu$, which means that $q_{E_B}=\chi_{B}$ almost everywhere.
Hence, for every $B\subset \mf{B}(S^{d-1})$, we have $\mc{Z}\ni\phi(E_B)=\int_{S^{d-1}}\chi_{B}(\hat{r})\;\hat{r}\;d\nu(\hat{r}) $.
Any measurable function can be approximated arbitrarily well by a convex combination of indicator functions. 
Since $\mc{Z}$ is closed and convex, the reverse inclusion of Eq. \eqref{Eq:zonotope-inclusion} therefore follows.
Note that since $\phi^{-1}(S^{d-1})=\Lambda$, this argument also implies that 
\begin{align}
   \int_{S^{d-1}}\hat{r}\;d\nu(\hat{r})=\pi(\Lambda).
\end{align}

We now turn to the second equality in Eq. \eqref{Eq:zonoid-generator-representation-appendix}.
Consider the vector-valued measure $\pi_0(X)=\int_X\hat{r}d\nu(\hat{r})$, where $\nu$ is the positive measure constructed above, and its associated zonoid $\mc{Z}_0:=\pi_0(\mathfrak{B}(S^{d-1}))$.
Since every indicator function is measurable, we have $\mc{Z}_0\subseteq\mc{Z}$.
To prove the converse inclusion, suppose that $q\in[0,1]$ is an aribtrary non-negative simple function, so that $q=\sum_{i=1}^n a_i\chi_{E_i}$ with the $\{E_i\}_{i=1}^n$ being a disjoint partition of $S^{d-1}$ and $a_1\leq a_2\leq \cdots\leq a_n$.
Observe that we can express this as
\begin{align}
\label{Eq:simple-convex}
    q=a_1\sum_{i=1}^n\chi_{E_i}+(a_2-a_1)\sum_{i=2}^n\chi_{E_i}+\cdots+(a_n-a_{n-1})\chi_{E_n}+(1-a_n)\chi_{\emptyset}.
\end{align}
With the $E_i$ being disjoint, we have $\sum_{i=k}^n\chi_{E_k}=\chi_{B_k}$ with $B_k=\cup_{i=k}^n E_i$.
Thus, Eq. \eqref{Eq:simple-convex} expresses $q$ as a convex combination of indicator functions.
Hence,
\begin{align}
    \int_{S^{d-1}}q(\hat{r})\hat{r}d\nu(\hat{r})=a_1\int_{B_1}\hat{r}d\nu(\hat{r})+(a_2-a_1)\int_{B_2}\hat{r}d\nu(\hat{r})+\cdots+(1-a_n)\int_{\emptyset}\hat{r}d\nu(\hat{r}).
\end{align}
Each integral on the RHS belongs to $\mc{Z}_0$, and since $\mc{Z}_0$ is a zonoid, it is convex.
Therefore, the LHS also belongs to $\mc{Z}_0$.
Since $\mc{Z}_0$ is compact, it is closed when considering limits of simple functions, and hence $\mc{Z}\subseteq\mc{Z}_0$.

\end{proof}

We extend Lemma \ref{Lem:Zonoid-representation-Appendix} to zonoids generated by POVMs.
This is Lemma \ref{Lem:POVM-zonoid-generator} in the main text, which we repeat here. 
\setcounter{lemma}{1}
\begin{lemma}
Let $\mc{K}_{\Pi}=\text{conv}(\Pi(\mc{F}))$ be the zonoid generated by POVM $\Pi:\mc{F}\to\text{Pos}(\mathbb{C}^d)$ on a measurable space $(\Lambda,\mc{F})$.
Then there exists a Borel probability measure $\omega$ on $\text{D}(\mathbb{C}^d)$ such that
    \begin{align}
    \label{Eq:zonoid-generator-representation-appendix-POVM}
    \mc{K}_{\Pi}=\left\{d\int_{\text{D}(\mathbb{C}^d)}q(\rho)\rho\;d\omega(\rho)\;|\; q\in[0,1] \;\text{is measurable}\right\}=\left\{d\int_{X}\rho\;d\omega(\rho)\;|\; X\in\mf{B}(\tD(\mathbb{C}^d))\right\}.
\end{align}
  Moreover, the POVM $X\mapsto d\int_X\rho d\omega(\rho)$ is operationally equivalent to $\Pi$.

\end{lemma}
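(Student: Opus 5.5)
The plan is to mirror the proof of Lemma~\ref{Lem:Zonoid-representation-Appendix}, with the normalization $\tau(X):=\tr[\Pi(X)]$ playing the role of the total variation. Since $\Pi(X)\geq 0$, the scalar measure $\tau$ is positive and finite with $\tau(\Lambda)=d$. Moreover, $\Pi\ll\tau$, because $\tr[\Pi(X)]=0$ forces $\Pi(X)=0$. Applying Radon--Nikodym entrywise gives a measurable density $\phi:\Lambda\to\text{Herm}(\mathbb{C}^d)$ with $d\Pi=\phi\,d\tau$. Positivity of $\Pi$ and $\tr\phi=1$ a.e. show that $\phi$ can be chosen to take values in $\tD(\mathbb{C}^d)$. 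I then define $\omega:=\frac1d\phi_*\tau$, which is the pushforward to $\tD(\mathbb{C}^d)$; it is a Borel probability measure.

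First I establish the first equality in Eq.~\eqref{Eq:zonoid-generator-representation-appendix-POVM}, by proving two inclusions.

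\textbf{Inclusion $\subseteq$.} Any $\Pi(E)$ equals $\int\chi_E\phi\,d\tau$. Disintegrating $\tau$ along $\phi$, exactly as in Eq.~\eqref{Eq:measure-change-formula}, turns this into $d\int q_E(\rho)\rho\,d\omega(\rho)$ with $q_E(\rho)\in[0,1]$ the conditional measure of $E$ on the fiber $\phi^{-1}(\rho)$. The right-hand set is convex, since a convex combination of the $q$'s is again a $[0,1]$-valued function. So the inclusion extends from $\Pi(\mc F)$ to its convex hull $\mc{K}_\Pi$.

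\textbf{Inclusion $\supseteq$.} For a Borel set $B\subseteq\tD(\mathbb{C}^d)$, the pullback $E_B=\phi^{-1}(B)$ gives $\Pi(E_B)=d\int_B\rho\,d\omega$, by the same intersection computation as in the appendix lemma. Hence every set-image $d\int_X\rho\,d\omega$ lies in $\Pi(\mc F)\subseteq\mc{K}_\Pi$. Next, the simple-function decomposition \eqref{Eq:simple-convex} writes $\int q\rho\,d\omega$ for simple $q$ as a convex combination of such set-images. Dominated convergence then passes to general measurable $q$.

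This last step needs closedness of $\mc{K}_\Pi$, and I expect it to be the main subtlety. In the $\mathbb{R}^d$ case the paper used compactness of a non-atomic range via Lyapunov; here $\Pi$ may be atomic. I would handle this by showing directly that the $q$-set is compact: it is the image of the weak-$*$ compact set $\{q:0\le q\le1\}\subset L^\infty(\omega)$ under the weak-$*$ continuous linear map $q\mapsto d\int q\rho\,d\omega$. This compactness, combined with the inclusions above, identifies all three sets. It also gives the second equality, since the $q$-set equals the closed convex hull of the set-images and is contained in $\mc{K}_\Pi$.

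For operational equivalence, set $\Pi'(X):=d\int_X\rho\,d\omega$. Then $\Pi=\Pi'\circ\phi^{-1}$, so $\Pi'$ is a coarse-graining of $\Pi$, and any assemblage simulated by $\Pi'$ with response $p(a|x,\rho)$ is simulated by $\Pi$ with $p(a|x,\phi(\lambda))$. Conversely, given a response $p(a|x,\lambda)$ for $\Pi$, I define $\tilde p(a|x,\rho):=\int p(a|x,\lambda)\,d\mu_\rho(\lambda)$ using the disintegration $\{\mu_\rho\}$. This is again a normalized probability distribution, and it satisfies
\[
\int p(a|x,\lambda)\,d\Pi(\lambda)=\int\tilde p(a|x,\rho)\,d\Pi'(\rho),
\]
because $\phi$ is constant on fibers. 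Hence $\Pi'$ is a post-processing of $\Pi$ and vice versa, which proves operational equivalence. The remaining technical checks are the choice of a positive-valued density and the existence of the disintegration, which holds because $\tD(\mathbb{C}^d)$ is a compact metric space.
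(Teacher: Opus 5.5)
Your route is the paper's: the trace measure $\tau$, a $\tD(\mathbb{C}^d)$-valued density $\phi$, the pushforward $\omega$, the simple-function decomposition, and disintegration-averaged response functions. Your handling of the convex hull in the $\subseteq$ direction, and of both directions of operational equivalence, is fine.

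The gap is at the closedness step you flag yourself. Write $S$ for the set of images $d\int_X\rho\,d\omega$ and $Q$ for the $q$-set. Your inclusions give $\conv(S)\subseteq\mc K_\Pi\subseteq Q\subseteq\overline{\conv(S)}$. Compactness of $Q$ then only yields $Q=\overline{\conv(S)}=\overline{\mc K_\Pi}$. It does not show that $\mc K_\Pi$ itself is closed, which is exactly what $Q\subseteq\mc K_\Pi$ needs. The weak-$*$ compactness you already have does suffice if used differently. For an extreme point $y$ of $Q$, the set $\{q:0\le q\le 1,\ d\int q\rho\,d\omega=y\}$ is a nonempty weak-$*$ compact face of $\{0\le q\le 1\}$. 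By Krein--Milman it contains an extreme point of that set, i.e.\ an indicator $\chi_X$. So every extreme point of $Q$ lies in $S$, and Minkowski's theorem gives $Q\subseteq\conv(S)\subseteq\mc K_\Pi$. Alternatively, import compactness of $\mc K_\Pi$ from the fact that the convex hull of the range of a finite-dimensional vector measure is a zonoid; this is what the paper implicitly relies on.

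The second equality is a separate problem. ``$Q$ is the closed convex hull of $S$'' does not give $Q=S$, and $Q=S$ is in fact false whenever $\omega$ has atoms. Take the qubit PVM $\{P_0,P_1\}$. The first equality forces $\omega=\tfrac12(\delta_{P_0}+\delta_{P_1})$, so $S=\{0,P_0,P_1,\mathbb{I}\}$, while $\mc K_\Pi=Q=\{t_0P_0+t_1P_1: t_0,t_1\in[0,1]\}$. What your argument proves, once repaired, is $\mc K_\Pi=Q=\conv(S)$, with $Q=S$ only when $\omega$ is non-atomic.

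The paper's proof shares this defect. It defers to the argument of Lemma~\ref{Lem:Zonoid-representation-Appendix}, which asserts that the range of $X\mapsto\int_X\hat r\,d\nu$ is convex. That fails as soon as the pushforward $\nu$ has atoms, which already happens for non-atomic $\pi$ with constant $\phi$.

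Finally, disintegrating $\tau$ along $\phi$ requires regularity of $(\Lambda,\mc F)$, not compactness of $\tD(\mathbb{C}^d)$. You can bypass it with the Radon--Nikodym derivatives $q_E:=d\phi_*(\chi_E\tau)/d\phi_*\tau$ and $\tilde p(a|x,\cdot):=d\phi_*(p(a|x,\cdot)\tau)/d\phi_*\tau$. These take values in $[0,1]$ and, for $\tilde p$, sum to one $\phi_*\tau$-almost everywhere.
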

\begin{proof}
For a POVM $\Pi:\mc{F}\to \text{Pos}(\mathbb{C}^d)$, we follow analogously to Proposition~3.1 in Ref.~\cite{Aubrun-2015a}.
Consider the total variation $\tau:=|\Pi|$ of $\Pi$ w.r.t. the trace norm, which in this case simply acts as the trace measure $\tau(E)=\tr[\Pi(E)]$.
Like before, there exists a measurable function $\phi:\Lambda\to \text{Herm}(\mathbb{C}^d)$ such that $d\Pi=\phi d\tau$ and $\Vert\phi(\lambda)\Vert_{\tr}=1$ for all $\lambda$.
In fact, $\phi(\lambda)\geq 0$ almost everywhere because  $\Pi(\mc{F})\subset\text{Pos}(\mathbb{C}^d)$; i.e. $\phi(\lambda)\in\text{D}(\mbb{C}^d)$ is a density matrix almost everywhere.
Since $\tau(\Lambda)=\tr[\mathbb{I}]=d$, we push forward the normalized measure $\frac{1}{d}\tau$ by $\phi$ and again apply the Disintegration Theorem.
This leads to 
\begin{align}
    \mc{K}_{\Pi}=\left\{d\int_{\mathcal{D}(\mbb{C}^{d})}q(\rho)\rho\;d\omega(\rho)\;|\; q\in[0,1] \;\text{is measurable}\right\},
\end{align}
where $\rho$ is a trace-one positive operator and $\omega$ is a normalized measure.
The second equality in Eq. \eqref{Eq:zonoid-generator-representation-appendix-POVM} follows by the exact same argument used in Lemma \ref{Lem:Zonoid-representation-Appendix}.

To show that the POVM $X\mapsto d\int_X\rho d\omega(\rho)$ is operationally equivalent to $\Pi$, consider any assemblage $\{M_{a|x}\}$ that can be simualted by $\Pi$.
Then by definition there exists a post-processing map $p(a|x,\lambda)$ such that \[M_{a|x}=\int_\Lambda p(a|x,\lambda) d\Pi(\lambda)=d\int_\Lambda p(a|x,\lambda)\rho(\lambda)d\tau(\lambda).\]
As in Lemma \ref{Lem:Zonoid-representation-Appendix}, when we transform to the pushforward measure, $\tau\mapsto \omega$, the equality in Eq. \eqref{Eq:measure-change-formula} becomes $M_{a|x}=d\int_{\tD(\mathbb{C}^d)}q(a|x,\rho)\rho d\omega(\rho)$, where $q(a|x,\rho):=\int_{\rho^{-1}(\lambda)}p(a|x,\lambda)\mu_\rho(\lambda)$.
Since each $\mu_\rho$ is a probability measure for almost all $\rho$, we have that $\sum_{a}q(a|x,\rho)=1$, as desired.
For the negligible $\rho$ with $\mu_\rho$ not being a probability measure, we can simply replace $q(a|x,\rho)$ with any valid probability distribution, and the simulation remains unchanged.
\end{proof}

\begin{appcorollary}
\label{Cor:Zonoid-representation-Appendix-qubit}
    For any qubit POVM $\Pi:\mc{F}\to\text{Pos}(\mathbb{C}^2)$, there exists a Borel probability measure $\nu$ on $B^3$ such that
   \begin{align}
\label{Eq:Bloch-sphere-POVM-generating}
    \mc{K}_{\Pi}=\left\{\int_{B^3}q(\vec{s})(\mathbb{I}+\vec{s}\cdot\vec{\sigma})\;d\nu(\vec{s})\;|\; q\in[0,1] \;\text{is measurable}\right\}=\left\{\int_{X}(\mathbb{I}+\vec{s}\cdot\vec{\sigma})\;d\nu(\vec{s})\;|\; X\in\mf{B}(\tD(B^3))\right\}.
\end{align}
   Moreover, the POVM $X\mapsto\int_X(\mathbb{I}+\vec{s}\cdot\vec{\sigma})d\omega(\vec{s})$ is operationally equivalent to $\Pi$.    
\end{appcorollary}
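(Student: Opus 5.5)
The plan is to deduce the corollary directly from Lemma~\ref{Lem:POVM-zonoid-generator} with $d=2$, by transporting the generating measure on $\tD(\mathbb{C}^2)$ to the Bloch ball through the standard affine bijection $\beta:\tD(\mathbb{C}^2)\to B^3$, $\rho\mapsto \vec{s}$ with $\rho=\frac{1}{2}(\mathbb{I}+\vec{s}\cdot\vec{\sigma})$.

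First, apply Lemma~\ref{Lem:POVM-zonoid-generator} to $\Pi$. This gives a Borel probability measure $\omega$ on $\tD(\mathbb{C}^2)$ such that $\mc{K}_{\Pi}=\{2\int q(\rho)\rho\,d\omega(\rho)\}$, and such that the POVM $X\mapsto 2\int_X\rho\,d\omega$ is operationally equivalent to $\Pi$.

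Second, note that $\beta$ is a homeomorphism: it is affine and bijective between compact sets. Consequently it identifies the Borel $\sigma$-algebras, and measurable functions $q$ on $\tD(\mathbb{C}^2)$ correspond exactly to measurable functions $q\circ\beta^{-1}$ on $B^3$.

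Third, define $\nu:=\beta_*\omega$, which is a Borel probability measure on $B^3$. By the change-of-variables formula for pushforward measures,
\[
2\int q(\rho)\rho\,d\omega(\rho)=\int_{B^3}\tilde q(\vec s)(\mathbb{I}+\vec{s}\cdot\vec{\sigma})\,d\nu(\vec s),\qquad \tilde q=q\circ\beta^{-1}.
\]
This gives the first representation of $\mc{K}_{\Pi}$. Taking $q=\chi_{\beta^{-1}(X)}$ gives the second, set-based representation, since preimages of Borel sets under $\beta$ run over all Borel subsets of $\tD(\mathbb{C}^2)$. Here the index set written $\mf{B}(\tD(B^3))$ in the statement is read as $\mf{B}(B^3)$.

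Fourth, operational equivalence follows because the POVM $X\mapsto\int_X(\mathbb{I}+\vec s\cdot\vec\sigma)\,d\nu$ is exactly the pushforward of $X\mapsto 2\int_X\rho\,d\omega$ under the bijection $\beta$. Any post-processing $p(a|x,\rho)$ corresponds to $p(a|x,\beta^{-1}(\vec s))$, and the reverse translation works the same way. So the two POVMs simulate the same assemblages, and by transitivity the new POVM is operationally equivalent to $\Pi$. (The statement writes $\omega$ for this measure; it should be $\nu$.)

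I do not expect any real obstacle. The only point requiring care is the measurability bookkeeping for $\beta$, and that is immediate because $\beta$ is a homeomorphism.
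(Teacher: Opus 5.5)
Your pushforward under the Bloch map is the same argument the paper uses, and it is sound for the $q$-representation of $\mc{K}_\Pi$ and for operational equivalence. The gap is in the set-based representation. Specializing to $q=\chi_{\beta^{-1}(X)}$ only shows that each $\int_X(\mathbb{I}+\vec{s}\cdot\vec{\sigma})\,d\nu$ lies in $\mc{K}_\Pi$. The reverse inclusion would need the range of the operator-valued measure $X\mapsto\int_X(\mathbb{I}+\vec{s}\cdot\vec{\sigma})\,d\nu$ to be convex, and Lyapunov's theorem gives this only for non-atomic $\nu$. Citing the second equality of Lemma~\ref{Lem:POVM-zonoid-generator} instead does not help, because it has the same defect. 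The appendix asserts that the range $\mc{Z}_0$ is convex ``since it is a zonoid,'' which holds only for non-atomic generating measures. The paper's own proof of this corollary inherits the problem.

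In fact the second equality is false as stated. Take $\Pi=\{\op{0}{0},\op{1}{1}\}$, so $\mc{K}_\Pi=\{a\op{0}{0}+b\op{1}{1}:a,b\in[0,1]\}$. Any $\nu$ satisfying the first equality must be $\frac{1}{2}(\delta_{\hat z}+\delta_{-\hat z})$, for the following reasons. First, $\mathbb{I}\in\mc{K}_\Pi$ forces $\int\vec{s}\,d\nu=\vec{0}$. Hence the largest Bloch $z$-component over $\mc{K}_\Pi$ is $\int|s_z|\,d\nu$. Since $\op{0}{0}$ attains the value $1$, this forces $\vec{s}=\pm\hat z$ $\nu$-almost everywhere, and $\int\vec{s}\,d\nu=\vec{0}$ then forces equal weights. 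For this $\nu$ the set-based family is $\{0,\op{0}{0},\op{1}{1},\mathbb{I}\}$, which misses $\frac{1}{2}\op{0}{0}\in\mc{K}_\Pi$.

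The same failure occurs for every finite-outcome POVM. Your construction then produces a finitely supported $\nu$, so the set-based family is finite, while $\mc{K}_\Pi\supseteq\{t\mathbb{I}:t\in[0,1]\}$ is infinite. The correct replacement is $\mc{K}_\Pi=\conv\{\int_X(\mathbb{I}+\vec{s}\cdot\vec{\sigma})\,d\nu: X\in\mf{B}(B^3)\}$. This follows from the appendix's decomposition of simple $q$ into convex combinations of indicators, together with compactness of the range. Alternatively, the stated equality holds under the extra hypothesis that $\nu$ is non-atomic.
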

\begin{proof}
Start from Eq. \eqref{Eq:zonoid-generator-representation-appendix-POVM} and parametrize $\text{D}(\mbb{C}^2)$ as
\begin{align}
    \rho=\rho(\vec{s})=\frac{1}{2}(\mathbb{I}+\vec{s}\cdot\vec{\sigma}),
\end{align}
with $\vec{s}\in B^3$ being the Bloch vector of $\rho$.
Let $Bl:\text{Herm}(\mathbb{C}^2)\to \mathbb{R}^3$ denote the map that projects a hermitian operator $M$ onto its Bloch vector.
By taking $\nu$ to be the pushforward of $\omega$ by $Bl$ and noting that $\nu$ will be fully supported on $B^3$ (due to the positivity of $\rho$),  we directly have Eq. \eqref{Eq:Bloch-sphere-POVM-generating}.
The same reasoning applies when changing measure in the equality $M_{a|x}=\int_{\tD(\mathbb{C}^d)}q(a|x,\rho)\rho d\omega(\rho)$,
thereby establishing the operational equivalence of the POVM $X\mapsto\int_X(\mathbb{I}+\vec{s}\cdot\vec{\sigma})d\omega(\vec{s})$.
\end{proof}

To close out this section of the Appendix, we turn to the support functions.
For a zonoid $\mc{Z}$, we use Eq. \eqref{Eq:zonoid-generator-representation-appendix}
\begin{align}
    h_{\mc{Z}}(\vec{n})=\max_{q\in[0,1]}\int_{S^{d-1}}q(\hat{r})\langle\vec{n},\hat{r}\rangle d\nu(\hat{r})&=\int_{S^{d-1}}\max\{0,\langle\vec{n},\hat{r}\rangle\} d\nu(\hat{r})\notag\\
    &=\frac{1}{2}\int_{S^{d-1}}\langle\vec{n},\hat{r}\rangle d\nu(\hat{r})+\frac{1}{2}\int_{S^{d-1}}|\langle\vec{n},\hat{r}\rangle| d\nu(\hat{r})\notag\\
    &=\frac{1}{2}\langle\vec{n},\vec{c}\rangle+\frac{1}{2}\int_{S^{d-1}}|\langle\vec{n},\hat{r}\rangle| d\nu(\hat{r}),
\end{align}
where we have used the equality $\max\{0,x\}=\frac{1}{2}(x+|x|)$.
Similarly, for a zonotope $\mc{Z}$ with generators $\{\vec{v}_i\}_{i=1}^k$, we have that
\begin{align}
    h_{\mc{Z}}(\vec{n})&=\max_{0\leq t_1,\cdots,t_k\leq 1}\sum_{i=1}^kt_i\langle\vec{n},\mbf{v}_i\rangle=\frac{1}{2}\sum_{i=1}^k\left(\langle\mbf{n},\mbf{v}_i\rangle+|\langle\mbf{n},\mbf{v}_i\rangle|\right)=\frac{1}{2}\langle\vec{n},\vec{c}\rangle+\frac{1}{2}\sum_{i=1}^k|\langle\mbf{n},\mbf{v}_i\rangle|.
\end{align}

\section{Proof of the Boundary Zonoid Uniqueness Theorem}
\label{App:boundary-zonoid-uniqueness}

In this section we provide a proof of Theorem~\ref{Thm:zonoid-unique}, restated below.
\setcounter{theorem}{1}
\begin{theorem}
Let $\mathcal{Z}\subset\mathbb{R}^3$ be a centrally-symmetric zonoid with $\ell(\mc{Z})=2$.  
Then there exists a unique probability measure $\nu$ on $S^2$ such that the POVM $\Pi_{\mc{Z}}:\mf{B}(S^2)\to\text{Pos}(\mathbb{C}^2)$ defined by $\Pi_{\mc{Z}}(X)=\int_X(\mathbb{I}+\hat{s}\cdot\vec{\sigma})d\nu(\hat{s})$  simulates the assemblage $\{M_{\vec{m}}\}_{\vec{m}\in\mc{Z}}$.
Moreover, any other POVM that can simulate the assemblage is operationally equivalent to $\Pi_{\mc{Z}}$.

\end{theorem}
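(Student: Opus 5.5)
The plan is to prove existence by the symmetrized construction of Proposition~\ref{prop:qubits-zonoid-necessary}, and uniqueness by turning the saturated bound $\ell(\mc{Z})\leq 2$ into equalities and then applying injectivity of spherical transforms.

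\emph{Existence.} Let $\nu_{\text{even}}$ be the even generating measure of $\mc{Z}$ from Eq.~\eqref{Eq:support-representation}. Integrating $h_{\mc{Z}}$ over $S^2$ as in the proof of Proposition~\ref{prop:qubits-zonoid-necessary} gives $\nu_{\text{even}}(S^2)=\ell(\mc{Z})=2$. So $\nu:=\nu_{\text{even}}/2$ is an even probability measure, and $\Pi_{\mc{Z}}(S^2)=\mathbb{I}$. Take $\vec m\in\mc{Z}$ with $\vec m=\int q(\hat s)\hat s\,d\nu_{\text{even}}(\hat s)$. Replace $q$ by the symmetrized response $q'(\hat s)=\tfrac12[q(\hat s)+1-q(-\hat s)]\in[0,1]$. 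Evenness of $\nu$ then gives
\[\int q'(\mathbb{I}+\hat s\cdot\vec\sigma)\,d\nu=\tfrac12\mathbb{I}+\tfrac12\vec m\cdot\vec\sigma=M_{\vec m},\]
with $1-q'$ simulating $\mathbb{I}-M_{\vec m}$. Hence $\Pi_{\mc{Z}}$ simulates the assemblage.

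\emph{Uniqueness.} Let $\Pi$ be any parent POVM. By Corollary~\ref{Cor:Zonoid-representation-Appendix-qubit}, $\Pi$ is operationally equivalent to a POVM with generating probability measure $\gamma$ on $B^3$, satisfying $\int\vec s\,d\gamma=0$. Since $\mc{Z}\subseteq Bl[\mc{K}_\Pi]$, Eq.~\eqref{Eq:qubit-support-function-bound} gives $h_{\mc{Z}}(\hat n)\leq\int|\hat n\cdot\vec s|\,d\gamma$. Integrating as in Eq.~\eqref{Eq:qubit-generator-length-bound} yields
\[2=\ell(\mc{Z})\leq 2\int\Vert\vec s\Vert\,d\gamma\leq 2.\]
All inequalities are therefore tight. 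First, $\Vert\vec s\Vert=1$ $\gamma$-a.e., so $\gamma$ lives on $S^2$. Second, $h_{\mc{Z}}(\hat n)=\int|\hat n\cdot\hat s|\,d\gamma$ for a.e.\ $\hat n$, hence for all $\hat n$ by continuity of both sides. Uniqueness of the even generating measure (Bolker) then gives $\gamma_{\text{even}}=\nu$.

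\emph{Main obstacle: the odd part of $\gamma$.} Completeness only forces $\int\hat s\,d\gamma=0$, not evenness, so the identity components of the simulated effects must be used. Only countably many great circles $\hat n^\perp$ can carry positive $\gamma$-mass, so fix $\hat n$ outside this exceptional set and choose $\vec m$ in the face of $\mc{Z}$ exposed by $\hat n$. Write $M_{\vec m}=\int q(\mathbb{I}+\hat s\cdot\vec\sigma)\,d\gamma$ with $q\in[0,1]$. The identity $\hat n\cdot\vec m=h_{\mc{Z}}(\hat n)=\int|\hat n\cdot\hat s|\,d\gamma$ forces $q=1$ a.e.\ on $H_{\hat n}$ and $q=0$ a.e.\ on $H_{-\hat n}$. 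The identity coefficient $\tfrac12$ then gives $\gamma(H_{\hat n})=\tfrac12$ for a.e.\ $\hat n$. Since $\hat n^\perp$ is $\gamma$-null for such $\hat n$, this means $\gamma(H_{\hat n})-\gamma(H_{-\hat n})=0$, i.e.\ the hemispherical transform of the odd part $\gamma_{\text{odd}}=\tfrac12(\gamma-\gamma_-)$ vanishes a.e. Injectivity of the hemispherical transform on odd measures \cite{Rubin-2014a} gives $\gamma_{\text{odd}}=0$, so $\gamma=\nu$.

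This shows that every parent POVM of the assemblage has generating measure exactly $\nu$. Any POVM of the form $\Pi_{\mc{Z}}$ that simulates the assemblage is its own generating measure, so $\nu$ is the unique measure in the statement, and any other parent is operationally equivalent to $\Pi_{\mc{Z}}$ by Corollary~\ref{Cor:Zonoid-representation-Appendix-qubit}. I expect the delicate points to be the measure-theoretic handling of the exceptional great circles and checking that Rubin's injectivity result applies to finite signed measures on $S^2$.
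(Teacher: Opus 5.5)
Your argument is essentially the paper's. Saturation of $\ell(\mc Z)\le 2$ forces the generating measure $\gamma$ onto $S^2$ and gives $h_{\mc Z}(\hat n)=\int|\hat n\cdot\hat s|\,d\gamma$. Effects on exposed faces force $\gamma(H_{\hat n})=\tfrac12$ whenever $\gamma(G_{\hat n})=0$. Injectivity of the cosine transform (even part) and of the hemispherical transform (odd part) then finishes. The differences are minor. You fix $\gamma_{\text{even}}$ before killing $\gamma_{\text{odd}}$, whereas the paper proves evenness first. Your existence step is more explicit than the paper's appeal to Proposition~\ref{prop:qubits-zonoid-necessary}: you directly exhibit $\Pi_{\mc Z}$ with $\nu=\nu_{\text{even}}/2$ via the symmetrized response $q'$.

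One supporting claim is false, however. It is not true that only countably many great circles $G_{\hat n}=\hat n^\perp\cap S^2$ can carry positive $\gamma$-mass. Distinct great circles meet in an antipodal pair, so an atom of $\gamma$ at $\hat a$ gives positive mass to $G_{\hat n}$ for every $\hat n\perp\hat a$, which is an uncountable family. Such atoms genuinely occur here. Since $\gamma_{\text{even}}=\nu_{\mc Z}$, $\gamma$ must have an atom at $\hat a$ or $-\hat a$ whenever $\nu_{\mc Z}$ has one, and $\nu_{\mc Z}$ is purely atomic whenever $\mc Z$ is a zonotope. For example, the segment $\mc Z=[-\hat z,\hat z]$ has $\ell(\mc Z)=2$ and forces $\gamma=\tfrac12(\delta_{\hat z}+\delta_{-\hat z})$. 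Then $\gamma(G_{\hat n})=1$ for every $\hat n$ on the great circle $\hat z^\perp$.

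What you actually need is that $\gamma(G_{\hat n})=0$ for $\mu$-almost every $\hat n$, and that is still true. Justify it as the paper does, by Tonelli: $\int_{S^2}\gamma(G_{\hat n})\,d\mu(\hat n)=\int_{S^2}\mu(\{\hat n:\hat n\cdot\hat s=0\})\,d\gamma(\hat s)=0$. With this replacement, the $\mu$-a.e.\ vanishing of the hemispherical transform of $\gamma_{\text{odd}}$ is properly justified. As the paper notes, a.e.\ vanishing is enough for Rubin's injectivity result, since it only needs the transform to vanish when integrated against test functions.
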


\begin{proof}

The existence of a parent POVM for the assemblage  $\{M_{\vec{m}}\}_{\vec{m}\in\mc{Z}}$ follows from Proposition \ref{prop:qubits-zonoid-necessary}.
We now prove uniqueness. 

Suppose that $\Pi:\mathcal F\to\mathsf{Pos}(\mathbb C^2)$ is a POVM that can simulate the assemblage $\{M_{\vec{m}}\}_{\vec{m}\in\mc{Z}}$.
By Corollary \ref{Cor:Zonoid-representation-Appendix-qubit}, there exists a Borel probability measure $\nu$ on $B^3$ such that the POVM $X\mapsto\int_X(\mathbb{I}+\vec{s}\cdot\vec{\sigma})d\nu(\vec{s})$ is operationally equivalent to $\Pi$ and
\begin{align}
\label{Eq:measure-B3-equality}
\mc{K}_{\Pi}=\left\{\int_{B^3}q(\vec{s})(\mathbb{I}+\vec{s}\cdot\vec{\sigma})\;d\nu(\vec{s})\;|\; q\in[0,1] \;\text{is measurable}\right\}.
\end{align}
We first show that the assumption $\ell(\mc{Z})=2$ implies that $|\vec{s}|=1$ almost everywhere, which means that $\nu$ is supported on $S^2$.
To see this, we follow along the lines of Eq. \eqref{Eq:qubit-support-function-bound} and use Eq.~\eqref{Eq:measure-B3-equality} to compute
\begin{align}
h_{\mc Z}(\hat n)
&\leq\sup_{M\in \mc{K}_\Pi}\hat n\cdot Bl[M]
=\sup_{q\in[0,1]}2\int_{X}q(\vec{s})\hat n\cdot\vec s\,d\nu(\vec s)
=2\int_{H_{\hat n}}\hat n\cdot\vec s\,d\nu(\vec s),
\end{align}
where $H_{\hat n}:=\{\vec s\in B^3:\vec s\cdot\hat n>0\}$. 
Since $\int_{B^3}\vec s d\nu(\vec s)=0$,
\begin{align}
\int_{B^3}|\hat n\cdot\vec s|d\nu(\vec s)=2\int_{H_{\hat n}}\hat n\cdot\vec s d\nu(\vec s),
\end{align}
and therefore
\begin{align}
\label{Eq:support-bound}
h_{\mc Z}(\hat n)\leq\int_{B^3}|\hat n\cdot\vec s|d\nu(\vec s)\qquad\forall\hat n.
\end{align}
Integrating uniformly over the unit sphere and using $\ell(\mc Z)=2$ gives
\begin{align}
\frac12=\int_{S^2}h_{\mc Z}(\hat n)d\mu(\hat n)
\leq\int_{B^3}d\nu(\vec s)\int_{S^2}|\hat n\cdot\vec s|d\mu(\hat n)
=\frac12\int_{B^3}\Vert\vec s\Vert d\nu(\vec s)\leq\frac12.
\end{align}
Hence, equality holds throughout. 
This implies that $\Vert\vec s\Vert=1$ almost everywhere and thus $\nu$ is supported on $S^2$. 
Returning to the above analysis, the difference between the two sides of Eq.~\eqref{Eq:support-bound} is a nonnegative continuous function of $\hat n$, and since $\int_{B^3}|\hat{n}\cdot\vec{s}|d\nu(\vec{s})=\int_{S^2}|\hat{n}\cdot\hat{s}|d\nu(\hat{s})$, we must have that
\begin{align}
\label{Eq:support-Eq}
h_{\mc Z}(\hat n)=\int_{S^2}|\hat n\cdot\hat s|d\nu(\hat s)\qquad\forall\hat n.
\end{align}
Since $S^2$ corresponds to the collection of pure density matrices, we can summarize our findings so far as follows.
\begin{applemma}
\label{Lem:Support-S2}
    Suppose $\Pi:\mathfrak{B}(\tD(\mathbb{C}^2))\to\text{Pos}(\mathbb{C}^2)$ is a POVM of the form $\Pi:X\mapsto \int_X\rho d\nu(\rho )$ that can simulate an assemblage $\{M_{\vec{m}}\}_{\vec{m}\in\mc{Z}}$ with $\ell(\mc{Z})=2$.
    Then $\Pi$ is supported on $\tD^1(\mathbb{C}^2)$, the collection of pure qubit density operators.
\end{applemma}

Now, it is well known that for every centrally-symmetric zonoid $\mc Z\subset{R}^3$, there exists a \textit{unique} even Borel measure $\nu_{\mc Z}$ on $S^2$ such that $h_{\mc Z}(\hat n)=\int_{S^2}|\hat n\cdot\hat s|d\nu_{\mc Z}(\hat s)$~\cite[Thm.~4.14]{HugWeil2020LecturesConvexGeometry} (see also Eq. \eqref{Eq:support-representation}). 
Therefore, if we can prove that our constructed measure $\nu$ is even, uniqueness would imply that $\nu=\nu_{\mc Z}$. 
Consequently, the POVM $X\mapsto \int_X(\mathbb{I}+\hat{s}\cdot\vec{\sigma})d\nu$ that is operationally equivalent to $\Pi$ would also be uniquely determined by $\mc{Z}$.


To prove that $\nu$ is even, let $H_{\hat{n}}:=\{\hat s\in S^2:\hat n\cdot s>0\}$ denote the $\hat{n}$-hemisphere on $S^2$ and $G_{\hat{n}}:=\{\hat s\in S^2:\hat n\cdot s=0\}$ its equator.
For any $\hat{n}\in S^2$, let $\vec m(\hat n)\in\mc Z$ be a vector such that $h_{\mc Z}(\hat n)=\hat n\cdot\vec m(\hat n)$. 
Since $M_{\vec m(\hat n)}\in \mc{K}_\Pi$, there exists a measurable function $q_{\hat n}:S^2\to[0,1]$ such that
\begin{align}
\frac12(\mbb I+\vec m(\hat n)\cdot\vec\sigma)=\int_{S^2}q_{\hat n}(\hat s)(\mbb I+\hat s\cdot\vec\sigma)d\nu(\hat s).
\label{Eq:boundary-response}
\end{align}
Comparing the identity and Bloch-vector components gives $\int_{S^2}q_{\hat n}(\hat s)d\nu(\hat s)=1/2$ and $\vec m(\hat n)=2\int_{S^2}q_{\hat n}(\hat s)\hat s d\nu(\hat s)$. 
Therefore, using $\int_{S^2}\hat s d\nu(\hat s)=0$, the assumption that $h_{\mc{Z}}(\hat{n})=\hat{n}\cdot\vec{m}(\hat{n})$, and Eq.~\eqref{Eq:support-Eq}, we have
\begin{align}
\label{Eq:support-function-almost-all}
h_{\mc Z}(\hat n)
&=\int_{S^2}[2q_{\hat n}(\hat s)-1]\hat n\cdot\hat s d\nu(\hat s)\leq\int_{S^2}|\hat n\cdot\hat s|d\nu(\hat s)=h_{\mc Z}(\hat n).
\end{align}
Equality must therefore hold almost everywhere. 
Since $|2q_{\hat n}-1|\leq1$, this implies $q_{\hat n}(\hat s)=1$ whenever $\hat n\cdot\hat s>0$ and $q_{\hat n}(\hat s)=0$ whenever $\hat n\cdot\hat s<0$, up to sets of $\nu$-measure zero.
Our findings here thus establish the following lemma.
\begin{applemma}
\label{Lem:measure-0-equator}
If $\nu(G_{\hat{n}})=0$, then $q_{\hat{n}}$ is equal to the indicator function $\chi_{H_{\hat{n}}}$ $\nu$-almost everywhere.
Consequently, $\nu(H_{\hat{n}})=\frac{1}{2}$ and there exists a unique $\vec{m}=\vec{m}(\hat{n})\in\mc{Z}$ such that $h_{\mc{Z}}(\hat{n})=\hat{n}\cdot\vec{m}$.
\end{applemma}
Notice that $\nu(G_{\hat{n}})=0$ for $\mu$-almost all $\hat{n}\in S^2$, where $\mu$ denotes the uniform measure.
Indeed, by Fubini's Theorem,
\begin{align}
\int_{S^2}\nu(G_{\hat n})d\mu(\hat n)
=\int_{S^2}\nu(d\hat s)\int_{S^2}\mathbf 1_{\{\hat n\cdot\hat s=0\}}d\mu(\hat n)=0,
\end{align}
which says that $\nu(G_{\hat{n}})$ vanishes $\mu$-almost everywhere.
We can then complete the proof of Theorem \ref{Thm:zonoid-unique} by arguing that any measure $\nu$ satisfying $\nu(H_{\hat{n}})=1/2$ for $\mu$-almost all $\hat{n}\in S^2$ must be even.
This argument is based on known results in the study of hemispherical transforms \cite{Rubin-2014a}.
Letting $\mc{M}(S^2)$ denote the collection of finite signed Borel measures on $S^2$, a hemispherical transform maps any measure  $\nu\in\mc{M}(S^2)$ to the function $\mc{H}\nu$ on $S^2$ given by $\mc{H}\nu(\hat{n})=\nu(H_{\hat{n}})$.
A key fact for our purposes is that the hemispherical transform is injective when restricted to the family of odd measures on $S^2$ (Corollary 5.1. in \cite{Rubin-2014a}).  
To apply this on a general measure $\nu$, decompose it as $\nu=\nu_{\text{even}}+\nu_{\text{odd}}$,
where $\nu_{\text{odd}}(X):=(\nu(X)-\nu(-X))/2$ is an odd signed measure.
In our case, we have that $\nu(H_{\hat{n}})=\nu(H_{-\hat{n}})$, which means that $\nu_{\text{odd}}(H_{\hat{n}})=0$ for $\mu$-almost all $\hat{n}$.
Hence $\mc{H}\nu_{\text{odd}}$ is equivalent to the zero function when integrated against any other function, which is still sufficient for the injectivity of $\mc{H}$ to hold (see Theorem 4.1 in \cite{Rubin-2014a}). 
Hence, $\nu=\nu_{\text{even}}$.


\end{proof}

\section{Uniqueness of the Two-Copy Boundary Parent}
\label{App:two-copy-boundary-parent}

In this section we prove Theorem \ref{Thm:two-copy-unique}, restated here for convenience.

\setcounter{theorem}{3}
\begin{theorem}
    Consider arbitrary noisy spin assemblages $\mathfrak{M}_{\A}=\{M_{\vec{m}}\}_{\vec{m}\in\mc{Z}_\A}$ and $\mathfrak{M}_{\B}=\{M_{\vec{m}}\}_{\vec{k}\in\mc{Z}_\B}$ such that $\ell(\mc{Z}_{\A})=\ell(\mc{Z}_{\B})=2$.
    Let $\Pi_{\A}$ and $\Pi_{\B}$ be POVMs that can simulate $\mathfrak{M}_{\A}$ and $\mathfrak{M}_{\B}$, respectively. 
    Then any bipartite POVM $\Pi_{\A\B}$ that can simulate the product assemblage $\mathfrak{M}_{\A}\otimes \mathfrak{M}_{\B}$ must be operationally equivalent to the product POVM $\Pi_{\A}\otimes\Pi_{\B}$.
\end{theorem}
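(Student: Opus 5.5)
The plan is to reduce $\Pi_{\A\B}$ to a generating measure and then argue in three stages: localize its support on pure product states, force every relevant response function to be a product of hemisphere indicators, and finally identify the joint measure with $\nu_\A\times\nu_\B$ using hemispherical-transform injectivity. By Lemma~\ref{Lem:POVM-zonoid-generator}, $\Pi_{\A\B}$ is operationally equivalent to $X\mapsto 4\int_X\rho\,d\omega(\rho)$ for a Borel probability measure $\omega$ on $\tD(\mathbb{C}^4)$, so it suffices to show $\omega$ is forced.

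\textbf{Localization step.} Coarse-graining the $\B$-outcome of each product measurement shows that $\Pi_{\A\B}$ simulates $\{M_{\vec m}\otimes\mathbb{I}\}_{\vec m\in\mc{Z}_\A}$. For any state $\sigma$ on $\B$, define the $\A$-POVM
\[
\Pi^\sigma_\A(X):=\tr_\B[(\mathbb{I}\otimes\sigma)\Pi_{\A\B}(X)].
\]
It simulates $\mathfrak{M}_\A$ with the same response functions, because $\tr_\B[(\mathbb{I}\otimes\sigma)(M\otimes\mathbb{I})]=M$. Its generating measure is the pushforward of $\omega$ under $\rho\mapsto\tr_\B[(\mathbb{I}\otimes\sigma)\rho]$, with weights $\tr[(\mathbb{I}\otimes\sigma)\rho]$. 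By Lemma~\ref{Lem:Support-S2}, for every $\sigma$ and $\omega$-a.e.\ $\rho$, the operator $\tr_\B[(\mathbb{I}\otimes\sigma)\rho]$ is proportional to a pure state. Running over a countable dense family of $\sigma$, I expect this to force $\rho=\op{a}{a}\otimes\tau$ almost surely. The symmetric argument on $\B$ then forces $\tau$ to be pure. Hence $\omega$ lives on $S^2\times S^2$, with points $(\hat a,\hat b)$ and effects $(\mathbb{I}+\hat a\cdot\vec\sigma)\otimes(\mathbb{I}+\hat b\cdot\vec\sigma)$.

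\textbf{Marginal and response-function constraints.} Theorem~\ref{Thm:zonoid-unique} applied to $\Pi^\sigma_\A$ says that for every Bloch vector $\vec t$ of $\sigma$, the $\hat a$-marginal of $(1+\hat b\cdot\vec t)\,d\omega$ equals $\nu_{\mc{Z}_\A}$. Taking $\vec t=0$ gives that the $\hat a$-marginal is $\nu_{\mc{Z}_\A}$. Linearity in $\vec t$ then gives $\int \hat b\,d\omega(\cdot,\hat b)=0$ conditionally on $\hat a$, and symmetrically with the roles of $\A$ and $\B$ exchanged. Next, fix $\hat n_1,\hat n_2$ outside the null sets of Lemma~\ref{Lem:measure-0-equator}, and take the boundary points $\vec m(\hat n_1)\in\mc{Z}_\A$ and $\vec k(\hat n_2)\in\mc{Z}_\B$. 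Let $q$ be the response function of $\Pi_{\A\B}$ for the outcome $M_{\vec m(\hat n_1)}\otimes M_{\vec k(\hat n_2)}$. Conditioning on $\B$ by $\sigma=\op{\hat b}{\hat b}$ and applying the equality argument of Lemma~\ref{Lem:measure-0-equator} in each factor, I aim to show
\[
q(\hat a,\hat b)=\chi_{H_{\hat n_1}}(\hat a)\,\chi_{H_{\hat n_2}}(\hat b)\quad\omega\text{-a.e.}
\]
Since this outcome equals $\tfrac14$ of the identity plus correlations, this would yield $\omega(H_{\hat n_1}\times H_{\hat n_2})=\tfrac14$ for a.e.\ $(\hat n_1,\hat n_2)$. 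It would also fix every integral of the form $\int \chi_{H_{\hat n_1}}(\hat a)\chi_{H_{\hat n_2}}(\hat b)\,\hat a_i\hat b_j\,d\omega$ through the correlation tensor of $M\otimes M'$.

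\textbf{Identification step (main obstacle).} Set $\delta:=\omega-\nu_{\mc{Z}_\A}\times\nu_{\mc{Z}_\B}$, a signed measure on $S^2\times S^2$. Split $\delta$ into its four parity components under $\hat a\mapsto-\hat a$ and $\hat b\mapsto-\hat b$. The hemisphere-product values should annihilate the odd--odd component, by applying Corollary~5.1 of \cite{Rubin-2014a} in each variable separately with the other variable frozen and tested against functions. The marginal identities and the vanishing conditional means should kill the mixed components. The even--even component is what I expect to be genuinely hardest. I would try to control it by the hemisphere evaluations combined with the evenness of each marginal, and failing that, show it is invisible to post-processing; this is where operational equivalence, rather than literal equality of measures, may have to absorb the discrepancy. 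Once $\delta$ is disposed of, $\Pi_{\A\B}$ is operationally equivalent to $\Pi_{\mc{Z}_\A}\otimes\Pi_{\mc{Z}_\B}$. By Theorem~\ref{Thm:zonoid-unique}, this product is in turn equivalent to $\Pi_\A\otimes\Pi_\B$.
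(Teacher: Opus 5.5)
The gap is in the identification step, which you leave unfinished. Your localization and response-factorization steps are essentially the paper's. There the single choice $\sigma=\mathbb{I}/2$ (plain partial trace) suffices for both: $\tr_{\B}\rho$ pure already forces $\rho=\op{a}{a}\otimes\tau$, and the coarse-grained responses are then forced to be hemisphere indicators $\omega$-a.e.

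The identification step has two problems. First, your claim that the marginal identities and vanishing conditional means kill the mixed parity components is false. Take $\mc{Z}_{\A}=\mc{Z}_{\B}=B^3/2$, so both generating measures equal $\mu$. Set $d\omega=(1+\epsilon Y_3(\hat a)Y_2(\hat b))\,d\mu(\hat a)\,d\mu(\hat b)$ with $Y_k$ a real spherical harmonic of degree $k$. For small $\epsilon$ this is a probability measure. By orthogonality of harmonics of different degrees, it satisfies ``$\hat a$-marginal of $(1+\hat b\cdot\vec t)\,d\omega$ equals $\mu$'' and its mirror image for every $\vec t$. Yet $\delta$ is a nonzero measure that is odd in $\hat a$ and even in $\hat b$. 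Second, you give no argument for the even--even component. The hope that it is ``invisible to post-processing'' is not a strategy, and the correct conclusion is in fact literal equality $\omega=\nu_{\mc{Z}_{\A}}\times\nu_{\mc{Z}_{\B}}$.

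The missing idea is that you already hold enough data but only use its $\mathbb{I}\otimes\mathbb{I}$ coefficient. Both $\omega$ and $\nu_{\mc{Z}_{\A}}\times\nu_{\mc{Z}_{\B}}$ reproduce $M_{\vec m(\hat n_1)}\otimes M_{\vec k(\hat n_2)}$ with response $\chi_{H_{\hat n_1}}(\hat a)\chi_{H_{\hat n_2}}(\hat b)$. Hence every Pauli coefficient of $\int_{H_{\hat n_1}\times H_{\hat n_2}}(\mathbb{I}+\hat a\cdot\vec\sigma)\otimes(\mathbb{I}+\hat b\cdot\vec\sigma)\,d\delta$ vanishes, i.e.\ $\int_{H_{\hat n_1}\times H_{\hat n_2}}g\,d\delta=0$ for all $g\in\{1,\hat a_i,\hat b_j,\hat a_i\hat b_j\}$.

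Multiplying by a coordinate flips parity. Writing $\delta_{pq}$ with first index the parity in $\hat a$, the measures $\delta_{oo}$, $\hat a_i\,\delta_{eo}$, $\hat b_j\,\delta_{oe}$ and $\hat a_i\hat b_j\,\delta_{ee}$ are all odd--odd. Sign flips $\hat n_k\mapsto-\hat n_k$ isolate their hemisphere-product values, which therefore vanish. Your two-variable injectivity argument then kills them, and $\sum_i\hat a_i^2=\sum_j\hat b_j^2=1$ gives every $\delta_{pq}=0$.

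The paper does the same thing without any parity splitting. Fix $\hat n_2$ and set $\tau_\lambda(X)=\int_{X\times H_{\hat n_2}}b_\lambda\,d\delta$ with $b_0=1$ and $(b_1,b_2,b_3)=\hat b$. The identity coefficient $\tau_\lambda(H_{\hat n_1})=0$ forces $\tau_\lambda$ to be even. The Bloch coefficient $\int_{H_{\hat n_1}}\hat a\,d\tau_\lambda=0$ then concerns an \emph{odd} vector measure, so $\hat a\,d\tau_\lambda=0$, i.e.\ $\tau_\lambda=0$. Repeating in the second factor with $X$ frozen yields $\delta(X\times Y)=0$.
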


\begin{proof}
Consider the zonoid $\mc{K}_{\Pi}=\text{conv}(\Pi(\mc{F}))$ of any parent POVM $\Pi_{\A\B}$ that can simulate the product assemblage $\mathfrak{M}_{\A}\otimes \mathfrak{M}_{\B}$.
By Lemma \ref{Lem:POVM-zonoid-generator},  there exists a Borel probability measure $\omega_{\A\B}$ on $\text{D}_{\A\B}:=\text{D}(\mathbb{C}^2\otimes\mathbb{C}^2)$ such that the POVM $X\mapsto 4\int_X\rho_{\A\B}d\omega_{\rho_{\A\B}}$ is operationally equivalent to $\Pi$.
Let $\omega_{\A}$ and $\omega_{\B}$ denote the pushforward measures of $\omega_{\A\B}$ by the partial trace maps $\tr_{\B}$ and $\tr_{\A}$, respectively.
Since $\{M_{\vec{m}}\}_{\vec{m}\in\mc{Z}_{\A}}\subset\tr_\B[\mc{K}_{\Pi}]$, by Lemma \ref{Lem:Support-S2} it follows that $\omega_{\A}$ has full support on $\text{D}_{\A}^1:=\text{D}^1(\mathbb{C}^2)$, the set of pure qubit density matrices.
By definition of the pushforward measure, we have
\begin{align}
    1&=\omega_{\A}(\text{D}^1_{\A})=\omega_{\A\B}(\tr_{\B}^{-1}(\text{D}^1_{\A}))=\omega_{\A\B}(\{\rho_{\A\B}\;|\;\tr_{\B}(\rho_{\A\B})\in \text{D}^1_{\A}\})=\omega_{\A\B}(\text{D}_{\A}^1\otimes \text{D}_{\B}), 
\end{align}
where the last equality follows from the fact that $\tr_{B}(\rho_{\A\B})$ is pure iff $\rho_{\A\B}=\psi_{\A}\otimes\rho_{\B}$ for some pure state $\psi=\op{\psi}{\psi}$.
A similar argument can be made for $\omega_{\B}$, and so we conclude that $\omega_{\A\B}$ is supported on the set of product pure states, i.e. $\omega_{\A\B}(\tD^{1}_{\A}\otimes \tD^1_{\B})=1$.
Letting $\nu_{\A\B}$ denote the pushforward of $\omega_{\A\B}$ under the Bloch vector mapping $\rho_{\A\B}\mapsto (Bl(\rho_\A), Bl(\rho_{\B}))$, we have that $\Pi$ is operationally equivalent to the POVM
\begin{equation}
    X\mapsto \int_X (\mathbb{I}+\hat{s}\cdot\vec{\sigma})\otimes(\mathbb{I}+\hat{t}\cdot\vec{\sigma})d\nu_{\A\B}(\hat{s},\hat{t})\quad\text{for}\quad X\in\mf{B}(S^2\times S^2).
\end{equation}
To prove the theorem, we will show that $\nu_{\A\B}$ is the product measure
\begin{align}
    \nu_{\A\B}=\nu_{\A}\times\nu_{\B},
\end{align}
where $\nu_{\A}$ is the unique generating measure for $\mc{Z}_{\A}$ whose existence is established in Theorem \ref{Thm:zonoid-unique}, and likewise $\nu_{\B}$.

The fact that $\Pi$ can simulate the four-outcome POVM $\{M_{\pm\vec{m}}\otimes M_{\pm\vec{m}'}\}$ means we can write 
 \begin{align}
 \label{Eq:Simulate-EQ}
    M_{\pm\vec{m}}\otimes M_{\pm\vec{m}'}&=\int_{S^2\times S^2} q(\pm\vec{m},\pm\vec{m}'|\hat{s},\hat{t})(\mathbb{I}+\hat{s}\cdot\vec{\sigma})\otimes (\mathbb{I}+\hat{t}\cdot\vec{\sigma})d\nu(\hat{s},\hat{t})
\end{align}  
for some response function $q(\pm\vec{m},\pm\vec{m}'|\hat{s},\hat{t})\geq 0$ satisfying
\begin{align}
    1=q(\vec{m},\vec{m}'|\hat{s},\hat{t})+q(\vec{m},-\vec{m}'|\hat{s},\hat{t})+q(-\vec{m},\vec{m}'|\hat{s},\hat{t})+q(-\vec{m},-\vec{m}'|\hat{s},\hat{t}).
\end{align}
We will first show that $q(\pm\vec{m},\pm\vec{m}'|\hat{s},\hat{t})$ is essentially equivalent to the product of indicator functions almost everywhere.
Recall from Lemma \ref{Lem:measure-0-equator} that when $\nu_{\A}(\hat{G}_{\hat{n}})=\nu_{\B}(\hat{G}_{\hat{n}'})=0$, there exists unique vectors $\vec{m}(\hat{n})\in\mc{Z}_{\A}$ and $\vec{m}'(\hat{n}')\in\mc{Z}_{\B}$ such that $h_{\mc{Z}_{\A}}(\hat{n})=\hat{n}\cdot\vec{m}(\hat{n})$ and $h_{\mc{Z}_{\B}}(\hat{n}')=\hat{n}'\cdot\vec{m}'(\hat{n}')$.
These vectors also satisfy $\vec{m}(-\hat{n})=-\vec{m}(\hat{n})$ and $\vec{m}'(-\hat{n}')=-\vec{m}'(\hat{n}')$.

\begin{applemma}
\label{Lem:response-factorize}
    If $\hat{n},\hat{n}'\in S^2$ satisfy $\nu_{\A}(G_{\hat{n}})=\nu_{\B}(G_{\hat{n}'})=0$, then
    $q(\pm\vec{m}(\hat{n}),\pm\vec{m}'(\hat{n}')|\hat{s},\hat{t})=\chi_{H_{\pm\hat{n}}}(\hat{s})\chi_{H_{\pm\hat{n}'}}(\hat{t})$ $\nu$-almost everywhere.
\end{applemma}
\begin{proof}
By the Disintegration Theorem (Chapter 10 in \cite{Bogachev-2007a}), there exists a family of conditional measures $\{\nu_{\hat{s}}\}$ on $S^2$ such that we can express Eq. \eqref{Eq:Simulate-EQ} as
\begin{subequations}
\begin{align}
     M_{\vec{m}(\hat{n})}\otimes M_{\vec{m}'(\hat{n})}&=\int_{S^2} (\mathbb{I}+\hat{s}\cdot\vec{\sigma})d\nu(\hat{s},S^2) \otimes \int_{S^2}q(\vec{m},\vec{m}'|\hat{s},\hat{t})(\mathbb{I}+\hat{t}\cdot\vec{\sigma})d\nu_{\hat{s}}(\hat{t})\\\
     M_{\vec{m}(\hat{n})}\otimes M_{-\vec{m}'(\hat{n})}&=\int_{S^2} (\mathbb{I}+\hat{s}\cdot\vec{\sigma})d\nu(\hat{s},S^2) \otimes \int_{S^2}q(\vec{m},-\vec{m}'|\hat{s},\hat{t})(\mathbb{I}+\hat{t}\cdot\vec{\sigma})d\nu_{\hat{s}}(\hat{t})
\end{align}
\end{subequations}
where $\nu(\cdot,S^2)$ denotes the pushforward measure of $\nu$ under the projection $S^2\times S^2\to S^2$ of the first coordinate.
Summing these gives
\begin{align}
\label{Eq:Simulate-EQ-disintegrate}
&M_{\vec m(\hat{n})}\otimes\mathbb{I}=\int_{S^2}(\mathbb{I}+\hat{s}\cdot\vec{\sigma})\otimes\left[\int_{S^2}\left(q(\vec m,\vec m'|\hat{s},\hat{t})+q(\vec m,-\vec m'|\hat{s},\hat{t})\right)(\mathbb{I}+\hat{t}\cdot\vec{\sigma})d\nu_{\hat{s}}(\hat{t})\right]d\nu(\hat{s},S^2),
\end{align}
which after tracing over the second system yields
\begin{align}
&M_{\vec m(\hat{n})}=\int_{S^2}(\mathbb{I}+\hat{s}\cdot\vec{\sigma})f_{\vec m}(\hat{s})d\nu(\hat{s},S^2),\qquad f_{\vec m}(\hat{s}):=\int_{S^2}\left(q(\vec m,\vec m'|\hat{s},\hat{t})+q(\vec m,-\vec m'|\hat{s},\hat{t})\right)d\nu_{\hat{s}}(\hat{t}).
\end{align}
The response function $f_{\vec{m}}$ lies in the interval $[0,1]$ since $0\leq q(\vec m,\vec m'|\hat{s},\hat{t})+q(\vec m,-\vec m'|\hat{s},\hat{t})\le 1$.
Because this holds for all $\vec{m}\in\mc{Z}_{\A}$, Theorem \ref{Thm:zonoid-unique} requires that $\nu(\cdot,S^2)=\nu_{\A}$. 
Moreover, Lemma~\ref{Lem:measure-0-equator} gives $f_{\vec{m}}(\hat{n})=\chi_{H_{\hat{n}}}(\hat{s})$ $\nu_{\A}$-almost everywhere. 
This further implies that \[q(\vec m,\vec m'|\hat{s},\hat{t})+q(\vec m,-\vec m'|\hat{s},\hat{t})
\underset{a.e}{=}\chi_{H_{\hat n}}(\hat{s}),\]  
where $\underset{a.e}{=}$ denotes $\nu_{\A}$-almost everywhere. 
Similar arguments show that 
\begin{align}
q(-\vec m,\vec m'|\hat{s},\hat{t})+q(-\vec m,-\vec m'|\hat{s},\hat{t})&\underset{a.e}{=}\chi_{H_{-\hat n}}(\hat{s})\notag\\
q(\vec m,\vec m'|\hat{s},\hat{t})+q(-\vec m,\vec m'|\hat{s},\hat{t})&\underset{a.e}{=}\chi_{H_{\hat n'}}(\hat{t})\notag\\
q(\vec m,-\vec m'|\hat{s},\hat{t})+q(\vec m,-\vec m'|\hat{s},\hat{t})&\underset{a.e}{=}\chi_{H_{-\hat n'}}(\hat{t}).
\end{align}
By considering the four sectors $H_{\pm\hat{n}}\times H_{\pm\hat{n}'}$, these equalities collectively imply that $q(\hat{s},\hat{t}|\vec{m},\vec{m}')$ equals $\chi_{H_{\hat{n}}}(\hat{s})\chi_{H_{\hat{n}'}}(\hat{t})$ $\nu$-almost everywhere.

\end{proof}

For any $\hat{n},\hat{n}'\in S^2$ with $\nu_{\A}(G_{\hat{n}})=\nu_{\B}(G_{\hat{n}'})=0$, Lemma \ref{Lem:response-factorize} allows us to write 
\begin{subequations}
\begin{align}
    M_{\vec{m}(\hat{n})}\otimes M_{\vec{m}'(\hat{n}')}&=\int_{H_{\hat{n}}\times H_{\hat{n}'}} (\mathbb{I}+\hat{s}\cdot\vec{\sigma}) \otimes (\mathbb{I}+\hat{t}\cdot\vec{\sigma})d\nu(\hat{s},\hat{t})\notag\\
    &=\int_{H_{\hat{n}}\times H_{\hat{n}'}}(\mathbb{I}+\hat{s}\cdot\vec{\sigma}) \otimes (\mathbb{I}+\hat{t}\cdot\vec{\sigma})d(\nu_{\A}\times\nu_{\B})(\hat{s},\hat{t}).
\end{align} 
\end{subequations}
Taking the difference gives
\begin{align}
    0&=\int_{H_{\hat{n}}\times H_{\hat{n}'}} (\mathbb{I}+\hat{s}\cdot\vec{\sigma}) \otimes (\mathbb{I}+\hat{t}\cdot\vec{\sigma})d\Delta(\hat{s},\hat{t}),
\end{align}
where $\Delta=\nu_{\A\B}-\nu_\A\times\nu_\B$ is a signed measure.
For all $\lambda\in\{0,1,2,3\}$ such that $\tr[M_{\vec{m}'}\sigma_\lambda]\not=0$ contract the second system by $\frac{1}{2}\sigma_\lambda$ to obtain
\begin{align}
    0=\int_{H_{\hat{n}}}(\mathbb{I}+\hat{s}\cdot\vec{\sigma})d\tau_{\vec{m}',\lambda}(\hat{s}),\qquad\text{where}\qquad d\tau_{\vec{m}',\lambda}(\hat{s}):=\int_{H_{\hat{n}'}} t_\lambda d\Delta(\hat{s},\hat{t})\quad\text{for}\quad t_0=1,\;\;(t_1,t_2,t_3)=\hat{t}.
\end{align} 
The integrand on the LHS says that (i) $0=\tau_{\vec{m}',\lambda}(H_{\hat{n}})$ and (ii) $0=\int_{H_{\hat{n}}}\hat{s}d\tau_{\vec{m}',\lambda}(\hat{s})$.
Since (i) holds for $\mu$-almost all $\hat{n}$, following the proof of Theorem \ref{Thm:zonoid-unique}, injectivity of the hemispherical transform implies that $\tau_{\vec{m}',\lambda}$ is an even measure.  
Turning to (ii), we have that the vector-valued measure $X\mapsto \int_X \hat{s}d\tau_{\vec{m}',\lambda}(\hat{s})$ vanishes on $\mu$-almost all hemispheres $X=H_{\hat{n}}$.
However, these is an odd measure since $\tau_{\vec{m}',\lambda}$ is even, and it therefore must each be identically zero by injectivity of the hemispherical transform for odd signed measures.
Then for all $X\in\mf{B}(S^2)$, we have
\[0=\int_X \hat{s}d\tau_{\vec{m}',\lambda}(\hat{s})=\int_X \hat{s}g(\hat{s})d|\tau_{\vec{m}',\lambda}|(\hat{s})\]
where we have used the Radon-Nikodym theorem to write $d\tau_{\vec{m}',\lambda}=gd|\tau_{\vec{m}',\lambda}|$ with $|g|=1$ and $|\tau_{\vec{m}',\lambda}|$ being the total variation of $\tau_{\vec{m}',\lambda}$ \cite{Rudin-1991a}.
Since $|\tau_{\vec{m}',\lambda}|$ is a non-negative measure and the previous expression holds for all $X$, it follows that $\hat{s}g(\hat{s})$ vanishes $|\tau_{\vec{m}',\lambda}|$-almost everywhere. 
But $\Vert\hat{s} g(\hat{s})\Vert=1$, which means the latter is possible only if $|\tau_{\vec{m}',\lambda}|$ (and hence also $\tau_{\vec{m},\lambda}$) is the identically zero measure.

For any $X\in\mf{B}(S^2)$ we can then compute
\begin{align}
\label{Eq:Product-partial}
    \int_{X\times H_{\hat{n}'}}(\mathbb{I}+\hat{s}\cdot\vec{\sigma})\otimes (\mathbb{I}+\hat{t}\cdot\vec{\sigma})d\Delta(\hat{s},\hat{t})&=\int_{X\times H_{\hat{n}'}}(\mathbb{I}+\hat{s}\cdot\vec{\sigma})\otimes\sum_\lambda t_\lambda \sigma_\lambda d\Delta(\hat{s},\hat{t})\notag\\
    &=\sum_{\lambda=0}^3\int_{X\times H_{\hat{n}'}}(\mathbb{I}+\hat{s}\cdot\vec{\sigma})t_\lambda d\Delta(\hat{s},\hat{t})\otimes\sigma_\lambda\notag\\
    &=\sum_{\lambda=0}^3\int_X(\mathbb{I}+\hat{s}\cdot\vec{\sigma})d\tau_{\hat{m}',\lambda}(\hat{s})\otimes\sigma_\lambda=0.
\end{align}
We now repeat an analogous argument by contracting the first system in the previous equation by $\frac{1}{2}\sigma_\lambda$.
This yields the expression
\begin{align}
    0=\int_{H_{\hat{n}'}}(\mathbb{I}+\hat{t}\cdot\vec{\sigma})d\omega_{X,\lambda}(\hat{t}),\qquad\text{where}\qquad d\omega_{X,\lambda}(\hat{t}):=\int_{X} s_\lambda d\Delta(\hat{s},\hat{t}).
\end{align} 
Since this holds for $\mu$-almost all vectors $\hat{n}'$, we have that $\omega_{X,\lambda}$ is the identically zero measure.
Therefore, for any $X,Y\in\mf{B}(S^2)$, we have
\begin{align}
    \Delta(X\times Y)=\int_{X\times Y}d\Delta(\hat{s},\hat{t})&=\int_Yd\omega_{X,0}(\hat{t})=0.
\end{align}
Hence, $\nu_{\A\B}=\nu_\A\times\nu_\B$ are equal measures on all Borel measurable product sets, and therefore also on all measurable sets by completion.
This completes the proof.

\end{proof}

\end{document}